\ifdefined\pdfminorversion
\fi
\documentclass[11pt,letterpaper]{article}

\usepackage{setspace}
\def\spacingset#1{\renewcommand{\baselinestretch}%
{#1}\small\normalsize} \spacingset{1}
\spacingset{1.8}

\usepackage{lineno}
\linenumbers
\modulolinenumbers[5]

\newcommand{\tightdisplayskip}{%
  \setlength{\abovedisplayskip}{6pt plus 2pt minus 2pt}%
  \setlength{\belowdisplayskip}{6pt plus 2pt minus 2pt}%
}

\AtBeginEnvironment{equation}{\tightdisplayskip}
\AtBeginEnvironment{equation*}{\tightdisplayskip}

\newlength{\originaljot}
\AtBeginEnvironment{aligned}{%
  \setlength{\originaljot}{\jot}%
  \setlength{\jot}{2pt}%
}
\AtEndEnvironment{aligned}{%
  \setlength{\jot}{\originaljot}%
}

\usepackage{indentfirst,csquotes}
\usepackage{natbib}
\usepackage{color}
\usepackage{hyperref}

\usepackage{amsthm,amsmath,amssymb}
\usepackage{xcolor,paralist,hyperref,titlesec,fancyhdr,etoolbox}

\newtheorem{theorem}{Theorem}[]

\newtheorem{remark}{Remark}
\newtheorem{lemma}{Lemma}[]
\newtheorem{proposition}[theorem]{Proposition}
\newtheorem{corollary}[theorem]{Corollary}
\newtheorem{assumption}{Assumption}[]

\usepackage{authblk} 
\newcommand{\keywords}[1]{\textbf{Keywords:} #1}

\usepackage[letterpaper,margin=1in,includefoot,footskip=24pt]{geometry}
\nolinenumbers
\usepackage{mathtools}
\usepackage{enumitem}

\usepackage{booktabs}
\usepackage{graphicx}
\hypersetup{
    colorlinks=true,  %
    linkcolor=blue,   %
    citecolor=blue,    %
    urlcolor=magenta, %
    pdfborder={0 0 0} %
}

\newcommand{\pr}{\mathbb{P}}
\newcommand{\ca}{{\rm \scriptscriptstyle cal}}
\newcommand{\tr}{{\rm \scriptscriptstyle tr}}

\newcommand{\coin}{{\rm \scriptscriptstyle CI}}
\newcommand{\wcoin}{{\rm \scriptscriptstyle WCI}}
\newcommand{\rlcoin}{{\rm \scriptscriptstyle RLCI}}
\newcommand{\ad}{{\rm \scriptscriptstyle ad}}

\newcommand{\bon}{\mathrm{\scriptscriptstyle Bo}}
\newcommand{\ora}{{\rm \scriptscriptstyle ora}}

\usepackage{xspace}
\newcommand{\coinm}{\texttt{COINS}\xspace}
\newcommand{\vocoinm}{\texttt{Vopt-COINS}\xspace}
\newcommand{\rlcoinm}{\texttt{RL-COINS}\xspace}

\newcommand{\E}{\mathbb{E}}

\newcommand{\rmd}{{\rm d}}

\newcommand{\cA}{\mathcal{A}}
\newcommand{\cB}{\mathcal{B}}
\newcommand{\cC}{\mathcal{C}}
\newcommand{\cD}{\mathcal{D}}
\newcommand{\cE}{\mathcal{E}}
\newcommand{\cF}{\mathcal{F}}

\newcommand{\cI}{\mathcal{I}}

\newcommand{\cN}{\mathcal{N}}
\newcommand{\cO}{\mathcal{O}}

\newcommand{\cR}{\mathcal{R}}
\newcommand{\cS}{\mathcal{S}}
\newcommand{\cT}{\mathcal{T}}
\newcommand{\cU}{\mathcal{U}}

\newcommand{\cX}{\mathcal{X}}
\newcommand{\cY}{\mathcal{Y}}

\newcommand{\bc}{\mathbf{c}}

\newcommand{\bq}{\mathbf{q}}

\newcommand{\bs}{\mathbf{s}}
\newcommand{\bS}{\mathbf{S}}

\newcommand{\bu}{\mathbf{u}}

\newcommand{\balp}{\boldsymbol{\alpha}}
\newcommand{\btau}{\boldsymbol{\tau}}
\newcommand{\bvrho}{\boldsymbol{\varrho}}

\newcommand{\bbC}{\mathbb{C}}
\newcommand{\bbI}{\mathbb{I}}

\newcommand{\bbR}{\mathbb{R}}

\newcommand{\hA}{\widehat{A}}

\newcommand{\hcC}{\widehat{\cC}}

\newcommand{\hcU}{\widehat{\cU}}

\newcommand{\hq}{\widehat{q}}
\newcommand{\hbq}{\widehat{\bq}}
\newcommand{\hQ}{\widehat{Q}}

\newcommand{\htau}{\widehat{\tau}}

\newcommand{\hbvrho}{\widehat{\bvrho}}
\newcommand{\hbtau}{\widehat{\btau}}

\newcommand{\tcC}{\widetilde{\cC}}

\newcommand{\tX}{\widetilde{X}}

\newcommand{\ocD}{\overline{\cD}}

\newcommand{\argmin}[1]{\underset{#1}{\mathrm{arg\, min}} \,}

\newcommand{\ole}{\mathord{\le}}

\newcommand{\Lbag}{\mathopen{[\mkern-3mu[}}
\newcommand{\Rbag}{\mathclose{]\mkern-3mu]}}

\newcounter{algorithm}

\newcommand{\bestcell}[1]{\underline{#1}}

\begin{document}
\title{COINS: Any-Stage-Valid and Utility-Oriented Sequential Conformal Prediction}%

\author[1]{Wangcheng Li}
\author[2]{Nan Qiao}
\author[1]{Xu Guo}
\author[3]{Wenguang Sun}
\affil[1]{School of Statistics, Beijing Normal University}
\affil[2]{School of Statistics, Renmin University of China} 
\affil[3]{Center for Data Science, Zhejiang University}

\date{}

\spacingset{1}
\maketitle

\begin{abstract}
Many predictive workflows update uncertainty as information is acquired and
use intermediate reports to determine whether to stop or deploy further
resources. We study conformal inference in this setting, treating the
resulting prediction sequence as the inferential object. We require
\emph{any-stage validity}, which protects against miscoverage at any inspected
stage, and use \emph{process-level utility} to evaluate how efficiently the sequence supports downstream action.
We propose a universal structural theory for constructing any-stage valid prediction sequences.
Guided by it, we develop \coinm, which coordinates calibration across stages by investing a
common finite-sample rejection-count budget only among surviving augmented
observations. Under exchangeability, \coinm achieves finite-sample any-stage
validity and produces prediction sets no larger than their matched Bonferroni
counterparts at every stage. We further develop \vocoinm, which learns the
stagewise allocation for a specified process-level utility, together with
branchwise and localized extensions for heterogeneous acquisition pathways
and test units. Simulations and a dermatological-diagnosis application confirm
any-stage validity and demonstrate gains over Bonferroni and fixed allocations. The
proposed methods also perform favorably in ordered score aggregation, viewed
as a terminal-utility special case.
\end{abstract}

\keywords{conformal prediction, prediction sequence, any-stage validity,
process-level utility, sequential feature acquisition, multiple-score aggregation.}

\spacingset{1}

\section{Introduction}

Across clinical, scientific, and computational workflows, predictive inference is often intertwined with sequential information acquisition and resource allocation: an initial assessment is based on readily available evidence, and further resources are deployed only when that assessment is not yet sufficiently informative for action \citep{olsson2022estimating,von2025evaluation}. 
For example, clinical diagnosis may progress from medical history and routine measurements to laboratory tests and, if substantial ambiguity remains, to an invasive procedure. 
In sequential screening and scientific measurement, a low-cost screen may be followed by a targeted confirmatory assay, with equivocal results prompting a more precise but resource-intensive follow-up measurement. 
In cascaded computational prediction, a lightweight model handles routine cases, a more capable model evaluates uncertain cases, and human review is reserved for the remaining difficult cases.

Across these settings, staged acquisition can resolve easier cases sooner and concentrate costly resources on unresolved cases, thereby reducing delay and cost.
We study \emph{sequential predictive inference}, in which predictive uncertainty is updated as information accumulates and an intermediate report is issued at each stage, informing whether to stop and, if not, what information to acquire next. 
The resulting inferential object is a \emph{prediction sequence}: an ordered collection of predictive reports for the same test unit that may be inspected and acted upon throughout the information-acquisition process. 
We focus on the conformal setting \citep{vovk2005algorithmic,lei2018distribution}, where the stagewise reports are prediction sets.

\subsection{Problem formulation}\label{ssec:problem-formulation}

Let $X \in \cX \subseteq \bbR^p$ denote a $p$-dimensional feature vector and let $Y\in\cY$ denote the associated outcome, which may be a class label or a numerical response. 
For a new test unit, let $\cI_t(X_{n+1})$ denote the information available by stage $t\in[T]$, where $\cI_1(X_{n+1})\subseteq \cdots \subseteq\cI_T(X_{n+1})$. 
At stage $t$, the procedure reports a prediction set $\hcC_t(X_{n+1})$, which we require to depend on the test unit only through $\cI_t(X_{n+1})$. 
We write $\hcC_{1:T}:=(\hcC_1,\ldots,\hcC_T)$ for the resulting prediction sequence. We restrict attention to \emph{nested} prediction sequences satisfying
\begin{equation*}
    \hcC_{t+1}(X_{n+1})\subseteq\hcC_t(X_{n+1}),\qquad t\in[T-1].
\end{equation*}
Thus, once a candidate outcome is excluded, it does not re-enter at a later stage as additional information accumulates.

We evaluate a prediction sequence according to two criteria: its validity across stages and its utility for downstream decision-making. The possibility of optional stopping makes a fixed-stage validity guarantee insufficient \citep{howard2021time}.
We therefore quantify the relevant inferential risk through the \emph{any-stage miscoverage risk}:
\begin{equation}\label{eq:intro-amr}
    \operatorname{AMR}(\hcC_{1:T})
    :=
    \pr\!\left\{\exists\,t\in[T]:Y_{n+1}\notin\hcC_t(X_{n+1})\right\}.
\end{equation}
A prediction sequence is \emph{any-stage valid} at level $\alpha\in(0,1)$ if $\operatorname{AMR}(\hcC_{1:T})\leq\alpha$. This simultaneous guarantee ensures coverage at any data-dependent stopping stage determined by the reports observed thus far.
The second criterion is the \emph{process-level utility} $u(\hcC_{1:T})$, which may reward early resolution and account for the costs of additional measurements, computation, or expert review. Our statistical objective is therefore
\begin{equation*}
    \max_{\hcC_{1:T}}
    \, \E \, \Big\{u(\hcC_{1:T})\Big\}
    \qquad \mbox{subject to} \qquad
    \operatorname{AMR}(\hcC_{1:T}) \le \alpha.
\end{equation*}

To construct the prediction sequence, we use stage-specific nonconformity score functions $s_1,\ldots,s_T$, where larger values of $s_t(x,y)$ indicate greater incompatibility of candidate outcome $y$ with the information available at stage $t$. These score functions may be specified a priori or fitted using an independent training sample $\cD^\tr$ and are treated as fixed during calibration. Let $\cD^\ca=\{Z_i=(X_i,Y_i)\}_{i=1}^n$ denote a labeled calibration sample, and write $Z_{n+1}=(X_{n+1},Y_{n+1})$ for the test observation. 
We assume that $Z_1,\ldots,Z_n,Z_{n+1}$ are exchangeable conditional on these score functions. 
At stage $t$, $s_t$ uses only $\cI_t(X)$ and is applied in the same manner to the calibration observations and the hypothesized test point. The resulting scores provide the conformal comparisons used to construct $\hcC_t(X_{n+1})$.

Sequential feature acquisition, studied in Sections~\ref{ssec:sfa-simulation} and~\ref{ssec:sfa-realdata}, provides our primary example. In dermatological differential diagnosis \citep{guvenir1998learning}, clinical features are examined first.
These features alone suffice to diagnose some patients, whereas others require a biopsy for a definitive diagnosis.
At stage $t$, the same stage-specific score function is applied to the calibration patients and the test patient using only the first $t$ feature blocks. 
In this setting, AMR is the probability that at least one reported set fails to contain the patient's true disease label. 
Process-level utility measures how early the prediction set becomes sufficiently informative for diagnosis. Specifically, we define the stopping stage as $\tau=\min\{t\in[T]:|\hcC_t|\leq\lambda_0\}$, with $\tau=T+1$ if no such stage exists, and use $u(\hcC_{1:T})=(T+1-\tau)/T$. 
The procedure stops and acts at stage $\tau$ and otherwise requests the next feature block, so higher utility corresponds to earlier resolution and fewer acquired features.

Ordered score aggregation, studied in Appendix~\ref{ssec:numerical-aggregation}, is a special case where all predictive information is available in advance.
Several complementary score functions are applied in a prespecified order, and only
the final prediction set is evaluated. 
Process-level utility then reduces to
a terminal utility such as $u(\hcC_{1:T})=-|\hcC_T|$, while nestedness gives
$\operatorname{AMR}(\hcC_{1:T})=
\pr\{Y_{n+1}\notin\hcC_T(X_{n+1})\}$. 
The design problem consequently reduces to constructing an efficient terminal conformal set by aggregating the ordered scores.

\subsection{Challenges and methodological overview}\label{ssec:challenges-overview}

Constructing a prediction sequence raises two coupled methodological challenges: ensuring validity across repeatedly inspected and dependent reports, and designing the stagewise refinement of the prediction sets to support timely stopping and downstream action.

Any-stage validity defines the feasible class of prediction sequences. Reusing a common calibration sample across stages creates a temporal multiplicity problem: constructing level-$\alpha$ prediction sets separately at each stage and taking their intersection generally fails to control the AMR in~\eqref{eq:intro-amr}. 
Bonferroni correction controls this multiplicity but can allocate the error budget inefficiently because it does not exploit the strong dependence across stages. 
Conversely, allowing later predictions to adapt freely to earlier results can violate the symmetry required for conformal calibration. 
The methodological gap is therefore how to obtain finite-sample AMR control without choosing between a conservative stagewise correction and potentially invalid unrestricted adaptation.

Process-level utility guides selection within the valid class. Conventional criteria such as expected set size at a fixed or terminal stage measure the informativeness of a single report but ignore when an actionable prediction is reached and what resources are required to reach it. A procedure may attain a smaller terminal set only by using costly late-stage information, whereas another may stop earlier with a sufficiently informative set and deliver greater downstream utility. The methodological gap is therefore to design the entire valid prediction sequence according to a task-specific utility that jointly accounts for informativeness and resource use, rather than optimizing only the terminal report or imposing a stopping rule after the sequence has been constructed.

We develop a general structural theory of finite-sample any-stage validity.
Based on it, we propose \coinm (\textbf{CO}unt \textbf{IN}vestment across the \textbf{S}equence) to construct a family of valid prediction sequences by allocating a finite-sample rejection-count budget across stages (Section~\ref{ssec:coinprocedure}). 
For each candidate value $y \in \cY$, it appends the hypothesized test observation to the calibration sample and, at each stage, applies the allocated rejection count only among observations that have survived the preceding stages. 
Exclusions are permanent, so the same observation cannot consume the budget more than once. 
Under exchangeability, this survivor-based construction produces a prediction sequence with finite-sample any-stage validity, while its avoidance of redundant spending ensures that each set is no larger than its matched Bonferroni counterpart (Section~\ref{ssec:dom-bonferroni}).
We also provide an efficient implementation without testing all values in $\cY$ (Section~\ref{ssec:implementcoin}).

We further develop \vocoinm, which selects a stagewise allocation from the validity-preserving \coinm family to optimize a specified process-level utility (Section~\ref{sec:opt-via-erm}). 
By determining how much of the rejection-count budget is invested at each stage, the allocation governs when the prediction sets become sufficiently informative, when the procedure stops, and which resources are used. The allocation thus links validity and utility: \coinm provides a validity-preserving mechanism that avoids redundant budget spending, while process-level utility guides how the budget is distributed across stages.

\subsection{Contributions}

Our work makes several contributions. First, we formulate a framework for constructing and evaluating prediction sequences through any-stage validity and process-level utility. By taking the entire sequence, rather than a fixed prediction set, as the inferential object, the framework allows intermediate prediction sets to guide further information acquisition, stopping, and downstream action.

Second, we develop a structural theory of finite-sample any-stage validity. We identify monotone exclusion, a fixed total exclusion count, and survivor-symmetric adaptation as sufficient conditions for AMR control (Section~\ref{ssec:principles-seq-test}). 
We further establish a universality result showing that every nested, calibration-symmetric, validity-preserving prediction sequence admits a sequential exclusion representation satisfying these conditions (Section~\ref{ssec:universality-pp}).
The associated survivor-count identity also yields an $e$-process whose inversion recovers the same prediction sequence (Section~\ref{ssec:eprocess}).

Third, guided by this structural theory, we develop \coinm, a rejection-count investment procedure that allocates the total rejection count across stages only among surviving observations (Section~\ref{ssec:coinprocedure}).
The strategy achieves finite-sample AMR control, avoids redundant spending, and produces prediction sets no larger than their matched Bonferroni counterparts at every stage. 
Building on this validity-preserving family, we develop \vocoinm to optimize the stagewise allocation for a specified process-level utility (Section~\ref{ssec:vopt-coinm}) and establish a nonasymptotic bound on its utility gap relative to the population-optimal allocation within the \coinm family (Section~\ref{ssec:population-utility-loss}). 
Together, these developments establish a principled framework for utility-oriented sequential conformal prediction, coupling finite-sample calibration with task-specific downstream optimization.

Finally, we demonstrate the broader applicability of the framework through extensions and additional prediction tasks.
Branchwise and localized constructions accommodate heterogeneous acquisition pathways and test units (Section~\ref{sec:generality}). 
As an important special case, our framework recovers ordered score aggregation, in which all scores are available in advance and only the final prediction set is evaluated (Appendix~\ref{ssec:numerical-aggregation}).
Numerical studies support our theoretical guarantees, show gains from utility-oriented allocation, and demonstrate competitive performance in score aggregation tasks.

\subsection{Related work}

Confidence sequences and $e$-processes provide general tools for inference
under continuous monitoring and optional stopping
\citep{darling1967confidence,jennison1989interim,howard2021time,ramdas2023game,waudby2024time}.
Standard constructions begin with a nonnegative supermartingale and obtain sequential inference by thresholding or inversion.
This paper considers a distinct setting: sequential prediction for a
\emph{single} test observation whose information is progressively
revealed across stages.
For this setting, we derive an $e$-process from finite-sample count
identities induced by symmetric elimination among surviving observations
in an exchangeable augmented sample.

Multiple-score conformal methods select or aggregate scores to improve a terminal prediction set
\citep{yang2025selection,liang2024conformal,qin2024data,patel2025conformal,alami2026symmetric,wang2026localized}.
In particular, \texttt{COLA} allocates miscoverage levels across
score-specific prediction sets to minimize the size of their terminal
intersection \citep{xu2025aggregating}.
In contrast, our focus is a prediction sequence whose intermediate sets can guide stopping and further information acquisition.
Moreover, \coinm allocates finite-sample exclusion counts among surviving observations rather than miscoverage levels across separately calibrated sets, thereby avoiding repeated spending on already excluded observations.
The resulting framework provides an any-stage valid prediction sequence for decision-making and supports utility-oriented sequential design.

Concurrent and independent work develops Sequential Single-Batch (\texttt{SeqSB}) aggregation for ordered test statistics \citep{schrab2026aggregation}, 
which can be extended to sequential conformal prediction. 
\texttt{SeqSB} also uses stagewise spending and a permanent elimination mechanism. 
However, process-level utility optimization is left open in that work. 
Moreover, our structural theory specifies information restrictions that preserve validity under adaptation, thereby enabling branchwise and localized conformal extensions.
We also establish a universality result for the structural theory and derive an efficient algorithm based on a reusable calibration-only construction.

Interactive filtrations preserve validity by restricting the information
revealed during adaptive analysis in large-scale testing, knockoffs, and
conformal selection
\citep{lei2018adapt,ren2023knockoffs,jin2023selection,gui2025acs}.
Our filtration instead preserves survivor symmetry during repeated prediction for a single test unit.

Appendix~\ref{asec:literaturereview} provides further
discussion of these connections.

\section{Structural Theory of Any-Stage Validity}\label{sec:principled-properties}

This section translates the construction of a prediction sequence into
a sequential exclusion problem on an augmented sample. This formulation
connects conformal test inversion with a finite-sample counting argument
under exchangeability.

For each candidate outcome $y\in\cY$, define the augmented sample
\[
    \ocD(y)=\{Z_1^y,\ldots,Z_n^y,Z_{n+1}^y\},
\]
where $Z_i^y=Z_i$ for $i\in[n]$ and
$Z_{n+1}^y=(X_{n+1},y)$. For each stage $t\in[T]$ and augmented
observation $z\in\ocD(y)$, let $\delta_t^y(z)\in\{0,1\}$ denote its
cumulative exclusion indicator:
\[
    \delta_t^y(z)
    :=
    \mathbb{I}\bigl\{
        z \text{ has been excluded by stage $t$ in the decision process
        applied to } \ocD(y)
    \bigr\}.
\]
By convention, $\delta_0^y(z)\equiv0$.
We refer to the indexed family
$
    \boldsymbol{\delta}
    \coloneqq
    \bigl(\delta_t^y\bigr)_{t\in[T],\,y\in\cY}
$
as a \emph{sequential exclusion process}.
Candidate $y$ is rejected by stage $t$ if its hypothesized test
observation has been excluded, equivalently if
$\delta_t^y(Z_{n+1}^y)=1$.
The prediction sequence induced by $\boldsymbol{\delta}$ is obtained by
inverting its candidate-wise exclusion decisions:
\begin{equation}\label{eq:pred-set-from-seq-test}
    \hcC_t^{\boldsymbol{\delta}}(X_{n+1})
    =
    \left\{
        y\in\cY:
        \delta_t^y\bigl(Z_{n+1}^y\bigr)=0
    \right\},
    \qquad t\in[T].
\end{equation}
We write $\hcC_{1:T}^{\boldsymbol{\delta}}
\coloneqq(\hcC_t^{\boldsymbol{\delta}})_{t=1}^T$ and suppress the
superscript $\boldsymbol{\delta}$ when the exclusion process is clear.
At the true outcome $y=Y_{n+1}$, the any-stage miscoverage event is therefore
\[
    \left\{
        \exists\,t\in[T]:
        Y_{n+1}\notin\hcC_t(X_{n+1})
    \right\}
    =
    \left\{
        \exists\,t\in[T]:
        \delta_t^{Y_{n+1}}(Z_{n+1})=1
    \right\}.
\]
Thus, constructing an any-stage-valid prediction sequence reduces to
designing a sequential exclusion process that controls the probability
that the true test observation is excluded at any stage.

For any finite indexed dataset $\cD=(Z_1,\ldots,Z_m)$, let
$\Lbag\cD\Rbag:=\Lbag Z_1,\ldots,Z_m\Rbag$ denote its associated
\emph{multiset}; that is, the unordered collection of observations in
$\cD$, with repeated values retained. Our theoretical analysis in this section
relies on the following standard conditions for conformal inference.

\begin{assumption}[Conditional exchangeability and no ties]
    \label{cond:exchangeability}
    Let $Z_i=(X_i,Y_i)$ for $i\in[n+1]$, and let
    $\mathbb{S}\coloneqq\sigma(s_1,\ldots,s_T)$ denote the $\sigma$-field
    generated by the stage-specific score functions. Conditional on $\mathbb{S}$,
    the augmented sample $(Z_1,\ldots,Z_{n+1})$ is exchangeable; that is,
    for every permutation $\pi$ of $[n+1]$,
    \[
        (Z_1,\ldots,Z_{n+1})
        \overset{d}{=}
        (Z_{\pi(1)},\ldots,Z_{\pi(n+1)})
        \qquad\text{conditionally on }\mathbb{S}.
    \]
    Moreover, for every $t\in[T]$, the score values
    $s_t(Z_1),\ldots,s_t(Z_{n+1})$ are almost surely distinct.
\end{assumption}

When ties arise, they can be handled by introducing auxiliary randomization.

Next, we develop a structural theory of any-stage validity for conformal
prediction sequences. Section~\ref{ssec:principles-seq-test} introduces
sufficient conditions for finite-sample AMR control,
Section~\ref{ssec:universality-pp} establishes a sequential exclusion
representation and its qualified universality, and
Section~\ref{ssec:eprocess} uses the associated survivor-count
representation to construct stagewise $e$-values and an $e$-process
across stages.

\subsection{Structural conditions for any-stage validity}
\label{ssec:principles-seq-test}

We introduce three structural conditions, denoted by
\textnormal{(S1)}--\textnormal{(S3)}, that are sufficient for
finite-sample AMR control. Here, ``S'' stands for ``structural.''

\begin{enumerate}[label=(S\arabic*), ref=(S\arabic*)]
    \item\label{p1:nondecreasing} 
    (Monotone exclusion)
    The exclusion decisions are \emph{non-decreasing} across stages; that is,
    $\delta_{t-1}^y(z)\le\delta_t^y(z)$ for every $t>1$,
    $y\in\cY$, and $z\in\ocD(y)$.

    \item\label{p2:rejection-count} 
    (Fixed total exclusion count)
    For each $y \in \cY$, 
    the final exclusion indicators satisfy
    \begin{equation*}
        \sum_{i = 1}^{n+1}
        \delta_T^y(Z_i^y)
        =
        \lfloor (n+1) \alpha \rfloor
        =: 
        c_{\rm tot}.
    \end{equation*}

    \item\label{p3:per-inv} 
    (Survivor-symmetric adaptation)
    For each $y\in\cY$, initialize
    $\cF_0^y=\sigma\bigl(\Lbag\ocD(y)\Rbag,s_1\bigr)$ and
    $\cR_0^y=\varnothing$. For each $t\in[T]$, define the cumulative
    exclusion set by
    $
        \cR_t^y
        \coloneqq
        \{i\in[n+1]:\delta_t^y(Z_i^y)=1\},
    $
    and recursively update the filtration by
    \[
        \cF_t^y = \cF_{t-1}^y \vee
        \sigma\Big(
            \cR_t^y, 
            \Lbag
                Z_i^y:
                i \in \cR_t^y
            \Rbag,
            s_{t + 1}
        \Big),
        \qquad t\in[T],
    \]
    where $\vee$ denotes the smallest $\sigma$-field containing both
    operands, and the next-stage score function $s_{t+1}$ is omitted when
    $t=T$. For every $t\in[T]$ and $y\in\cY$,
    $\delta_t^y(z)$ is $\cF_{t-1}^y$-measurable for each
    $z\in\ocD(y)$, and the same value-based exclusion rule is used for all
    $y\in\cY$.
\end{enumerate}

Condition~\ref{p1:nondecreasing} makes exclusions irreversible across
stages and therefore ensures that the induced prediction sets are nested:
$
    \hcC_1(X_{n+1})
    \supseteq\cdots\supseteq
    \hcC_T(X_{n+1}).
$
Consequently,
\[
    \{\exists\,t\in[T]:Y_{n+1}\notin\hcC_t(X_{n+1})\}
    =
    \{Y_{n+1}\notin\hcC_T(X_{n+1})\}.
\]

Condition~\ref{p2:rejection-count} is the sequential analogue of
ordinary rank-based conformal calibration. With $n+1$ distinct
nonconformity scores, the level-$\alpha$ upper-tail rejection region
consists of the
$
    c_{\rm tot}=\lfloor(n+1)\alpha\rfloor
$
observations with the largest scores. Condition~\ref{p2:rejection-count} fixes this total number of exclusions by the prespecified horizon $T$, while allowing the exclusions to occur
at different stages in $[T]$. 

Condition~\ref{p3:per-inv} allows later exclusions to adapt to earlier
decisions while preserving symmetry among the surviving observations.
The filtration records the excluded indices and the multiset of excluded
observations, and hence determines the surviving indices and their multiset
without revealing their pairing. Together, these restrictions ensure that the exclusion decisions depend on the surviving observations through their values rather than their index
labels, and that the same rule is applied across candidate outcomes. This is
a sequential analogue of the standard symmetry condition in
conformal prediction \citep{vovk2005algorithmic,angelopoulos2024theoretical}.

The following theorem establishes that
Conditions~\ref{p1:nondecreasing}--\ref{p3:per-inv} are sufficient for
finite-sample any-stage validity.

\begin{theorem}[Finite-sample any-stage validity]\label{the:basicresult}
    Under Assumption~\ref{cond:exchangeability}, suppose a sequential
    exclusion process $\boldsymbol{\delta}$ satisfies
    Conditions~\ref{p1:nondecreasing}--\ref{p3:per-inv} almost surely. Then
    the prediction sequence $\hcC_{1:T}^{\boldsymbol{\delta}}$ induced
    by~\eqref{eq:pred-set-from-seq-test} satisfies
    \[
    \operatorname{AMR}(\hcC_{1:T}^{\boldsymbol{\delta}})
    =\pr \big\{ 
        \exists \, t \in [T], 
        Y_{n+1} \notin \hcC_t^{\boldsymbol{\delta}}(X_{n+1}) 
    \big\} = \frac{\lfloor (n+1) \alpha \rfloor}{n+1}\leq\alpha.
    \]
\end{theorem}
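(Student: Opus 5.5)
By (S1) the prediction sets are nested, so the any-stage miscoverage event reduces to the terminal one. At the true outcome $y=Y_{n+1}$ the augmented sample $\ocD(Y_{n+1})$ coincides with $(Z_1,\ldots,Z_{n+1})$. Hence
\[
\operatorname{AMR}(\hcC_{1:T}^{\boldsymbol{\delta}})=\pr\{\delta_T^{Y_{n+1}}(Z_{n+1})=1\}=\pr\{n+1\in\cR_T\},
\]
where $\cR_T$ is the cumulative exclusion set computed on the true sample and I drop the superscript. The plan is to show that, conditionally on the multiset $\Lbag Z_1,\ldots,Z_{n+1}\Rbag$ and the score functions, every index is equally likely to lie in $\cR_T$. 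Summing over indices and invoking (S2) then gives
\[
\sum_{i=1}^{n+1}\pr\{i\in\cR_T\}=\E|\cR_T|=c_{\rm tot},
\]
so each probability equals $c_{\rm tot}/(n+1)=\lfloor(n+1)\alpha\rfloor/(n+1)$. This yields the stated equality, and the bound by $\alpha$ is immediate.

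The core step is a permutation-equivariance argument. Fix a permutation $\pi$ of $[n+1]$ and compare the exclusion process run on $(Z_1,\ldots,Z_{n+1})$ with the one run on $(Z_{\pi(1)},\ldots,Z_{\pi(n+1)})$. I will prove by induction on $t$ that the cumulative exclusion sets transform equivariantly: $\cR_t$ computed on the permuted sample equals $\pi^{-1}(\cR_t)$ computed on the original sample.

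The base case uses that $\cF_0$ depends only on the multiset and $s_1$, both of which are invariant under $\pi$. For the inductive step, (S3) says that $\delta_t(z)$ is $\cF_{t-1}$-measurable and is a value-based rule. It is therefore a function $g_t(z;\text{history})$, where the history consists of the past exclusion sets, the multisets of excluded values, and the score functions. Given the inductive hypothesis, the excluded multisets agree across the two runs and the index sets are relabeled by $\pi$. Consequently the rule applied to each observation's value returns the same decision, so the new exclusion set is again relabeled by $\pi$.

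A subtle point is that the rule sees the index sets $\cR_{t-1}$ themselves, not just their multisets. I will handle this by reading (S3)'s ``value-based'' requirement as saying that the decision for an observation depends on its value together with the history only through quantities that are equivariant under relabeling. Equivalently, I will take $\delta_t(z)$ to be a function of $z$ and of $\cF_{t-1}$ in which the indices enter only to identify which values survive. Making this precise is the step I expect to be the main obstacle: one must verify that conditioning on $\cR_{t-1}$ does not break symmetry among survivors. The filtration reveals which indices were excluded and the multiset of their values, but never the pairing between surviving indices and surviving values, so the survivors remain exchangeable given $\cF_{t-1}$.

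To finish, combine equivariance with exchangeability from Assumption~\ref{cond:exchangeability}. Take $\pi$ to be the transposition of $i$ and $n+1$. Then $\mathbb{I}\{n+1\in\cR_T\}$ evaluated on the permuted sample equals $\mathbb{I}\{i\in\cR_T\}$ evaluated on the original sample. Since the permuted sample has the same distribution conditional on $\mathbb{S}$, it follows that $\pr\{n+1\in\cR_T\}=\pr\{i\in\cR_T\}$ for every $i\in[n+1]$. The no-ties part of Assumption~\ref{cond:exchangeability} guarantees that value-based decisions single out observations unambiguously, so that $|\cR_T|=c_{\rm tot}$ holds almost surely as (S2) requires. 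Averaging over $i$ gives the equality claimed in Theorem~\ref{the:basicresult}.
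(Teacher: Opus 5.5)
\textbf{Gap: the equivariance step.} Your argument rests on the pathwise identity $\cR_t(\pi Z)=\pi^{-1}(\cR_t(Z))$, where $\pi Z=(Z_{\pi(1)},\ldots,Z_{\pi(n+1)})$. This identity does not follow from (S3).

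The filtration in (S3) contains $\sigma(\cR_{t-1})$ itself. A stage-$t$ rule may therefore depend on the labels of the \emph{excluded} indices. ``Value-based'' only forbids using the pairing between surviving values and surviving indices. Your reading, that indices enter only to identify which values survive, is thus an extra hypothesis, and without it equivariance fails.

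Here is a concrete rule with $n+1=3$, $T=2$, $c_{\rm tot}=2$:
\begin{itemize}
\item At stage 1, exclude the observation with the largest $s_1$.
\item At stage 2, exclude the survivor with the larger $s_2$ if $1\in\cR_1$, and the survivor with the smaller $s_2$ otherwise.
\end{itemize}
Each stage-2 decision is a function of the value $z$ and of $\cF_1$, so (S1)--(S3) hold.

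Take $Z=(a,b,c)$ with $s_1(c)$ maximal and $s_2(a)<s_2(b)$. The original run gives $\cR_1=\{3\}$ and $\cR_2=\{1,3\}$. Now transpose indices $1$ and $3$, giving the sample $(c,b,a)$. This run gives $\cR_1=\{1\}$ and $\cR_2=\{1,2\}$. Hence $\cR_2(\pi Z)\neq\pi^{-1}(\cR_2(Z))$, and $\mathbb{I}\{3\in\cR_2(\pi Z)\}=0\neq 1=\mathbb{I}\{1\in\cR_2(Z)\}$. Your final transposition step therefore breaks, even though the theorem's conclusion still holds for this rule (each index is excluded with probability $2/3$).

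As written, your proof covers only history-equivariant rules. That subclass does include \coinm and the leave-one-out construction of Theorem~\ref{the:universality-principles}, but not everything (S1)--(S3) allows.

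\textbf{How to close the gap.} The fact you state at the end of that paragraph is exactly what is needed: given $\cF_{t}$, the surviving values are uniformly assigned to the surviving indices. You must use it distributionally rather than to build a pathwise bijection.

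First, establish conditional uniformity by induction. Revealing $\cR_t$ and the excluded multiset, when these are selected by an $\cF_{t-1}$-measurable value-based rule, preserves the uniform assignment of survivors.

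Then condition on $\cF_t$ and on $\{n+1\notin\cR_t\}$. The count $d_{t+1}=|\cR_{t+1}\setminus\cR_t|$ is $\cF_t$-measurable. So index $n+1$ is newly excluded with probability $d_{t+1}/(n+1-|\cR_t|)$.

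A backward induction then gives $\pr\{n+1\in\cR_T\mid\cF_t\}=(c_{\rm tot}-|\cR_t|)/(n+1-|\cR_t|)$ on that event. At $t=0$ this equals $c_{\rm tot}/(n+1)$. This is the paper's route. It needs neither equivariance nor your averaging over $i$, although the same computation shows $\pr\{i\in\cR_T\}=c_{\rm tot}/(n+1)$ for every $i$.

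A minor point: $|\cR_T|=c_{\rm tot}$ is assumed via (S2). It is not a consequence of the no-ties condition.
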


The theorem has two immediate implications. First, the attained AMR is the
largest point on the conformal grid
$\{0,1/(n+1),\ldots,1\}$ that does not exceed $\alpha$. Second, and more
importantly, validity depends on the total number of exclusions accumulated
by the prespecified horizon $T$, rather than on how those exclusions are
allocated across stages. Thus, provided monotone exclusion and survivor
symmetry are maintained, exclusions may occur at different stages without
changing the AMR.

\subsection{Sequential representation and universality}\label{ssec:universality-pp}

Theorem~\ref{the:basicresult} establishes that
Conditions~\ref{p1:nondecreasing}--\ref{p3:per-inv} are sufficient for
finite-sample AMR control. We next ask the converse question of whether the
exclusion formulation imposes an algorithm-specific restriction. This
question parallels the universality theorem for full conformal prediction,
which shows that any prediction procedure that is symmetric in its calibration
data and satisfies distribution-free marginal coverage admits a full
conformal representation based on a suitable symmetric score function
\citep[Theorem~9.7]{angelopoulos2024theoretical}. Our result is a sequential
counterpart for an entire nested prediction sequence: within the class of
calibration-symmetric sequences attaining the exchangeability-based AMR
equality in Theorem~\ref{the:basicresult}, every such sequence admits an
exclusion representation satisfying the three structural conditions.

\begin{theorem}[Sequential exclusion representation]
    \label{the:universality-principles}
    For each $t\in[T]$, let
    $\cC_t(x;\cD)\subseteq\cY$ be a prediction set constructed from a
    calibration dataset $\cD$ and a test feature $x$. Suppose that the
    resulting prediction sequence satisfies the following conditions:
    \begin{enumerate}[label=(\roman*)]
        \item \emph{Calibration symmetry.}
        For every $t\in[T]$ and $x\in\cX$,
        \[
            \cC_t(x;\cD)=\cC_t(x;\cD')
            \quad\text{whenever}\quad
            \Lbag\cD\Rbag=\Lbag\cD'\Rbag.
        \]

        \item \emph{Nestedness.}
        For every $t=2,\ldots,T$, $x\in\cX$, and calibration dataset
        $\cD$,
        \[
            \cC_t(x;\cD)\subseteq\cC_{t-1}(x;\cD).
        \]

        \item \emph{Exact exchangeability-based AMR.}
        For every exchangeable sequence
        $Z_i=(X_i,Y_i)$, $i\in[n+1]$, the prediction sequence constructed
        from $\cD^\ca=\{Z_i\}_{i=1}^n$ satisfies
        \[
            \pr\left\{
                \exists\,t\in[T]:
                Y_{n+1}\notin
                \cC_t(X_{n+1};\cD^\ca)
            \right\}
            =
            \frac{\lfloor(n+1)\alpha\rfloor}{n+1}.
        \]
    \end{enumerate}
    Then there exists a sequential exclusion process
    $\boldsymbol{\delta}$ satisfying
    Conditions~\ref{p1:nondecreasing}--\ref{p3:per-inv} such that its
    induced prediction sequence obeys
    \[
        \hcC_t^{\boldsymbol{\delta}}(X_{n+1})
        =
        \cC_t(X_{n+1};\cD^\ca),
        \qquad t\in[T].
    \]
\end{theorem}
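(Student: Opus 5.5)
The plan is to build the exclusion process directly from the given sets, using the classical "leave-one-out on the augmented multiset" device behind the universality theorem for full conformal prediction. For each candidate $y\in\cY$, augmented observation $Z_i^y=(X_i^y,Y_i^y)\in\ocD(y)$, and stage $t$, I would define
\[
    \delta_t^y(Z_i^y)
    :=
    \mathbb{I}\Bigl\{
        Y_i^y\notin\cC_t\bigl(X_i^y;\,\ocD(y)\setminus\{Z_i^y\}\bigr)
    \Bigr\},
\]
where $\ocD(y)\setminus\{Z_i^y\}$ is any ordering of the remaining $n$ augmented points. By calibration symmetry (i), this is well defined. For $i=n+1$, the deleted sample is exactly $\cD^\ca$, so $\delta_t^y(Z_{n+1}^y)=1$ if and only if $y\notin\cC_t(X_{n+1};\cD^\ca)$. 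Inverting as in~\eqref{eq:pred-set-from-seq-test} then gives $\hcC_t^{\boldsymbol\delta}(X_{n+1})=\cC_t(X_{n+1};\cD^\ca)$ for every $t$. It remains to verify \ref{p1:nondecreasing}--\ref{p3:per-inv}.

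\ref{p1:nondecreasing} is immediate from nestedness (ii) applied to the dataset $\ocD(y)\setminus\{Z_i^y\}$. For \ref{p3:per-inv}, $\delta_t^y(Z_i^y)$ is a fixed function of the value $Z_i^y$ together with the multiset $\Lbag\ocD(y)\Rbag$, because removing one copy of a value from the multiset determines the multiset of the remaining points. The function does not depend on $t$-specific history or index labels, and it is the same for all $y$. Hence $\delta_t^y(z)$ is $\cF_0^y$-measurable, and therefore $\cF_{t-1}^y$-measurable, for every $t$. This is the trivial, non-adaptive case of survivor symmetry.

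The substantive step, which I expect to be the main obstacle, is \ref{p2:rejection-count}: turning the distributional identity (iii) into the pathwise count $\sum_{i}\delta_T^y(Z_i^y)=c_{\rm tot}$ for every $y$. The approach I would take is to apply (iii) to a cleverly chosen exchangeable law. Fix any multiset $M=\Lbag z_1,\ldots,z_{n+1}\Rbag$ of distinct points. Let $(Z_1,\ldots,Z_{n+1})$ be a uniformly random permutation of $(z_1,\ldots,z_{n+1})$; this sequence is exchangeable. Under it, the test point is uniform on $M$ and $\cD^\ca$ is the remaining multiset. By (ii), the any-stage miscoverage event is the terminal event, so
\[
    \pr\{\exists t: Y_{n+1}\notin\cC_t(X_{n+1};\cD^\ca)\}
    =\frac{1}{n+1}\sum_{j=1}^{n+1}\mathbb{I}\bigl\{y_j\notin\cC_T(x_j;M\setminus\{z_j\})\bigr\}.
\]
Equating this with $\lfloor(n+1)\alpha\rfloor/(n+1)$ from (iii) yields the exact count $c_{\rm tot}$ for the augmented multiset $M$. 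Since every realized $\ocD(y)$ is such a multiset, \ref{p2:rejection-count} holds for each $y$. Distinctness is almost sure by Assumption~\ref{cond:exchangeability}, so \ref{p2:rejection-count} holds almost surely, which is the sense used in Theorem~\ref{the:basicresult}. The one care point is that (iii) must be applicable to these degenerate permutation laws. Its quantifier ``for every exchangeable sequence'' covers them, and the score-tie clause is not needed for (iii) itself. Once this is in place, all three conditions hold, and the representation is complete.
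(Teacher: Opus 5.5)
Your proposal is correct and follows the paper's proof essentially step for step: the same leave-one-out definition $\delta_t^y(z)=\mathbb{I}\{y'\notin\cC_t(x';\ocD(y)_{\setminus z})\}$, (S1) from nestedness, (S3) from $\cF_0^y$-measurability via calibration symmetry, and (S2) from applying (iii) to a uniform random permutation of the augmented multiset. One small simplification: you do not need to restrict to multisets of distinct points. The permutation identity holds verbatim when values repeat, because the test position is uniform over indices and the calibration multiset is $M$ with one copy removed. Hence (S2) holds deterministically for every $y$, without appealing to Assumption~\ref{cond:exchangeability}; that assumption in any case concerns only the true-label augmented sample.
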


The theorem establishes a universality property within the stated class:
restricting attention to prediction sequences induced by exclusion processes
satisfying Conditions~\ref{p1:nondecreasing}--\ref{p3:per-inv} entails no
loss of generality. Relative to the classical one-set result, this
representation makes three aspects of the sequential structure explicit:
nestedness yields monotone exclusion, the exchangeability-based AMR equality
determines the terminal exclusion count, and calibration symmetry yields
survivor-symmetric adaptation.

\begin{remark}
    The exact-AMR requirement in
    Theorem~\ref{the:universality-principles} can be relaxed to $\operatorname{AMR} \le \alpha$.
    The same representation holds with Condition~\ref{p2:rejection-count} replaced by
    $
        \sum_{i=1}^{n+1}\delta_T^y(Z_i^y)
        \leq \left\lfloor (n+1)\alpha \right\rfloor,
    $
    while Conditions~\ref{p1:nondecreasing} and~\ref{p3:per-inv} remain unchanged.
    This follows from the same leave-one-out construction and
    random-permutation argument, with the equality replaced by an inequality.
\end{remark}

\subsection{An \texorpdfstring{$e$}{e}-process from sequential exclusion}
\label{ssec:eprocess}

An $e$-process provides evidence against a null hypothesis that remains valid
under optional stopping: it is a nonnegative adapted process whose expectation
at every stopping time is at most one under the null
(\citealp{ramdas2022admissible}; Section~7.3 of
\citealp{ramdas2025hypothesis}). We show that the survivor-count dynamics of
sequential exclusion generate such anytime-valid evidence directly.

Fix a candidate outcome $y\in\cY$. For the exclusion process applied to
$\ocD(y)$, define
\begin{equation}
    \label{eq:e-process}
    E_t^y
    \coloneqq
    \begin{cases}
        \displaystyle 1/{\alpha},
        & n+1\in\cR_t^y,
        \\[8pt]
        \displaystyle
        \frac{\alpha(n+1)-|\cR_t^y|}
        {\alpha\bigl(n+1-|\cR_t^y|\bigr)},
        & n+1\notin\cR_t^y,
    \end{cases}
    \qquad t=0,\ldots,T.
\end{equation}
Because $\cR_0^y=\varnothing$, definition~\eqref{eq:e-process} gives
$E_0^y=1$. Conditions~\ref{p1:nondecreasing}
and~\ref{p2:rejection-count} ensure that the process is well defined,
nonnegative, and bounded above by $\alpha^{-1}$, while
Condition~\ref{p3:per-inv} ensures that it is adapted to
$\{\cF_t^y\}_{t=0}^T$. 

For a candidate outcome $y$, $E_t^y$ measures evidence against the
compatibility of the hypothesized test observation with exchangeability
of the augmented sample. At the true outcome, this compatibility follows from
Assumption~\ref{cond:exchangeability}, and the resulting process remains valid
when monitored or stopped at any stage. 
The proposition below establishes its
optional-stopping validity at the true outcome.

\begin{proposition}[Survivor-count $e$-process]
    \label{the:eprocess-val}
    Under Assumption~\ref{cond:exchangeability}, suppose that a sequential
    exclusion process $\boldsymbol{\delta}$ satisfies
    Conditions~\ref{p1:nondecreasing}--\ref{p3:per-inv}. Then
    $\{E_t^{Y_{n+1}}\}_{t=0}^T$ is adapted to
    $\{\cF_t^{Y_{n+1}}\}_{t=0}^T$ and, for every stopping time
    $\tau\in\{0,\ldots,T\}$ with respect to this filtration,
    \[
        \E\bigl[E_\tau^{Y_{n+1}}\bigr]\leq 1.
    \]
    Hence, $\{E_t^{Y_{n+1}}\}_{t=0}^T$ is an $e$-process under the
    exchangeability null.
\end{proposition}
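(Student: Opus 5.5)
Throughout we work at the true outcome $y=Y_{n+1}$, suppress the superscript $y$, and write $R_t=|\cR_t|$, $N_t=n+1-R_t$ for the number of survivors. The strategy is to show that $\{E_t\}_{t=0}^T$ is a nonnegative $\{\cF_t\}$-martingale. Optional stopping at a bounded stopping time $\tau\le T$ then gives $\E[E_\tau]=\E[E_0]=1$, which yields the claim.

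\textbf{Step 1: well-definedness and adaptedness.} By \ref{p1:nondecreasing} and \ref{p2:rejection-count}, $R_t\le c_{\rm tot}=\lfloor(n+1)\alpha\rfloor$. Hence $\alpha(n+1)-R_t\ge0$ and $N_t\ge n+1-c_{\rm tot}>0$, since $\alpha<1$. It follows that $E_t\in[0,\alpha^{-1}]$. The survivor branch is at most $\alpha^{-1}$ because $(\alpha(n+1)-R)/(\alpha(n+1-R))\le 1/\alpha$ is equivalent to $\alpha(n+1)-R\le n+1-R$. Adaptedness holds because $\cR_t$ is $\cF_t$-measurable by the construction in \ref{p3:per-inv}, and $E_t$ is a deterministic function of $\cR_t$.

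\textbf{Step 2: survivor exchangeability (the key lemma).} Let $\cS_{t-1}=[n+1]\setminus\cR_{t-1}$. I will show by induction on $t$ that, conditional on $\cF_{t-1}$, the assignment of the survivor multiset $\Lbag Z_i:i\in\cS_{t-1}\Rbag$ to the indices in $\cS_{t-1}$ is uniformly random over all bijections. In particular, on $\{n+1\in\cS_{t-1}\}$, the test index is equally likely to be any survivor.

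For $t=1$, this is ordinary exchangeability conditional on $\mathbb S$ and the full multiset. For the induction step, \ref{p3:per-inv} says that $\delta_t$ is $\cF_{t-1}$-measurable and applies a value-based rule. So the set of values excluded at stage $t$ is a function of the survivor multiset and $\cF_{t-1}$, and not of the pairing. Permuting the survivor labels therefore maps the event defining $\cF_t$ onto itself; what is revealed are the newly excluded indices together with their values. Conditioning on this, the remaining survivors stay uniformly paired.

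Under Assumption~\ref{cond:exchangeability}, the no-ties condition makes the value-based rule unambiguous. I expect this step to be the main obstacle, because ``value-based'' must be formalized carefully: the rule selects which survivors to exclude as a function of their values, and the index set is then determined by the pairing. This is the same permutation argument that underlies Theorem~\ref{the:basicresult}, which I would reuse.

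\textbf{Step 3: the martingale identity.} On $\{n+1\in\cR_{t-1}\}$, we have $E_t=E_{t-1}=1/\alpha$ by monotonicity. On $\{n+1\notin\cR_{t-1}\}$, let $k=R_t-R_{t-1}$. The count $k$ is determined by $\cF_{t-1}$, since it depends on values only. By Step 2, $\pr(n+1\in\cR_t\mid\cF_{t-1})=k/N_{t-1}$. Write $a=\alpha(n+1)$ and $R=R_{t-1}$. Then
\[
\E[E_t\mid\cF_{t-1}]=\frac{k}{N_{t-1}}\cdot\frac1\alpha+\frac{N_{t-1}-k}{N_{t-1}}\cdot\frac{a-R-k}{\alpha(N_{t-1}-k)}=\frac{k+a-R-k}{\alpha N_{t-1}}=E_{t-1}.
\]
So $E$ is a bounded martingale, and optional stopping for bounded $\tau$ gives $\E[E_\tau]=1\le1$.

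Note that \ref{p2:rejection-count} is used only to guarantee $R_t\le c_{\rm tot}$, and hence nonnegativity of $E_t$. The same conclusion holds under the relaxed inequality version in the Remark.
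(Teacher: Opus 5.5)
Your proposal is correct and follows essentially the same route as the paper's proof. That route runs through well-definedness, boundedness and adaptedness from (S1)--(S2), survivor exchangeability by induction from (S3), the one-step rejection probability $k/N_{t-1}$, the resulting martingale identity, and optional stopping over the bounded horizon. Your closing remark is also correct: (S2) enters only through $R_t\le c_{\rm tot}$, so the relaxed inequality version suffices.
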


As one consequence, candidate-wise inversion of this evidence recovers the
prediction sequence induced by the exclusion process:
$\hcC_t^{\scriptscriptstyle \rm E} (X_{n+1}) 
:= 
\big\{ y \in \cY: E_t^y < \alpha^{-1} \big\}$,
which equals the one given in~\eqref{eq:pred-set-from-seq-test}.
Then, if Assumption~\ref{cond:exchangeability} holds, 
applying Proposition~\ref{the:eprocess-val} with Ville's inequality \citep{ville1939etude,howard2020time} gives 
\begin{equation*}
    \pr (\exists \, t \in [T], Y_{n+1} \notin \hcC_t^{\scriptscriptstyle \rm E} (X_{n+1}))
    = 
    \pr (\exists \, t \in [T], E_t^{Y_{n+1}} \ge \alpha^{-1}) 
    \le \alpha.
\end{equation*}

Combining Theorem~\ref{the:universality-principles} with
Proposition~\ref{the:eprocess-val} further shows that every calibration-symmetric,
nested prediction sequence attaining the exchangeability-based AMR equality
admits a survivor-count $e$-process representation: the original prediction
sequence is recovered by candidate-wise inversion of the associated
$e$-process.

\section{\coinm: Count Investment for Sequential Prediction}
\label{sec:coin-methodology}

The structural theory in Section~\ref{sec:principled-properties}
characterizes valid sequential exclusion processes but does not specify how
exclusions should be generated from the stage-specific score functions.
This section develops \coinm as a concrete construction. It allocates the
fixed total exclusion count across stages and, at each stage, excludes only
observations that have survived all preceding stages. The resulting stagewise
exclusion sets are disjoint, so each augmented observation can be excluded at
most once.

Section~\ref{ssec:coinprocedure} defines \coinm and establishes its
finite-sample any-stage validity. Section~\ref{ssec:dom-bonferroni} compares
\coinm with a matched Bonferroni construction and establishes prediction-set
dominance. Section~\ref{ssec:implementcoin} derives a calibration-only
threshold representation that avoids rerunning the augmented process for
each candidate outcome.

\subsection{Construction and finite-sample validity}
\label{ssec:coinprocedure}

Fix a candidate outcome $y\in\cY$ and a prespecified allocation
$(c_1,\ldots,c_T)\in\mathbb{Z}_{\geq0}^T$ satisfying
$
    \sum_{t=1}^T c_t=c_{\rm tot}.
$
Initialize $\delta_0^{y,\coin}(Z_i^y)=0$ for every $i\in[n+1]$.
At stage $t$, among the observations not previously excluded, \coinm
excludes the $c_t$ observations with the largest nonconformity scores.
Let $\cS_t^{y,\coin}$ denote the indices of the observations newly excluded
at this stage. Specifically, define
\begin{equation}\label{eq:coin-selection-set}
    \cS_t^{y,\coin}
    \coloneqq
    \bigg\{
        i\in[n+1]:
        \delta_{t-1}^{y,\coin}(Z_i^y)=0,\,
        \sum_{j=1}^{n+1}
        \bbI\Big\{
            \delta_{t-1}^{y,\coin}(Z_j^y)=0,\,
            s_t(Z_j^y)\ge s_t(Z_i^y)
        \Big\}
        \le c_t
    \bigg\}.
\end{equation}
The cumulative exclusion indicators are then updated according to
\begin{equation}\label{eq:coinupdatestrategy}
    \delta_t^{y,\coin}(Z_i^y)
    \coloneqq
    \max\left\{
        \delta_{t-1}^{y,\coin}(Z_i^y),
        \bbI\{i\in\cS_t^{y,\coin}\}
    \right\},
    \qquad i\in[n+1].
\end{equation}
Repeating this update for $t=1,\ldots,T$ defines the sequential exclusion
process for candidate outcome $y$. Inverting these candidate-wise processes
gives the \coinm prediction sequence
\begin{equation}\label{eq:coin-prediction-set-seq}
    \hcC_t^\coin(X_{n+1})
    =
    \big\{
        y\in\cY:
        \delta_t^{y,\coin}(Z_{n+1}^y)=0
    \big\},
    \qquad t\in[T].
\end{equation}
Figure~\ref{fig:demo-coin} provides a schematic illustration of the
construction, and Algorithm~\ref{alg:coinprocedure} in
Appendix~\ref{asubsec:coin-algorithm} summarizes the complete procedure.

\begin{figure}[!t]
    \centering
    \includegraphics[width=1\linewidth]{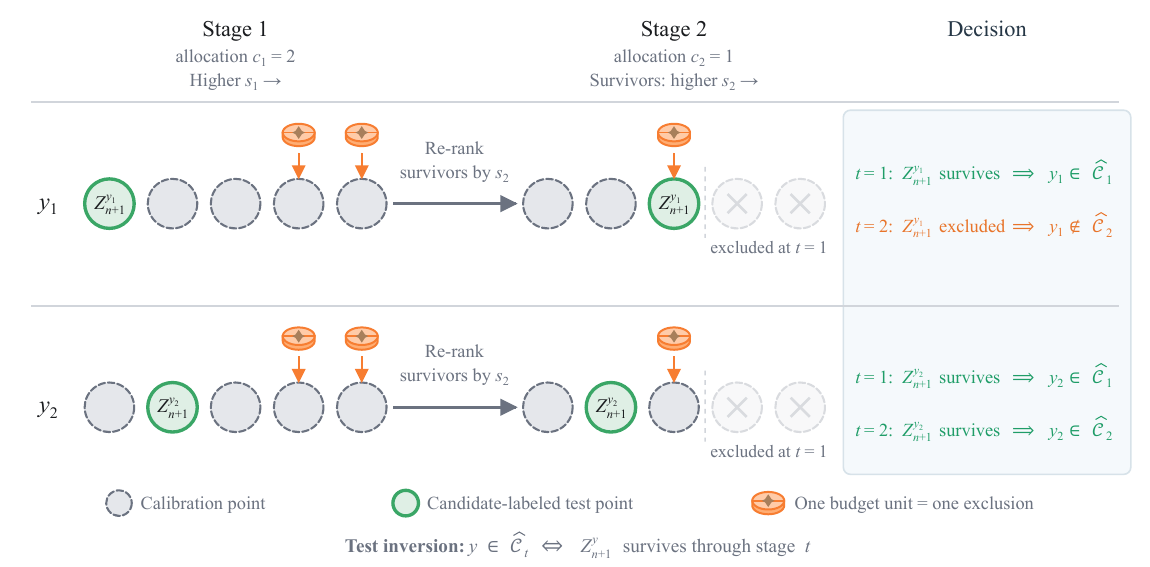}
    \caption{\small Illustration of \coinm for two candidate outcomes and the allocation
    $c_1=2$ and $c_2=1$. Within each row, the augmented observations are ranked
    by $s_1$ at stage~1; only the survivors are reranked by $s_2$ at stage~2.
    Each coin marks one exclusion, crossed circles denote observations excluded
    at stage~1, and the green circle denotes the hypothesized test
    observation. A candidate remains in $\hcC_t^\coin$ if and only if its
    augmented test observation has survived through stage~$t$.}
    \label{fig:demo-coin}
\end{figure}

We now verify the three structural conditions. First, the cumulative update
in~\eqref{eq:coinupdatestrategy} makes the exclusion indicators non-decreasing
across stages, establishing Condition~\ref{p1:nondecreasing}. Second, because
only previously unexcluded observations are eligible for selection, the
stagewise selection sets are disjoint. Under the no-ties condition,
$|\cS_t^{y,\coin}|=c_t$, and hence
\[
    \sum_{i=1}^{n+1}\delta_T^{y,\coin}(Z_i^y)
    =
    \sum_{t=1}^T|\cS_t^{y,\coin}|
    =
    \sum_{t=1}^T c_t
    =
    c_{\rm tot},
\]
which establishes Condition~\ref{p2:rejection-count}. Third, at each stage,
the same score function $s_t$ is applied to all surviving augmented
observations, and their exclusion depends only on the preceding exclusion
history, their score values, and the prespecified count $c_t$, not on their
index labels. The stagewise rule is therefore survivor-symmetric and
$\cF_{t-1}^y$-measurable, establishing Condition~\ref{p3:per-inv}. 

Theorem~\ref{the:errorcontrolcoin} below shows that every prespecified allocation
of the total rejection-count budget yields a finite-sample any-stage-valid
prediction sequence. Section~\ref{sec:opt-via-erm} studies how to select the
allocation to optimize a specified process-level utility.

\begin{theorem}[Finite-sample any-stage validity of \coinm]
    \label{the:errorcontrolcoin}
    Under Assumption~\ref{cond:exchangeability}, 
    if the stagewise budgets $\{c_t\}_{t=1}^T$ are pre-specified, 
    the prediction sequence produced by \coinm satisfies
    \begin{equation*}
        \pr\Big\{
            \exists\,t\in[T]:
            Y_{n+1}\notin\hcC_t^\coin(X_{n+1})
        \Big\}
        =
        \frac{\lfloor(n+1)\alpha\rfloor}{n+1}.
    \end{equation*}
\end{theorem}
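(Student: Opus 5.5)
The plan is to reduce Theorem~\ref{the:errorcontrolcoin} to Theorem~\ref{the:basicresult}. The argument preceding the statement already sketches that the \coinm exclusion process satisfies Conditions~\ref{p1:nondecreasing}--\ref{p3:per-inv}. The proof therefore consists of making that verification rigorous almost surely and then invoking Theorem~\ref{the:basicresult}, which yields the equality $\lfloor(n+1)\alpha\rfloor/(n+1)$ directly.

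\textbf{Step 1: monotone exclusion.} Condition~\ref{p1:nondecreasing} is immediate from the max-update~\eqref{eq:coinupdatestrategy}.

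\textbf{Step 2: fixed total count.} Under the no-ties part of Assumption~\ref{cond:exchangeability}, the scores $s_t(Z_j^y)$ are distinct for the true $y=Y_{n+1}$. Among the surviving observations at stage $t$, the set~\eqref{eq:coin-selection-set} consists of those with survivor rank at most $c_t$. Since only survivors are eligible, the stage-$t$ selections are disjoint from earlier exclusions, so $|\cS_t^{y,\coin}|=\min\{c_t,\#\text{survivors}\}$.

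There are always enough survivors: by stage $t$ at most $\sum_{s<t}c_s\le c_{\rm tot}\le n+1$ observations have been removed. Hence $|\cS_t^{y,\coin}|=c_t$ exactly, and summing over $t$ gives Condition~\ref{p2:rejection-count}.

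For candidates $y\neq Y_{n+1}$ ties might occur, but only the true outcome matters for the miscoverage event. Rather than invoking the full theorem for all $y$, I will apply the counting argument of Theorem~\ref{the:basicresult} at $y=Y_{n+1}$ only. If that theorem formally requires the conditions for every $y$, I would note that its proof uses them only at the true outcome; alternatively, ties at other $y$ can be broken by a fixed rule, which does not affect the event.

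\textbf{Step 3: survivor symmetry (the main step).} We must show that $\delta_t^y(Z_i^y)$ is $\cF_{t-1}^y$-measurable. The filtration $\cF_{t-1}^y$ contains:
\begin{itemize}
\item the excluded index set $\cR_{t-1}^y$ and the multiset of excluded points,
\item the full multiset $\Lbag\ocD(y)\Rbag$, and hence the multiset of survivors,
\item the score function $s_t$.
\end{itemize}
The stage-$t$ rule decides exclusion of a surviving point $z$ by whether $s_t(z)$ is among the $c_t$ largest values of $s_t$ over the survivor multiset. This decision is a deterministic function of $z$'s value, the survivor multiset, $s_t$, and the prespecified $c_t$, and the same value-based map is used for all $y$.

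The subtlety is the measurability interpretation. In Condition~\ref{p3:per-inv}, $\delta_t^y(z)$ is viewed as a function of the value $z$, not of the index. So I will phrase the rule as
\[
\delta_t^y(z)=\max\bigl\{\delta_{t-1}^y(z),\,\bbI\{z \text{ survives and } s_t(z)\text{ exceeds the }(c_t+1)\text{th largest survivor score}\}\bigr\}.
\]
The threshold (the $(c_t+1)$th largest survivor score, or $-\infty$ if there are only $c_t$ survivors) is $\cF_{t-1}^y$-measurable. I expect this to be the only place needing care: checking that "survives" is determined by the recorded excluded multiset. This holds by induction on $t$, with no ties.

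With all three conditions verified, Theorem~\ref{the:basicresult} gives the stated equality. The prespecification of $(c_t)$ is essential here, since it ensures the allocation does not depend on index labels or on the test point.
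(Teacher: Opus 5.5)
Your proposal is correct and follows essentially the same route as the paper's proof: verify Conditions~\ref{p1:nondecreasing}--\ref{p3:per-inv} for the \coinm exclusion process (monotone max-updates, disjoint survivor-only selections of size exactly $c_t$ under no ties, and an $\cF_{t-1}^y$-measurable value-based ranking rule with prespecified $c_t$), then invoke Theorem~\ref{the:basicresult}. Your remark that possible ties at candidates $y\neq Y_{n+1}$ are harmless, because the counting argument is only used at $y=Y_{n+1}$ where no ties occur almost surely, is a valid tightening of a point the paper passes over.
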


\subsection{Count allocation versus Bonferroni miscoverage allocation}
\label{ssec:dom-bonferroni}

A Bonferroni alternative allocates the total miscoverage level $\alpha$
across stages, whereas \coinm allocates the finite-sample exclusion count
$c_{\rm tot}$. The distinction matters because each Bonferroni set is
calibrated against the full augmented sample: the same observation can consume
the miscoverage budget at several stages even though exclusion from a nested
prediction sequence is irreversible. By spending each count only on a
surviving observation, \coinm avoids this repeated charge.

To formalize the comparison, let
$\balp=(\alpha_1,\ldots,\alpha_T)$ be a prespecified nonnegative allocation
with $\sum_{t=1}^T\alpha_t=\alpha$. At stage~$t$, the Bonferroni construction
forms the level-$(1-\alpha_t)$ conformal prediction set
\begin{equation*}
    \check{\cC}_t^\bon(X_{n+1};\alpha_t)
    \coloneqq
    \bigg\{
        y:
        \frac{
            1+
            \sum_{i=1}^n
            \bbI\big\{
                s_t(Z_i^y)\ge s_t(Z_{n+1}^y)
            \big\}
        }{n+1}
        >
        \alpha_t
    \bigg\}.
\end{equation*}
It reports the cumulative intersections
\[
    \hcC_t^\bon(X_{n+1};\balp)
    \coloneqq
    \bigcap_{r\leq t}
    \check{\cC}_r^\bon(X_{n+1};\alpha_r),
    \qquad t\in[T].
\]
The union bound gives AMR control, but the stagewise calibrations repeatedly
rank all $n+1$ augmented observations, including observations already charged
at an earlier stage.

\begin{proposition}[Prediction-set dominance]
    \label{prop:prediction-set-dominance}
    Suppose that there are no ties.
    Fix an allocation $\balp$.
    Denote
    $c_t=\lfloor(n+1)\alpha_t\rfloor$ for $t<T$ and
    $c_T=c_{\rm tot}-\sum_{t<T}c_t$.
    Then,
    \begin{equation*}
        \forall\,t\in[T],
        \qquad
        \hcC_t^\coin
        (X_{n+1};\{c_t\}_{t=1}^T)
        \subseteq
        \hcC_t^\bon(X_{n+1};\balp),
    \end{equation*}
\end{proposition}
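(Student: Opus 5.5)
The plan is to prove the contrapositive, one candidate at a time. Fix $y\in\cY$ and $t\in[T]$, and suppose $y\notin\hcC_t^\bon(X_{n+1};\balp)$. We will show that $y\notin\hcC_t^\coin(X_{n+1};\{c_t\}_{t=1}^T)$. By the definition of the cumulative intersection, there is a stage $r\le t$ with $y\notin\check{\cC}_r^\bon(X_{n+1};\alpha_r)$. By the monotone exclusion of \coinm (Condition~\ref{p1:nondecreasing}, which holds by the update~\eqref{eq:coinupdatestrategy}), it is enough to show $\delta_r^{y,\coin}(Z_{n+1}^y)=1$.

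\textbf{Step 1: rewrite Bonferroni exclusion as a rank bound.} Let
\[
    \rho_r:=1+\sum_{i=1}^n\bbI\{s_r(Z_i^y)\ge s_r(Z_{n+1}^y)\},
\]
which is the rank of the hypothesized test point among all $n+1$ augmented scores at stage $r$, with rank $1$ for the largest. Bonferroni exclusion means $\rho_r\le(n+1)\alpha_r$. Since $\rho_r$ is an integer, this gives $\rho_r\le\lfloor(n+1)\alpha_r\rfloor$.

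\textbf{Step 2: compare this with the count $c_r$.} For $r<T$ the bound is exactly $\rho_r\le c_r$. For $r=T$ we need $\lfloor(n+1)\alpha_T\rfloor\le c_T$. Floors are superadditive, so
\[
    \sum_{s\le T}\lfloor(n+1)\alpha_s\rfloor\le\Big\lfloor (n+1)\sum_{s\le T}\alpha_s\Big\rfloor=\lfloor(n+1)\alpha\rfloor=c_{\rm tot}.
\]
Rearranging gives $c_T=c_{\rm tot}-\sum_{s<T}c_s\ge\lfloor(n+1)\alpha_T\rfloor\ge0$. This also confirms that the allocation is a valid nonnegative split of $c_{\rm tot}$. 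In both cases, $\rho_r\le c_r$.

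\textbf{Step 3: transfer the rank bound to the survivor set.} If $\delta_{r-1}^{y,\coin}(Z_{n+1}^y)=1$, the test point is already excluded and we are done. Otherwise the test point survives to stage $r$. The survivor set is a subset of $[n+1]$, so the rank of the test point among survivors at stage $r$ is
\[
    \sum_{j=1}^{n+1}\bbI\{\delta_{r-1}^{y,\coin}(Z_j^y)=0,\ s_r(Z_j^y)\ge s_r(Z_{n+1}^y)\}\le\rho_r\le c_r.
\]
By~\eqref{eq:coin-selection-set}, this means $n+1\in\cS_r^{y,\coin}$, so $\delta_r^{y,\coin}(Z_{n+1}^y)=1$. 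Monotonicity then carries the exclusion forward to stage $t$.

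The only nontrivial point is the final-stage count $c_T$ in Step~2, which absorbs the rounding slack. The superadditivity of the floor function handles it. The no-ties assumption ensures that the rank counts in Step~3 are well defined, so that $|\cS_r^{y,\coin}|=c_r$ and the comparison is exact.
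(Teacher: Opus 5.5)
Your proposal is correct and follows the paper's proof essentially step for step: superadditivity of the floor gives $c_r\ge\lfloor(n+1)\alpha_r\rfloor$ for every $r$ (with equality for $r<T$), and restricting the stage-$r$ rank count to survivors can only decrease it, so a Bonferroni exclusion at stage $r$ forces a \coinm exclusion by stage $r$, which monotone exclusion carries forward to stage $t$. Your closing remark is unnecessary: the inclusion does not rely on no-ties, because the survivor count bound triggers the selection criterion in~\eqref{eq:coin-selection-set} directly.
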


Thus, every prespecified Bonferroni miscoverage allocation has a matched
\coinm count allocation that produces no larger prediction set at any stage.
The inclusion can be strict when the score rankings at different stages charge
overlapping observations.

The set inclusion also yields utility dominance. We call $u$ \emph{set-wise
monotone} if
\begin{equation}\label{eq:monotone-utility}
    \big\{ \forall \, t \in [T], 
    \cC_t \subseteq \cC_t' \big\}
    \Longrightarrow 
    u(\{\cC_t\}_{t=1}^T) \ge u(\{\cC_t'\}_{t=1}^T).
\end{equation}
Examples include the negative average prediction-set size, 
$u(\{\cC_t\}_{t=1}^T) = - \sum_{t = 1}^T |\cC_t| / T$, 
and the negative size of the aggregated prediction set,
$u(\{\cC_t\}_{t=1}^T) = - |\cap_{t = 1}^T \cC_t|$.

\begin{corollary}[Utility dominance]
    \label{cor:utility-dominance}
    Let $u$ be a set-wise monotone utility.
    Under the conditions of
    Proposition~\ref{prop:prediction-set-dominance}, we have
    $
        u\big(
            \{
                \hcC_t^\coin
                (X_{n+1};\{c_t\}_{t=1}^T)
            \}_{t=1}^T
        \big)
        \ge
        u\big(
            \{
                \hcC_t^\bon(X_{n+1};\balp)
            \}_{t=1}^T
        \big).
    $
\end{corollary}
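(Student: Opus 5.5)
The corollary follows directly from Proposition~\ref{prop:prediction-set-dominance} together with the definition of set-wise monotonicity in~\eqref{eq:monotone-utility}. The plan is to fix the test feature $X_{n+1}$ and the calibration sample, and to work pointwise, realization by realization. Apart from passing to expectations at the end, no probabilistic argument is needed.

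First, fix an allocation $\balp$ and define the matched counts exactly as in Proposition~\ref{prop:prediction-set-dominance}: $c_t=\lfloor(n+1)\alpha_t\rfloor$ for $t<T$, and $c_T=c_{\rm tot}-\sum_{t<T}c_t$. We should note that $c_T\ge 0$. This holds because $\sum_{t<T}\lfloor(n+1)\alpha_t\rfloor\le\lfloor(n+1)\sum_{t<T}\alpha_t\rfloor\le\lfloor(n+1)\alpha\rfloor$, so the allocation is admissible for \coinm.

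Under the no-ties condition, the proposition then gives, for every $t\in[T]$,
\[
\hcC_t^\coin(X_{n+1};\{c_t\}_{t=1}^T)\subseteq\hcC_t^\bon(X_{n+1};\balp).
\]
We take $\cC_t=\hcC_t^\coin$ and $\cC_t'=\hcC_t^\bon$. The hypothesis of~\eqref{eq:monotone-utility} then holds for all $t$ simultaneously. Set-wise monotonicity of $u$ therefore yields $u(\{\hcC_t^\coin\}_{t=1}^T)\ge u(\{\hcC_t^\bon\}_{t=1}^T)$ for this realization.

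Since the inclusion holds for every realization without ties, and ties occur with probability zero, the inequality holds almost surely. Taking expectations, when they exist, gives the corresponding inequality for expected utility.

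There is essentially no obstacle here. The only point that needs checking is that the matched count allocation is a legitimate \coinm input, namely nonnegative integers summing to $c_{\rm tot}$. This is the floor-sum inequality noted above.
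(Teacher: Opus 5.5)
Your proof is correct and follows the paper's own argument: apply Proposition~\ref{prop:prediction-set-dominance} to get $\hcC_t^\coin\subseteq\hcC_t^\bon$ for every $t\in[T]$, then invoke set-wise monotonicity~\eqref{eq:monotone-utility}. Your extra check that $c_T\ge 0$ via floor superadditivity is harmless, since the paper establishes the same fact (indeed $c_T\ge\lfloor(n+1)\alpha_T\rfloor$) inside the proof of the proposition.
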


\begin{remark}
    Confidence-level allocation methods for multiple-score aggregation, including
    \texttt{COLA} \citep{xu2025aggregating}, are terminal-utility instances of
    this Bonferroni strategy. 
    Further comparisons are given in Appendix Sections~\ref{asubsec:multiple-score-review} and~\ref{ssec:numerical-aggregation}.
\end{remark}

\subsection{Computing \coinm with reusable calibration thresholds}
\label{ssec:implementcoin}

The candidate-wise definition in~\eqref{eq:coin-prediction-set-seq} requires
rerunning the augmented exclusion process for every $y\in\cY$, which can be
computationally prohibitive when the outcome space is large or continuous.
We therefore derive an equivalent threshold representation that removes this
candidate-by-candidate computation. The key observation is a
leave-one-candidate-out invariance: as long as the hypothesized test
observation survives, its presence does not affect which calibration
observations are excluded. Consequently, the calibration survivor path, and
hence all stagewise thresholds, can be computed without reference to either
the test feature or the candidate outcome.

For a finite multiset $A$, let $\bbC_c(A)$ denote its $c$th largest element
for $c\geq1$, and set $\bbC_0(A)=+\infty$. Given the allocation
$\bc=(c_1,\ldots,c_T)$, initialize the active calibration indices as
$\cA_0=[n]$ and recursively define
\begin{equation}\label{eq:calibration-only-coin-thresholds}
    \begin{aligned}
        \htau_t(\bc)
        \coloneqq
        \bbC_{c_t}\bigl(\{s_t(Z_i):i\in\cA_{t-1}\}\bigr),
        \qquad 
        \cA_t
        \coloneqq
        \bigl\{i\in\cA_{t-1}:s_t(Z_i)<\htau_t(\bc)\bigr\},
        \qquad t\in[T].
    \end{aligned}
\end{equation}
The resulting thresholds define the stagewise sets
$\tcC_t(x)\coloneqq\{y\in\cY:s_t(x,y)\leq\htau_t(\bc)\}$ and the
calibration-only prediction sequence
\begin{equation}\label{eq:calibration-only-coin-sets}
    \hcC_t^{\rm thr}(x)
    \coloneqq
    \bigcap_{r=1}^t\tcC_r(x),
    \qquad t\in[T].
\end{equation}
Both the active sets $\cA_t$ and the thresholds $\htau_t(\bc)$ depend only on
the calibration sample, the score functions, and the allocation.

To state the equivalence precisely, define the calibration-score boundary
\[
    \cT(x;\cD^\ca)
    \coloneqq
    \bigcup_{t=1}^T\bigcup_{i=1}^n
    \bigl\{y\in\cY:s_t(x,y)=s_t(Z_i)\bigr\}.
\]

\begin{proposition}[Calibration-only threshold representation]
    \label{prop:equiv-coin}
    Suppose that $\{s_t(Z_i)\}_{i=1}^n$ are distinct for every $t\in[T]$.
    Then, for every $x$ and $t\in[T]$,
    $
        \hcC_t^\coin(x)\mathbin{\triangle}\hcC_t^{\rm thr}(x)
        \subseteq
        \cT(x;\cD^\ca).
    $
    Consequently, under Assumption~\ref{cond:exchangeability}, it follows that
    \[
        \pr\Bigl\{
            \exists\,t\in[T]:
            Y_{n+1}\in
            \hcC_t^\coin(X_{n+1})
            \mathbin{\triangle}
            \hcC_t^{\rm thr}(X_{n+1})
        \Bigr\}
        =0.
    \]
\end{proposition}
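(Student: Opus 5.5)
Fix $x$ and a candidate $y\notin\cT(x;\cD^\ca)$, so that for every $t$ the test score $s_t(x,y)$ differs from every calibration score $s_t(Z_i)$. Together with the assumed distinctness of $\{s_t(Z_i)\}_{i=1}^n$, the augmented sample $\ocD(y)$ then has no ties at any stage. Each \coinm selection set therefore has exactly $c_t$ elements, and the threshold comparisons are strict. The plan is to show by induction on $t$ that
\[
    \delta_t^{y,\coin}(Z_{n+1}^y)=0 \iff y\in\hcC_t^{\rm thr}(x),
\]
which gives $\hcC_t^\coin(x)\mathbin{\triangle}\hcC_t^{\rm thr}(x)\subseteq\cT(x;\cD^\ca)$ directly.

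The induction hypothesis at stage $t-1$ has two parts. First, the test point survives through stage $t-1$ in the augmented process if and only if $s_r(x,y)\le\htau_r(\bc)$ for all $r\le t-1$. Second, in that case the surviving calibration indices in the augmented process equal $\cA_{t-1}$. The base case $t=0$ is trivial, since nothing is excluded and $\cA_0=[n]$.

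For the inductive step, suppose the test point survives through $t-1$. The survivors are $\cA_{t-1}\cup\{n+1\}$, and stage $t$ excludes the $c_t$ largest $s_t$-values among them. There are two cases.
\begin{itemize}
    \item If $s_t(x,y)<\htau_t(\bc)=\bbC_{c_t}(\{s_t(Z_i):i\in\cA_{t-1}\})$, the test score is below the $c_t$-th largest calibration survivor score. The $c_t$ largest values among the survivors are then exactly the calibration points with $s_t(Z_i)\ge\htau_t$. The test point survives, and the calibration survivors become $\{i\in\cA_{t-1}:s_t(Z_i)<\htau_t\}=\cA_t$. This is the leave-one-candidate-out invariance.
    \item If $s_t(x,y)>\htau_t(\bc)$, the test score exceeds the $c_t$-th largest calibration score, so it ranks among the top $c_t$ and is excluded. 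Consistently, $y\notin\tcC_t(x)$.
\end{itemize}
The edge cases need a little care. When $c_t=0$, we have $\htau_t=+\infty$, so nothing is excluded, $\cA_t=\cA_{t-1}$, and every finite score satisfies the threshold. When $c_t>|\cA_{t-1}|$, the quantity $\bbC_{c_t}$ is undefined; I would adopt the convention $-\infty$, under which every survivor, including the test point, is excluded, matching the augmented process that removes all $\le c_t$ survivors. Once the test point is excluded, monotonicity (S1) keeps it excluded, and the intersection in \eqref{eq:calibration-only-coin-sets} keeps $y$ out of the threshold sets. So both sides agree for all later stages, and the inductive claim about calibration survivors is only needed while the test point survives.

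For the probabilistic consequence, at $y=Y_{n+1}$ the event $Y_{n+1}\in\cT(X_{n+1};\cD^\ca)$ means $s_t(Z_{n+1})=s_t(Z_i)$ for some $t$ and some $i\le n$. By the almost-sure no-ties part of Assumption~\ref{cond:exchangeability}, this event has probability zero. A union over the finitely many $t$ completes the proof.

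The main obstacle I expect is the bookkeeping in the inductive step. One must verify carefully that inserting the test point, with its score strictly below the threshold, leaves the identity of the top $c_t$ survivors unchanged. This relies on strict inequalities, which is exactly why the boundary set $\cT$ is carved out, and also on handling the degenerate allocations $c_t=0$ and $c_t>|\cA_{t-1}|$.
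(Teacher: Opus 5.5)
Your proof is correct and follows essentially the same route as the paper: an induction on $t$ that tracks both the agreement of the inclusion decisions for $y\notin\cT(x;\cD^\ca)$ and the identification of the surviving calibration indices with $\cA_t$. It then concludes by noting that $Y_{n+1}\in\cT(X_{n+1};\cD^\ca)$ forces a score tie, which has probability zero. The one addition, your edge case $c_t>|\cA_{t-1}|$, never arises: $\alpha<1$ gives $c_{\rm tot}=\lfloor(n+1)\alpha\rfloor\le n$, and then induction shows $|\cA_{t-1}|=n-\sum_{r<t}c_r\ge c_t$, so your $-\infty$ convention is harmless but unnecessary.
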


Proposition~\ref{prop:equiv-coin} shows that the two constructions make identical
inclusion decisions for the true outcome almost surely. Thus, the
calibration-only sequence has the same AMR guarantee as the candidate-wise
construction. More importantly for computation, the recursion
in~\eqref{eq:calibration-only-coin-thresholds} is run only once for a fixed
calibration sample and allocation; the resulting thresholds can then be
reused for every test point and candidate outcome. Constructing a new
prediction sequence therefore requires only evaluating the score functions,
comparing them with the stored thresholds, and forming the intersections.
Henceforth, we use this threshold representation to implement \coinm and
write $\hcC_t^\coin$ for
$\hcC_t^{\rm thr}$. Algorithm~\ref{alg:coinprocedure} in
Appendix~\ref{asubsec:coin-algorithm} summarizes the recursion.

\section{Utility-Oriented Sequential Design}\label{sec:opt-via-erm}

\coinm defines a family of any-stage-valid prediction sequences indexed by
stagewise rejection-count allocations. The preferred allocation depends on
the downstream objective: terminal-set size may favor investment at a later
stage, whereas an early-resolution or cost-sensitive utility may favor
earlier investment. Different process-level utilities can therefore lead to
different prediction-set trajectories and optimal allocations.

We refer to selecting an allocation for a specified process-level utility as
\emph{utility-oriented sequential design} and formulate it as empirical risk
minimization (ERM), with negative process-level utility as the loss.
Section~\ref{ssec:vopt-coinm} first develops a sample-splitting procedure for
learning the allocation while preserving finite-sample any-stage validity.
Section~\ref{ssec:population-utility-loss} then introduces a population benchmark
and quantifies the utility gap between the learned and population-optimal
allocations.

To parameterize the allocation independently of the calibration sample size,
let
\[
    \bvrho=(\varrho_1,\ldots,\varrho_T)\in\Theta_{\bvrho},
    \qquad
    \Theta_{\bvrho}
    \coloneqq
    \big\{
        \mathbf v\in\bbR_{\scriptscriptstyle\geq 0}^T:
        \sum_{t=1}^T v_t=1
    \big\}.
\]
Define the cumulative allocation proportions by
$
    \beta_t(\bvrho)
    \coloneqq
    \sum_{r=1}^t\varrho_r,
    \; t\in[T],
    \;
    \beta_0(\bvrho)\coloneqq0.
$
For a calibration sample of size~$n$, $\bvrho$ induces the rejection-count
allocation
$
    \bc(\bvrho;n)
    =
    \bigl(c_1(\bvrho;n),\ldots,c_T(\bvrho;n)\bigr),
$
where
$
    c_t(\bvrho;n)
    \coloneqq
    \left\lfloor(n+1)\alpha\beta_t(\bvrho)\right\rfloor
    -
    \left\lfloor(n+1)\alpha\beta_{t-1}(\bvrho)\right\rfloor,
    \; t\in[T].
$
The counts telescope to
$
    \sum_{t=1}^T c_t(\bvrho;n)
    =
    \lfloor(n+1)\alpha\rfloor
    =
    c_{\rm tot},
$
so every $\bvrho\in\Theta_{\bvrho}$ induces an admissible \coinm allocation
with the required total count.

\subsection{\vocoinm: utility-oriented allocation learning}\label{ssec:vopt-coinm}

\vocoinm combines \coinm with a data-driven allocation while retaining
finite-sample validity. Here, ``V'' denotes validity and ``Opt'' denotes
utility optimization. We randomly partition the calibration data $\cD^\ca$
into a strategy-learning sample $\cD^\ca_1$ and a final-calibration sample
$\cD^\ca_2$, of sizes $n_1$ and $n_2$, respectively. For simplicity, write
$\cD^\ca_1=\{(X_i,Y_i)\}_{i=1}^{n_1}$ and
$\cD^\ca_2=\{(X_i,Y_i)\}_{i=n_1+1}^{n}$.

Let $\hcC_t^\coin(x;\cD,\bc)$ denote the stage-$t$ prediction set
produced by \coinm using calibration data $\cD$ and allocation $\bc$. We
learn the allocation proportions on $\cD^\ca_1$ by solving
\begin{equation}\label{eq:learn-allocation-d1}
    \hbvrho^{\rm s} \in 
    \argmin{\bvrho \in \Theta_{\bvrho}}
    \frac{1}{n_1}
    \sum_{i = 1}^{n_1}
    -u\left(
        \big\{
            \hcC_t^\coin(X_i; \cD^\ca_1, \bc(\bvrho; n_1))
        \big\}_{t=1}^T
    \right),
\end{equation}
and call $\hbvrho^{\rm s}$ the \emph{ERM allocation}; equivalently,
\eqref{eq:learn-allocation-d1} maximizes empirical process-level utility.
Using $\cD^\ca_2$ for final calibration, \vocoinm reports
$\{\hcC_t^\coin(X_{n+1};\cD^\ca_2,
\bc(\hbvrho^{\rm s};n_2))\}_{t=1}^T$.

\begin{corollary}[Finite-sample validity of \vocoinm]
    \label{cor:validity-voptcoins}
    Under Assumption~\ref{cond:exchangeability}, \vocoinm satisfies
    \begin{equation*}
        \pr\Bigl\{
            \exists\,t\in[T]:
            Y_{n+1}\notin
            \hcC_t^\coin
            \bigl(X_{n+1};\cD^\ca_2,\bc(\hbvrho^{\rm s};n_2)\bigr)
        \Bigr\}
        =
        \frac{\lfloor (n_2 + 1) \alpha \rfloor}{n_2 + 1}.
    \end{equation*}
\end{corollary}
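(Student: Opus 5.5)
The plan is to condition on the strategy-learning sample $\cD^\ca_1$ and reduce the statement to Theorem~\ref{the:errorcontrolcoin} applied to $\cD^\ca_2$ with a fixed allocation.

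Because the partition of $\cD^\ca$ into $\cD^\ca_1$ and $\cD^\ca_2$ is random and independent of the data, Assumption~\ref{cond:exchangeability} implies that, conditional on $\mathbb{S}$ and on $\cD^\ca_1$, the $n_2+1$ observations $\{Z_i\}_{i=n_1+1}^{n}\cup\{Z_{n+1}\}$ are exchangeable. I would check this directly. Any permutation $\pi$ of $[n+1]$ that fixes the indices in $\cD^\ca_1$ preserves the joint law of $(Z_1,\ldots,Z_{n+1})$ given $\mathbb{S}$. Disintegrating with respect to $(Z_1,\ldots,Z_{n_1})$ then gives exchangeability of the remaining block conditional on $(\mathbb{S},\cD^\ca_1)$, for almost every realization. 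The no-ties condition carries over to this subsample trivially.

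The learned allocation $\hbvrho^{\rm s}$ from \eqref{eq:learn-allocation-d1} is a measurable function of $\cD^\ca_1$ alone. If the argmin is not unique, I would fix a measurable selection, possibly using independent auxiliary randomness. The count vector $\bc(\hbvrho^{\rm s};n_2)$ is therefore fixed given $\cD^\ca_1$. By the telescoping identity it consists of nonnegative integers summing to $\lfloor(n_2+1)\alpha\rfloor$, so it is an admissible \coinm allocation for calibration size $n_2$.

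Next I apply Theorem~\ref{the:errorcontrolcoin} with $n$ replaced by $n_2$, the calibration sample $\cD^\ca_2$, and the pre-specified budgets $\bc(\hbvrho^{\rm s};n_2)$. To fit its hypotheses, I enlarge the conditioning $\sigma$-field $\mathbb{S}$ to $\mathbb{S}\vee\sigma(\cD^\ca_1)$; the score functions stay fixed under this conditioning. The theorem then gives the conditional AMR exactly $\lfloor(n_2+1)\alpha\rfloor/(n_2+1)$, almost surely. Since this constant does not depend on $\cD^\ca_1$, taking expectations via the tower property yields the claimed unconditional equality.

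The main obstacle is rigor in the conditioning step: I must confirm that Theorem~\ref{the:errorcontrolcoin} holds under conditional exchangeability relative to an arbitrary $\sigma$-field, not only $\sigma(s_1,\ldots,s_T)$. I expect this to be immediate, because its proof uses only exchangeability and the fixed counts, which are all $\mathbb{S}\vee\sigma(\cD^\ca_1)$-measurable. A secondary point is that \vocoinm uses the threshold implementation; Proposition~\ref{prop:equiv-coin} shows it agrees with the candidate-wise construction at $Y_{n+1}$ almost surely, so the AMR is unchanged.
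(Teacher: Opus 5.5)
Your proposal is correct and follows the paper's proof: condition on $\cD^\ca_1$ and the score functions, so that $\bc(\hbvrho^{\rm s};n_2)$ is a fixed admissible allocation and the final-calibration and test observations remain exchangeable, then apply Theorem~\ref{the:errorcontrolcoin} with calibration size $n_2$. Your added details (the permutation argument for conditional exchangeability, the measurable selection of the argmin, and the appeal to Proposition~\ref{prop:equiv-coin} for the threshold implementation) are sound refinements of steps the paper leaves implicit.
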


\begin{remark}
    Sample splitting keeps allocation learning separate from final
    calibration, but reduces the final calibration size and introduces split
    variability. Full-sample alternatives can improve data efficiency, but
    generally require either asymptotic validity or an additional calibration
    step \citep{yang2025selection,xu2025aggregating,liang2024conformal}.
\end{remark}

\subsection{Population benchmark and utility guarantee}
\label{ssec:population-utility-loss}

The population construction provides an oracle benchmark for evaluating the
allocation learned by \vocoinm. Replacing the empirical survivor thresholds
with their population counterparts defines the best allocation attainable
within the \coinm family when the data-generating distribution is known.
Comparison with this oracle quantifies the statistical cost of learning the
allocation from finite data; the population construction is an analytical
benchmark rather than an additional procedure.

Throughout this subsection, we impose the following i.i.d.\ sampling
assumption, which is standard in ERM analysis \citep{vapnik2013nature}.

\begin{assumption}[I.i.d.\ sampling for utility analysis]
    \label{cond:independent}
    $\{(X_i,Y_i)\}_{i=1}^{n+1}$ are i.i.d.\ from
    $P_X\times P_{Y\mid X}$ and are independent of the score functions
    $\{s_t\}_{t=1}^T$.
\end{assumption}

Conditional on the score functions, let $S_t=s_t(X,Y)$ for $(X,Y)\sim
P_{XY}$. For a nonnegative stagewise budget vector
$\balp=(\alpha_1,\ldots,\alpha_T)$ satisfying
$\sum_{t=1}^T\alpha_t<1$ 
and for each $t\in[T]$, define the population nested quantiles recursively
by
\begin{equation*}
    q_t(\balp)
    \coloneqq
    \begin{cases}
        +\infty,
        & \alpha_t=0,\\[3pt]
        \displaystyle
        \inf\bigg\{
            v\in\overline{\bbR}:
            \pr_{P_{XY}}\left(
                S_t\leq v,\,
                S_r\leq q_r(\balp)\ \text{for all }r<t
            \right)
            \geq
            1-\displaystyle\sum_{r=1}^t\alpha_r
        \bigg\},
        & \alpha_t>0,
    \end{cases}.
\end{equation*}
Write $\bq(\balp)=(q_1(\balp),\ldots,q_T(\balp))$. For a threshold
sequence $\btau=(\tau_1,\ldots,\tau_T)\in\overline{\bbR}^T$, define
\[
    \cC_t(x;\btau)
    \coloneqq
    \bigcap_{r=1}^t
    \{y\in\cY:s_r(x,y)\leq\tau_r\},
    \qquad t\in[T].
\]
The population counterpart of \coinm under allocation proportions $\bvrho$
is therefore
\begin{equation}\label{eq:population-coin}
    \cC_t^\coin(x;\alpha\bvrho)
    \coloneqq
    \cC_t\bigl(x;\bq(\alpha\bvrho)\bigr),
    \qquad t\in[T].
\end{equation}
Conceptually, stage $t$ removes an $\alpha\varrho_t$ probability mass from
the population surviving the preceding stages.

Define the population utility of a threshold sequence by
\[
    \cU(\btau)
    \coloneqq
    \E_{X\sim P_X}
    \left[
        u\left(\{\cC_t(X;\btau)\}_{t=1}^T\right)
    \right].
\]
The population-optimal allocation is any solution $\bvrho^*$ to
\begin{equation}\label{eq:population-risk-investment}
    \bvrho^*
    \in
    \argmin{\bvrho\in\Theta_{\bvrho}}
    -\cU\bigl(\bq(\alpha\bvrho)\bigr).
\end{equation}
This optimality is relative to the population \coinm family indexed by
$\Theta_{\bvrho}$, rather than to all prediction sequences satisfying the
AMR constraint.

For a calibration dataset $\cD$, write
$\hbtau(\bc;\cD)=\{\htau_t(\bc;\cD)\}_{t=1}^T$ for the empirical threshold
sequence obtained from the recursion in Section~\ref{ssec:implementcoin},
with the dataset included as an argument to avoid ambiguity. Define its
empirical utility by
\[
    \hcU(\btau;\cD)
    \coloneqq
    \frac{1}{|\cD|}
    \sum_{(x,y)\in\cD}
    u\left(\{\cC_t(x;\btau)\}_{t=1}^T\right).
\]
Then \eqref{eq:learn-allocation-d1} minimizes
$-\hcU\bigl(\hbtau(\bc(\bvrho;n_1);\cD_1^\ca);\cD_1^\ca\bigr)$ over
$\bvrho\in\Theta_{\bvrho}$. Let
\[
    \Delta_{n_1}
    \coloneqq
    \sup_{\btau\in\overline{\bbR}^T}
    \left|\cU(\btau)-\hcU(\btau;\cD_1^\ca)\right|.
\]

The key technical ingredient is a uniform bracketing result for the empirical
\coinm thresholds. Let
$a_{n_1}\coloneqq5\sqrt{T\log(n_1)/n_1}$ and
$\iota_T\coloneqq(1,\ldots,T)$. Proposition~\ref{prop:conservativequantile}
in the supplement shows that, under the conditions below, the empirical
thresholds lie between population nested quantiles evaluated under the two
opposite, vanishing budget perturbations:
\[
    q_t\bigl((\alpha\bvrho-a_{n_1}\iota_T)_+\bigr)
    \geq
    \htau_t\bigl(\bc(\bvrho;n_1);\cD_1^\ca\bigr)
    \geq
    q_t\bigl(\alpha\bvrho+a_{n_1}\iota_T\bigr),
\]
uniformly over $\bvrho\in\Theta_{\bvrho}$ and $t\in[T]$. Combining this
comparison with uniform convergence of the empirical utility yields the
following guarantee.

\begin{theorem}[Shifted oracle inequality for \vocoinm]
    \label{the:shifted-oracle-inequality}
    Suppose that Assumption~\ref{cond:independent} holds and that the score
    values are almost surely free of ties. Assume that
    $n_1>15$, $1<T<n_1\log(n_1)/9$, and
    $a_{n_1}T(T+1)/2<1-\alpha$. If $u$ is bounded and set-wise monotone and
    Problem~\eqref{eq:population-risk-investment} admits a solution
    $\bvrho^*$, then, with probability at least $1-n_1^{-1}$,
    \begin{equation*}
        \cU\Bigl(
            \bq\bigl((\alpha\bvrho^*-a_{n_1}\iota_T)_+\bigr)
        \Bigr)
        -
        \cU\Bigl(
            \bq\bigl(\alpha\hbvrho^{\rm s}+a_{n_1}\iota_T\bigr)
        \Bigr)
        \leq
        2\Delta_{n_1}.
    \end{equation*}
\end{theorem}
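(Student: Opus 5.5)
We combine the bracketing result of Proposition~\ref{prop:conservativequantile} with set-wise monotonicity of $u$ and the ERM optimality of $\hbvrho^{\rm s}$. Throughout we work on the event $\cE$ on which the bracketing
\[
    q_t\bigl((\alpha\bvrho-a_{n_1}\iota_T)_+\bigr)
    \geq
    \htau_t\bigl(\bc(\bvrho;n_1);\cD_1^\ca\bigr)
    \geq
    q_t\bigl(\alpha\bvrho+a_{n_1}\iota_T\bigr)
\]
holds uniformly over $\bvrho\in\Theta_{\bvrho}$ and $t\in[T]$. We take from the supplement that $\pr(\cE)\ge 1-n_1^{-1}$ under the stated conditions $n_1>15$, $1<T<n_1\log(n_1)/9$, and $a_{n_1}T(T+1)/2<1-\alpha$. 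The last condition guarantees that the perturbed budget $\alpha\bvrho+a_{n_1}\iota_T$ has total below $1$, so the nested quantiles are well defined.

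First we record a monotonicity fact. If two threshold sequences satisfy $\btau\le\btau'$ coordinatewise, then $\cC_t(x;\btau)\subseteq\cC_t(x;\btau')$ for every $t$ and $x$. Set-wise monotonicity~\eqref{eq:monotone-utility} then gives $u(\{\cC_t(x;\btau)\})\ge u(\{\cC_t(x;\btau')\})$ pointwise in $x$. Hence both $\cU$ and $\hcU(\cdot;\cD_1^\ca)$ are nonincreasing in the thresholds.

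The main chain runs on $\cE$. Write $\hbtau(\bvrho)$ for $\hbtau(\bc(\bvrho;n_1);\cD_1^\ca)$.
\begin{itemize}
    \item Apply the upper bracket at $\hbvrho^{\rm s}$: $\hbtau(\hbvrho^{\rm s})\ge\bq(\alpha\hbvrho^{\rm s}+a_{n_1}\iota_T)$. Monotonicity of $\cU$ gives
    \[
        \cU\bigl(\bq(\alpha\hbvrho^{\rm s}+a_{n_1}\iota_T)\bigr)\ge\cU\bigl(\hbtau(\hbvrho^{\rm s})\bigr).
    \]
    \item The definition of $\Delta_{n_1}$ then gives $\cU(\hbtau(\hbvrho^{\rm s}))\ge\hcU(\hbtau(\hbvrho^{\rm s});\cD_1^\ca)-\Delta_{n_1}$.
    \item ERM optimality~\eqref{eq:learn-allocation-d1} gives $\hcU(\hbtau(\hbvrho^{\rm s});\cD_1^\ca)\ge\hcU(\hbtau(\bvrho^*);\cD_1^\ca)$.
    \item A second use of $\Delta_{n_1}$ gives $\hcU(\hbtau(\bvrho^*);\cD_1^\ca)\ge\cU(\hbtau(\bvrho^*))-\Delta_{n_1}$.
    \item Apply the lower bracket at $\bvrho^*$: $\hbtau(\bvrho^*)\le\bq((\alpha\bvrho^*-a_{n_1}\iota_T)_+)$. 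Monotonicity then gives
    \[
        \cU(\hbtau(\bvrho^*))\ge\cU\bigl(\bq((\alpha\bvrho^*-a_{n_1}\iota_T)_+)\bigr).
    \]
\end{itemize}
Chaining these inequalities yields the stated bound with $2\Delta_{n_1}$.

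There are two technical points. First, we must check that $\Delta_{n_1}$ is evaluated at the random thresholds $\hbtau(\cdot)$, which depend on $\cD_1^\ca$. This is not a problem: $\Delta_{n_1}$ is defined as a supremum over all $\btau\in\overline{\bbR}^T$, so it covers any data-dependent choice. Second, boundedness of $u$ ensures that $\cU$ and $\hcU$ are finite.

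The main obstacle is the bracketing itself, which we take from the supplement. Were we to prove it, we would proceed by induction over $t$. At each stage we would apply a uniform DKW/VC-type bound to the class of events $\{S_t\le v, S_r\le q_r \ \forall r<t\}$, whose VC dimension is of order $T$; this is where the $\sqrt{T\log n_1/n_1}$ rate arises. The errors accumulate across stages, which produces the growing perturbation $a_{n_1}\iota_T=(a_{n_1},2a_{n_1},\ldots)$. We would also absorb the floor rounding in $c_t(\bvrho;n_1)$ into $a_{n_1}$.
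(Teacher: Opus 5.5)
Your proof is correct and uses the same ingredients as the paper's proof: the uniform bracketing of Proposition~\ref{prop:conservativequantile} combined with set-wise monotonicity, two uses of the uniform deviation $\Delta_{n_1}$, and the ERM optimality of $\hbvrho^{\rm s}$. The only difference is the order of the steps. The paper applies the monotonicity sandwich to $\hcU(\cdot;\cD_1^\ca)$ and invokes $\Delta_{n_1}$ at the population nested quantiles, whereas you apply monotonicity to $\cU$ and invoke $\Delta_{n_1}$ at the data-dependent thresholds $\hbtau$; this is equally valid because, as you note, $\Delta_{n_1}$ is a supremum over all $\btau\in\overline{\bbR}^T$.
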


Theorem~\ref{the:shifted-oracle-inequality} separates two sources of utility
loss: estimating the survivor thresholds and replacing population utility by
empirical utility. For fixed $T$, the budget perturbations vanish as
$n_1\to\infty$, and the shifted utility gap is $o_p(1)$ whenever
$\Delta_{n_1}=o_p(1)$. If the population utility is uniformly stable under
these vanishing perturbations---for example, if
$\balp\mapsto\cU\bigl(\bq(\balp)\bigr)$ is uniformly continuous over the relevant
budget vectors---the result further implies
$
    \cU\bigl(\bq(\alpha\bvrho^*)\bigr)
    -
    \cU\bigl(\bq(\alpha\hbvrho^{\rm s})\bigr)
    =o_p(1).
$

Section~\ref{asec:uniform-convergence-delta} of the supplement verifies the
required uniform convergence in two representative settings. For bounded
utilities in classification with $\cY=[M]$ and $0\le u(\cdot)\le U_{\rm max}$, 
$\Delta_{n_1}=O_p\{U_{\rm max}\sqrt{T\log(n_1M)/n_1}\}$; for terminal
set-size utility on a response space of measure or cardinality $M$,
$\Delta_{n_1}=O_p\{M\sqrt{T\log(n_1)/n_1}\}$. These results cover the
early-resolution utility in sequential feature acquisition and the terminal-set utility in
multiple score aggregation, respectively.

\section{Extensions: Adaptive Acquisition and Localization}\label{sec:generality}

We extend \coinm in two directions. Section~\ref{ssec:branchwise-coinm}
allows the acquisition path to depend on previously observed features, while
Section~\ref{ssec:locally-adaptive} uses test-specific localization to adapt
calibration and allocation. Both extensions retain finite-sample any-stage
validity.

\subsection{Branchwise \coinm for covariate-dependent acquisition}\label{ssec:branchwise-coinm}

In sequential inference, previously observed information may determine which
measurement is acquired next. In clinical diagnosis, for example, initial
clinical features may guide the choice of a subsequent laboratory or imaging
examination \citep{von2025evaluation}. We extend \coinm to accommodate such
covariate-dependent acquisition paths.

Consider a two-stage problem with
$X=(X^1,X^2)\in\cX^1\times\cX^2$, where $X^1$ is observed at stage~1.
At stage~2, the procedure acquires a selected transformation of $X^2$, with
the selection determined by $X^1$. Specifically, let
$\{g_j\}_{j=1}^J$ be a finite collection of acquisition maps and let
$\kappa:\cX^1\to[J]$
be a routing rule, so that an observation with first-stage feature $x^1$ reveals $g_{\kappa(x^1)}(X^2)$ at the second stage.
Let $\{s_{2,j}\}_{j=1}^J$ be the associated branch-specific score functions,
where $s_{2,j}$ uses only $x^1$ and $g_j(x^2)$.
For $z=(x^1,x^2,y)$, define the composite second-stage score
$s_2^\ad(z) \coloneqq s_{2,\kappa(x^1)}
\big(
    x^1,
    g_{\kappa(x^1)}(x^2),
    y
\big)$.

A direct approach replaces $s_2$ with $s_2^\ad$ in the ordinary \coinm
selection rule~\eqref{eq:coin-selection-set}. Because $s_2^\ad$ is a single
prespecified measurable score function, this pooled construction retains
finite-sample validity. It may nevertheless be inefficient because it compares
raw scores across branches whose scales need not be comparable.
For example, if the first branch produces scores in $(0,1)$ and the second
produces scores in $(1,10)$, a pooled ranking will typically reject only
observations assigned to the second branch.

To accommodate branch-specific score scales, branchwise \coinm ranks
observations only within their respective branches.
Let $c_1,c_2\in\mathbb{Z}_{\geq0}$ satisfy
$c_1+c_2=c_{\rm tot}$, and apply ordinary \coinm with budget $c_1$ at
stage~1. At stage~2, divide the budget $c_2$ across branches by setting
\begin{equation*}
    c_{2,j} \coloneqq
    \bigg\lfloor 
        \frac{ c_2 \sum_{i = 1}^{n+1} \bbI \{\kappa(X_i^1) \le j\} }{n+1}
    \bigg\rfloor 
    - 
    \bigg\lfloor 
        \frac{ c_2 \sum_{i = 1}^{n+1} \bbI \{\kappa(X_i^1) \le j-1\} }{n+1}
    \bigg\rfloor, 
    \qquad 
    j \in [J]. 
\end{equation*}
For each branch $j\in[J]$, select the $c_{2,j}$ largest second-stage scores
among the currently unrejected observations in that branch:
\begin{equation}
    \label{eq:adaptive-branch-selection}
    \begin{aligned}
        \cS_{2,j}^{y,\mathrm{br}}
        :=
        \bigg\{
            i \in [n+1]: 
            & \, 
            \kappa(X_{i}^1)=j,
            \delta_1^y(Z_i^y) = 0, 
            \\ 
            \sum_{i' = 1}^{n+1}
            \bbI\Big\{
                & \, 
                \kappa(X_{i'}^1)=j,
                \delta_1^y(Z_{i'}^y) = 0,
                s_2^\ad (Z_{i'}^y)
                \ge
                s_2^\ad (Z_i^y)
            \Big\}
            \le c_{2,j}
        \bigg\},
    \end{aligned}
\end{equation}
and set
$\cS_2^{y,\mathrm{br}}\coloneqq
\bigcup_{j=1}^J\cS_{2,j}^{y,\mathrm{br}}$.
Applying the update and inversion rules
in~\eqref{eq:coinupdatestrategy} and~\eqref{eq:coin-prediction-set-seq}
then yields $\hcC_t^{\mathrm{br}}(X_{n+1})$.

\begin{proposition}[Validity of branchwise \coinm]
    \label{prop:adaptive-feature-acquisition-validity}
    Under Assumption~\ref{cond:exchangeability}, suppose that the routing
    rule, acquisition maps, and branch-specific score functions are
    prespecified. Then branchwise \coinm satisfies
    \[
        \pr\big\{
            \exists\,t\in[2]:
            Y_{n+1}\notin\hcC_t^{\mathrm{br}}(X_{n+1})
        \big\}
        \leq\alpha.
    \]
\end{proposition}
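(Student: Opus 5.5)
The plan is to cast branchwise \coinm as a sequential exclusion process on the augmented sample $\ocD(y)$. I will then check Conditions~\ref{p1:nondecreasing} and~\ref{p3:per-inv} directly, and replace Condition~\ref{p2:rejection-count} by its inequality version $\sum_{i=1}^{n+1}\delta_2^y(Z_i^y)\le c_{\rm tot}$. This matches the relaxation in the Remark following Theorem~\ref{the:universality-principles}. The final bound then comes from a permutation-equivariance argument of the type underlying Theorem~\ref{the:basicresult}. The inequality rather than equality is unavoidable, for the following reason. Stage~1 is ordinary \coinm, so it excludes exactly $c_1$ observations. At stage~2, branch $j$ may contain fewer than $c_{2,j}$ survivors, in which case~\eqref{eq:adaptive-branch-selection} excludes all of them, i.e.\ $\min\{c_{2,j},\#\text{survivors in branch }j\}$ observations. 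Since the floors telescope, $\sum_j c_{2,j}=\lfloor c_2(n+1)/(n+1)\rfloor=c_2$, and hence the total number of exclusions is at most $c_1+c_2=c_{\rm tot}$.

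Condition~\ref{p1:nondecreasing} follows at once from the cumulative update~\eqref{eq:coinupdatestrategy}. For Condition~\ref{p3:per-inv}, I will check that every ingredient of the stage-2 rule is a function of observation values and of $\cF_1^y$. The branch label $\kappa(x^1)$ is a prespecified function of an observation's value. The budgets $c_{2,j}$ depend only on the multiset $\Lbag X_i^1\Rbag$, which is $\cF_0^y$-measurable. The survivor multiset is determined by $\cF_1^y$, and $s_2^\ad$ is a single prespecified measurable score, revealed in $\cF_1^y$ through $s_2$. Moreover, $X_{n+1}^1$ does not depend on $y$, so the same value-based rule is applied for every candidate outcome. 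No-ties within branches is inherited from Assumption~\ref{cond:exchangeability} applied to $s_2^\ad$.

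The core step is to show $\pr\{n+1\in\cR_2^{Y_{n+1}}\}\le c_{\rm tot}/(n+1)$ at the true outcome $y=Y_{n+1}$, where $\ocD(Y_{n+1})=(Z_1,\ldots,Z_{n+1})$. I will first establish permutation equivariance. Let $\pi$ be any permutation of $[n+1]$ and apply the procedure to $(Z_{\pi(1)},\ldots,Z_{\pi(n+1)})$. Because every decision depends on observations only through their values and the multiset-level quantities above, the exclusion set of the permuted data is $\pi^{-1}(\cR_2)$. Combining this with exchangeability conditional on $\mathbb S$ gives $\pr\{n+1\in\cR_2\}=\pr\{i\in\cR_2\}$ for every $i\in[n+1]$. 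Summing over $i$ then yields
\[
(n+1)\,\pr\{n+1\in\cR_2\}=\E|\cR_2|\le c_{\rm tot}=\lfloor(n+1)\alpha\rfloor.
\]
Finally, by nestedness, the event $\{\exists\,t\in[2]:Y_{n+1}\notin\hcC_t^{\mathrm{br}}(X_{n+1})\}$ equals $\{n+1\in\cR_2^{Y_{n+1}}\}$, and the bound $\alpha$ follows.

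I expect the main obstacle to be making the equivariance claim rigorous when the stage-2 budgets are themselves data-dependent through the branch counts. The key point to verify carefully is that these counts, and the survivor counts per branch, are symmetric functions of the augmented sample and are therefore unchanged under $\pi$. A further subtlety is the possibility of ties in $s_2^\ad$ across branches. This does not matter, since ranking in~\eqref{eq:adaptive-branch-selection} is only within a branch, where the scores share a single function and Assumption~\ref{cond:exchangeability} rules out ties.
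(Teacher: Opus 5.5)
Your proposal is correct and follows essentially the paper's proof. The paper likewise bounds $|\cR_2^y|\le c_1+\sum_j|\cS_{2,j}^{y,\mathrm{br}}|\le c_1+\sum_j c_{2,j}=c_{\rm tot}$ via the telescoping branch budgets, retains Conditions~\ref{p1:nondecreasing} and~\ref{p3:per-inv}, and concludes $\pr\{n+1\in\cR_2^{Y_{n+1}}\}=\E|\cR_2^{Y_{n+1}}|/(n+1)\le c_{\rm tot}/(n+1)\le\alpha$ from the symmetry of the test index, which is exactly your permutation-equivariance identity. (Your concern about ties is unnecessary: the rule in~\eqref{eq:adaptive-branch-selection} is value-based, so it stays equivariant and never selects more than $c_{2,j}$ observations even when ties occur.)
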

                                                                  
Proposition~\ref{prop:adaptive-feature-acquisition-validity} assumes a
prespecified routing rule. Developing data-driven routing rules with
efficiency guarantees remains an open problem.

\subsection{Locally adaptive \coinm}\label{ssec:locally-adaptive}

Global calibration may be inefficient when predictive difficulty varies across
the covariate space. We therefore localize \coinm by assigning greater weight
to calibration observations near the test point \citep{guan2023localized}.
Let $H:\cX\times\cX\to\bbR_{\scriptscriptstyle\geq0}$ be a localizer;
common choices on $\cX\subseteq\bbR^d$ include Gaussian and box kernels.
\begin{assumption}\label{cond:localizer}
    The localizer $H$ is nonnegative and satisfies
    $\int_{\cX}H(x,x')\,\rmd x'=1$ for every $x\in\cX$.
\end{assumption}

Following \citet{hore2025conformal}, draw an auxiliary localization point
$\tX_{n+1}$ from the density $H(X_{n+1},\cdot)$ and set
$w(x)=H(x,\tX_{n+1})$. We call the resulting weighted procedure
\textbf{r}andomly \textbf{l}ocalized \coinm (\rlcoinm). Conditional on
$\tX_{n+1}$, its total weighted budget and stagewise budgets are
\begin{equation*}
    c_{\rm tot}^\rlcoin
    \coloneqq \alpha\sum_{i=1}^{n+1}H(X_i,\tX_{n+1}),
    \qquad
    c_t^\rlcoin
    \coloneqq c_{\rm tot}^\rlcoin\varrho_t,
    \quad t\in[T].
\end{equation*}

Using these weights and budgets in the weighted \coinm construction of
Appendix~\ref{assec:principles-coinm-weighted} yields the prediction sequence
\begin{equation*}
    \hcC_t^\rlcoin(X_{n+1},\tX_{n+1})
    \coloneqq
    \big\{
        y\in\cY:
        \delta_t^{y,\rlcoin}(Z_{n+1}^y)=0
    \big\},
    \qquad t\in[T].
\end{equation*}
The randomized localizer permits test-specific weighting while preserving the
following finite-sample guarantee.

\begin{corollary}[Finite-sample validity of \rlcoinm]
    \label{cor:validity-rlcoinm}
    Suppose that $\{(X_i,Y_i)\}_{i=1}^{n+1}$ are i.i.d.\ from
    $P_X\times P_{Y\mid X}$, the score functions are trained on independent
    data, and $\bvrho$ is independent of the augmented sample. Under
    Assumption~\ref{cond:localizer}, \rlcoinm satisfies, almost surely,
    \[
        \pr\Bigl\{
            \exists\,t\in[T]:
            Y_{n+1}\notin
            \hcC_t^\rlcoin(X_{n+1},\tX_{n+1})
            \mathrel{\Big|}\tX_{n+1}
        \Bigr\}
        \leq\alpha.
    \]
    Consequently,
    \[
        \pr\big\{
            \exists\,t\in[T]:
            Y_{n+1}\notin
            \hcC_t^\rlcoin(X_{n+1},\tX_{n+1})
        \big\}
        \leq\alpha.
    \]
\end{corollary}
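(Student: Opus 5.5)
The plan is to condition on the auxiliary localization point $\tX_{n+1}$ and show that, conditionally, the augmented sample behaves like a weighted exchangeable sample with known weights $w(x)=H(x,\tX_{n+1})$. The marginal statement then follows by the tower property.

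\textbf{Step 1: conditional law of the augmented sample.} Following \citet{hore2025conformal}, I will compute the joint density of $(Z_1,\ldots,Z_{n+1},\tX_{n+1})$. Under the i.i.d.\ assumption and since $\tX_{n+1}\mid X_{n+1}$ has density $H(X_{n+1},\cdot)$, this density is
\[
\prod_{i=1}^{n+1}p(Z_i)\cdot H(X_{n+1},\tX_{n+1}).
\]
Consequently, conditional on $\tX_{n+1}=\tilde x$ and on the multiset $\Lbag Z_1,\ldots,Z_{n+1}\Rbag$, the test point is the element with index $j$ with probability
\[
\frac{H(X_j,\tilde x)}{\sum_{i=1}^{n+1} H(X_i,\tilde x)}=\frac{w(X_j)}{\sum_{i} w(X_i)}.
\]
The product $\prod_i p(Z_i)$ is symmetric, so only the factor $H$ breaks the symmetry. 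Assumption~\ref{cond:localizer} guarantees that $\tX_{n+1}$ has a proper density and that this Bayes computation is legitimate. Because the score functions and $\bvrho$ are independent of the sample, they can be treated as fixed throughout.

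\textbf{Step 2: weighted any-stage validity.} Next I invoke the weighted \coinm construction of Appendix~\ref{assec:principles-coinm-weighted}, which I take to satisfy weighted analogues of \ref{p1:nondecreasing}--\ref{p3:per-inv}.
\begin{itemize}
\item[(a)] Exclusions are monotone.
\item[(b)] The total weight of the excluded observations at horizon $T$ is at most $c_{\rm tot}^\rlcoin=\alpha\sum_i w(X_i)$.
\item[(c)] Decisions are survivor-symmetric functions of values, applied identically for all $y$.
\end{itemize}
At the true outcome $y=Y_{n+1}$, let $\cR_T$ be the final excluded set. By symmetry (c), $\cR_T$ is determined by the multiset, independently of which element carries the test label. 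Nestedness from (a) reduces the AMR event to $n+1\in\cR_T$. Therefore
\[
\pr\{\exists t: Y_{n+1}\notin\hcC_t^\rlcoin \mid \tX_{n+1},\Lbag\cdot\Rbag\}
=\sum_{j\in\cR_T}\frac{w(X_j)}{\sum_i w(X_i)}\le\alpha,
\]
where the inequality uses (b). Taking the expectation over the multiset yields the conditional almost-sure bound. Taking a further expectation over $\tX_{n+1}$ yields the marginal bound.

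\textbf{Main obstacle.} The delicate point is Step 2(b)--(c): the weighted construction must keep the excluded total weight at most the budget even though stagewise budgets $c_{\rm tot}^\rlcoin\varrho_t$ are fractional. I would handle this by having each stage exclude, among survivors, the largest-score observations greedily only while the cumulative excluded weight stays within $c_{\rm tot}^\rlcoin\beta_t(\bvrho)$, possibly with randomization at the boundary to attain it exactly. I also need this rule to depend only on $\cF_{t-1}$-type information, namely values and weights of survivors, and not on the test index. One further point requires care: the weights depend on $X_{n+1}$ only through the multiset given $\tX_{n+1}$, so $c_{\rm tot}^\rlcoin$ is a symmetric function. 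Once that is verified, the counting identity above replaces the uniform $1/(n+1)$ argument of Theorem~\ref{the:basicresult}.
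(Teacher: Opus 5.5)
Your argument is correct, but it is organized differently from the paper's. The paper's proof is a pure reduction. Conditional on $\tX_{n+1}$, the calibration pairs remain i.i.d.\ from $P_X\times P_{Y\mid X}$. By Bayes' rule, the test covariate law has density ratio to $P_X$ proportional to $H(\cdot,\tX_{n+1})$. So Assumption~\ref{cond:covariate-shift} holds with $w=H(\cdot,\tX_{n+1})$, and Theorem~\ref{the:validity-covariate-shift} applies. That theorem in turn rests on Theorem~\ref{the:basicresult-weighted}, whose proof is a stage-by-stage backward induction over the survivor filtration.

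Your Step~1 is the same Bayes computation, carried through to the weighted-exchangeability statement given the augmented multiset. Your Step~2 replaces the backward induction by a one-shot count. Weighted \coinm is deterministic and purely value-based, and its budgets are symmetric functions of the multiset given $\tX_{n+1}$. Hence the multiset of excluded observations is determined by $\Lbag\ocD(Y_{n+1})\Rbag$ and $\tX_{n+1}$. The conditional exclusion probability of the test point is therefore the excluded weight fraction, which is at most $\alpha$.

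Your route is more elementary and self-contained. The paper's route costs one line here and inherits a more general theorem. Its filtration condition~\ref{p3w:per-inv-weight} also admits adaptive rules that use the revealed excluded index sets. For such rules the excluded multiset need not be determined by the augmented multiset, and your one-shot count does not directly apply.

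Two small points. First, the augmented multiset determines the multiset of excluded values, not the index set $\cR_T$ itself. Your sum $\sum_{j\in\cR_T}w(X_j)$ depends only on the former, so the argument is unaffected. Second, your ``main obstacle'' is already settled by the construction the corollary refers to. At stage $t$, \eqref{eq:coin-selection-set-weight} excludes an initial segment of the survivors, ranked by $s_t$, whose total weight is at most $c_t^\rlcoin$. The total excluded weight is therefore at most $c_{\rm tot}^\rlcoin$, with no cumulative rule or boundary randomization needed. Step~2 should be applied to that construction rather than to a modified one.
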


Corollary~\ref{cor:validity-rlcoinm} is proved in the supplement by viewing
random localization, conditional on $\tX_{n+1}$, as a covariate-shift problem
with likelihood-ratio weights. Appendix~\ref{assec:robust-covariate-shift}
extends the result to test distributions that differ from the calibration
distribution, namely covariate shift \citep{tibshirani2019conformal,hore2025conformal}.
The allocation can also be learned locally: a strategy-learning
split selects $\bvrho$ using an $H$-weighted empirical utility, and the
remaining split performs final calibration. The full ERM formulation is given
in Appendix~\ref{assec:localized-allocation-learning}; sample splitting keeps
the learned allocation independent of final calibration and therefore
preserves the corollary's finite-sample guarantee.

\section{Numerical Results}\label{sec:numerical-results}

We evaluate \coinm and \vocoinm in two sequential feature-acquisition
settings: a simulation that varies when discriminative information arrives (Section~\ref{ssec:sfa-simulation}) and
a dermatological application in which intermediate prediction sets can avoid
an unnecessary biopsy (Section~\ref{ssec:sfa-realdata}). 
We then briefly summarize the terminal-utility special
case of ordered score aggregation (Section~\ref{ssec:specialcase}), with full results deferred to Appendix~\ref{ssec:numerical-aggregation}.

\subsection{Synthetic sequential feature acquisition}
\label{ssec:sfa-simulation}

Write each feature vector as
$X_i=(X_i^1,\cdots,X_i^T)\in\cX^1\times\cdots\times\cX^T$.
The calibration data consist of fully observed feature-outcome pairs,
$\{(X_i^1,\cdots,X_i^T,Y_i)\}_{i=1}^n$, whereas only
$(X_{n+1}^1,\cdots,X_{n+1}^t)$ is available for the test point at stage~$t$.
Accordingly, for each $t\in[T]$, we define a score function
$s_t:\cX^1\times\cdots\times\cX^t\times\cY
\to\bbR_{\scriptscriptstyle\ge0}$ based only on the first $t$ feature blocks
and apply \coinm with these scores and a prespecified allocation.
Our goal is to reach a reliable decision, defined here
as reducing the prediction-set size to at most a prespecified threshold
$\lambda_0$, as early as possible while maintaining AMR control.
We therefore define
\begin{equation*}
    u(\{\cC_t\}_{t=1}^T)
    :=
    \frac{T+1-\min\{t\in[T]:|\cC_t|\le\lambda_0\}}{T},
\end{equation*}
with the convention $\min\varnothing=T+1$.
Thus, $u(\{\cC_t\}_{t=1}^T)\in[0,1]$, with larger values corresponding
to earlier reliable decisions.

We compare \vocoinm with three baselines: (1) a naive method that constructs
a $(1-\alpha)$-level conformal set at every stage; (2) \texttt{Equal-Bonf},
which divides $\alpha$ equally across the three stages; and
(3) \texttt{Equal-COINS}, which divides the rejection-count budget equally.
These comparisons separate the two sources of improvement. Comparing
\texttt{Equal-Bonf} with \texttt{Equal-COINS} isolates the gain from replacing
stagewise Bonferroni accounting by survivor-count investment under the same
equal allocation. Comparing \texttt{Equal-COINS} with \vocoinm isolates the
additional gain from choosing the timing of that investment for the specified
early-stopping utility.
We consider a three-class classification problem with three stages.
The observations are generated from the following model:
\begin{equation*}
    \pr(Y=k\mid X)
    =
    \frac{
        \exp\bigg\{
            2\gamma_k^\top
            \big[
                (1-\eta)X^1
                +\sqrt{2\eta(1-\eta)}X^2
                +\eta X^3
            \big]
        \bigg\}
    }{
        \sum_{k'=1}^{3}
        \exp\bigg\{
            2\gamma_{k'}^\top
            \big[
                (1-\eta)X^1
                +\sqrt{2\eta(1-\eta)}X^2
                +\eta X^3
            \big]
        \bigg\}
    },
    \qquad k\in[3],
\end{equation*}
where $X=(X^1,X^2,X^3)$ with 
$X^1,X^2,X^3\stackrel{\mathrm{i.i.d.}}{\sim}N_2(0,I_2)$, 
and the coefficient vectors are
$\gamma_1=(1,0)^\top$,
$\gamma_2=(-1/2,\sqrt{3}/2)^\top$, and
$\gamma_3=(-1/2,-\sqrt{3}/2)^\top$.
At stage $t$, the first $t$ feature blocks $(X^1,\ldots,X^t)$ are available for the test point.
Here, $\eta\in[0,1]$ controls when the discriminative information becomes available: 
$\eta=0$ places all discriminative information in the first stage, 
whereas $\eta=1$ delays it until the third stage.

Throughout this simulation, we set the target AMR level to $\alpha=0.1$ and take $\lambda_0=1$.
The calibration and test samples each contain 1,000 observations.
The nonconformity scores are
$s_t(x^1,\ldots,x^t,y)=1-\widehat{\pr}(Y=y\mid
X^1=x^1,\ldots,X^t=x^t)$ for $t\in[3]$,
where the conditional probabilities are estimated from an independent training set of size $1,000$. 
The reported results are averaged over 500 independent repetitions.

\begin{figure}[!t]
    \centering
    \includegraphics[width=0.9\linewidth]{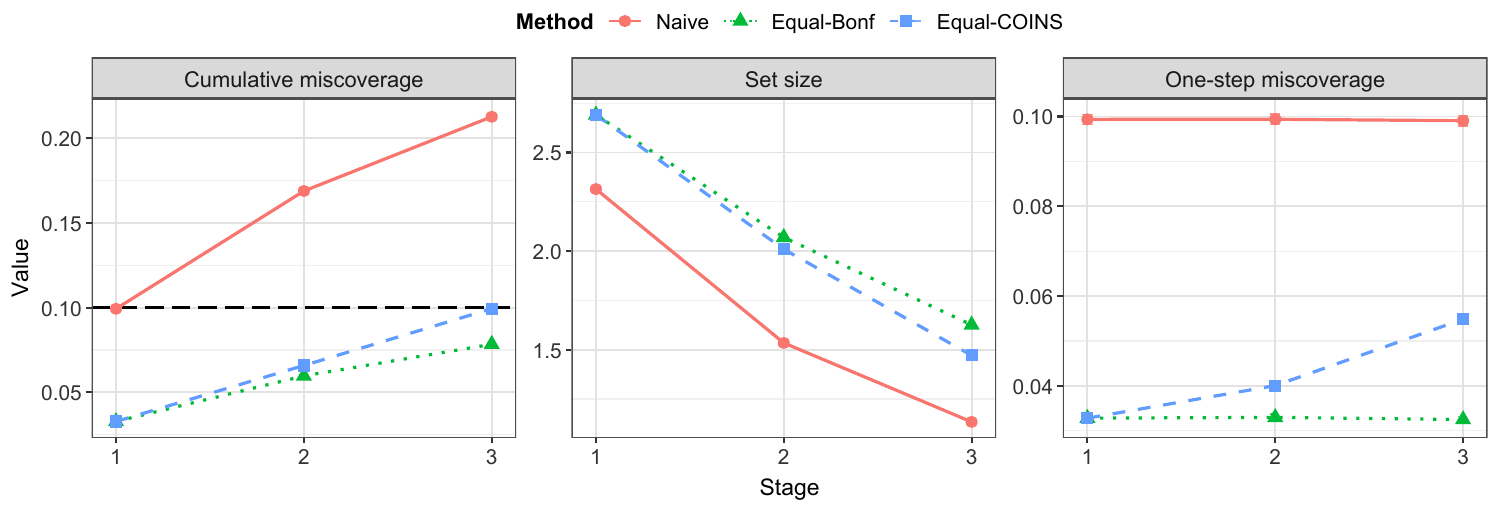}
    \caption{\small 
    Comparison of the three methods at $\eta=0.5$. 
    Results are averaged over 500 repetitions.
    }
    \label{fig:demo-sequentially-acquire}
\end{figure}

We first fix $\eta=0.5$ and compare the naive, \texttt{Equal-Bonf}, and \texttt{Equal-COINS} methods across the three stages.
As shown in Figure~\ref{fig:demo-sequentially-acquire}, the naive method produces the smallest prediction sets at every stage, but its empirical AMR through stage $t$ increases from $0.099$ at stage~1 to $0.213$ at stage~3, exceeding the target level of $0.1$.
\texttt{Equal-Bonf} maintains empirical AMR below the target level but is conservative.
In contrast, \texttt{Equal-COINS} maintains empirical AMR near the nominal
level and produces smaller prediction sets than \texttt{Equal-Bonf} across
the sequence, corroborating the prediction-set dominance result in
Section~\ref{ssec:dom-bonferroni}.

\begin{figure}[!t]
    \centering
    \includegraphics[width=0.9\linewidth]
    {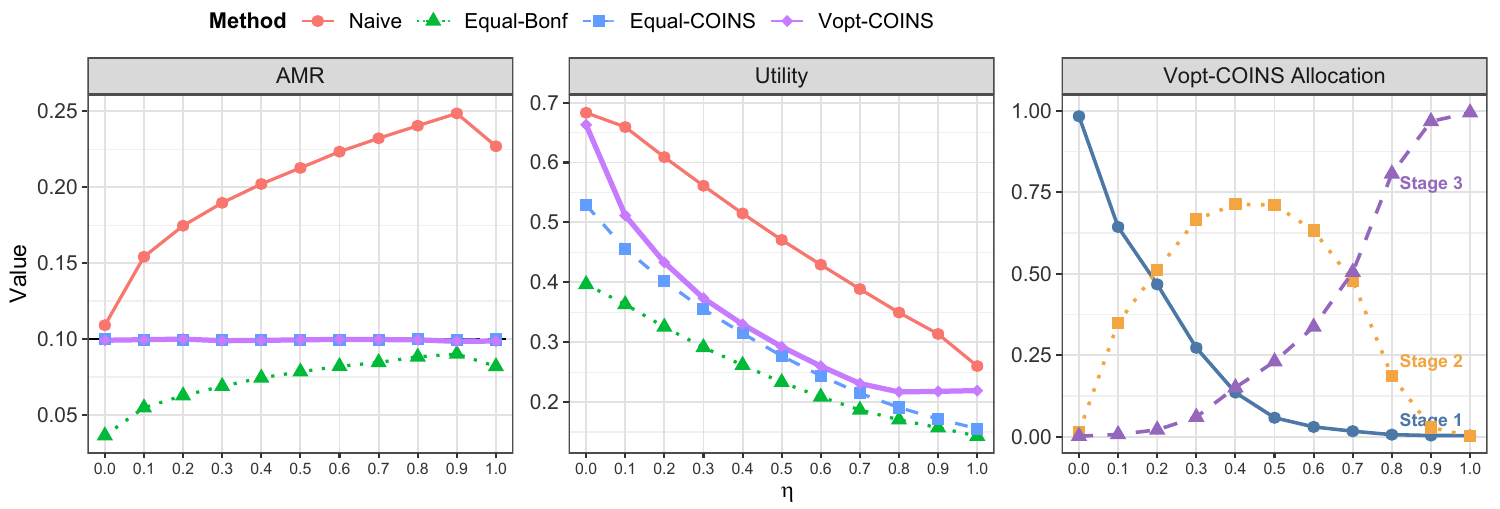}
    \caption{
        \small
        Comparison of the four methods across $\eta$.
        The first two panels report empirical AMR and utility, respectively.
        The third panel reports the proportions of the rejection-count budget
        allocated by \vocoinm to the three stages.
        All metrics are computed over 500 repetitions.
    }
    \label{fig:sfa-simulation-eta}
\end{figure}

We next include \vocoinm and vary $\eta$, thereby shifting the discriminative information from earlier to later stages.
The results are illustrated in Figure~\ref{fig:sfa-simulation-eta}.
Several patterns emerge.
First, the naive method violates AMR control for every value of $\eta$, whereas all three corrected methods control AMR. 
\texttt{Equal-Bonf} remains conservative, while \texttt{Equal-COINS} and \vocoinm stay close to the target level.
Second, among the methods that control AMR, \vocoinm always achieves the highest utility, with its gains being most pronounced when $\eta$ is close to $0$ or $1$.
The learned allocations in the third panel explain this advantage: as $\eta$
increases, \vocoinm shifts the rejection-count budget from earlier to later
stages, adapting to when the discriminative information becomes available.

\subsection{Sequential differential diagnosis}
\label{ssec:sfa-realdata}

The differential diagnosis of erythemato-squamous diseases is clinically
important but challenging because the candidate diseases share the clinical
features of erythema and scaling, with few distinguishing characteristics
\citep{guvenir1998learning}. Diagnosis typically begins with clinical
features, which are sufficient in some cases; otherwise, histopathological
features obtained from a skin biopsy may be needed for a definitive diagnosis.
This staged workflow motivates sequential feature acquisition: we ask whether
the prediction set can be reduced to a single candidate disease before biopsy
while controlling AMR.

We use the dataset collected by \citet{guvenir1998learning}, which contains 366 patients, 34 predictors, and six disease classes.
The predictors comprise 12 clinical features available at stage~1 and 22 histopathological features at stage~2.
We first construct a fixed stratified training set and then generate 500
stratified calibration--test splits of the remaining observations, maintaining
an approximate $3{:}5{:}2$ training--calibration--test ratio in each
repetition.
At each stage, we fit an $\ell_2$-regularized multinomial logistic regression using the training data and the features available up to that stage, and define the nonconformity score for each candidate diagnosis as one minus its estimated conditional probability.
We set $\alpha=0.05$ and $\lambda_0=1$.
All reported metrics are averaged over the 500 stratified splits.

\begin{table}[!htbp]
    \centering
    \caption{
    \small
    Comparison of the four methods on the Dermatology dataset.
    Results are averaged over 500 stratified splits with Monte Carlo standard errors reported in parentheses.}
    \label{tab:dermatology-six-class}
    \begin{tabular}{@{}lccc@{}}
        \toprule
        Method &  AMR & Utility & {Terminal Set Size} \\
        \midrule
        Naive                & 0.080 (0.002) & 0.852 (0.001) & 0.932 (0.002) \\
        \texttt{Equal-Bonf}  & 0.035 (0.001) & 0.777 (0.001) & 0.997 (0.001) \\
        \texttt{Equal-COINS} & 0.049 (0.001) & 0.782 (0.001) & 0.975 (0.001) \\
        \vocoinm             & 0.050 (0.002) & 0.807 (0.002) & 0.982 (0.002) \\
        \bottomrule
    \end{tabular}
\end{table}

The results are summarized in Table~\ref{tab:dermatology-six-class}.
They exhibit patterns similar to those in
Section~\ref{ssec:sfa-simulation}. The naive method attains the highest
utility but has inflated empirical AMR.
Among the remaining three methods, \vocoinm yields the highest utility.
\vocoinm achieves this by allocating more rejection-count budget (about $70\%$ on average) to stage~1.
Heuristically, histopathological features at stage~2 are sufficiently informative to narrow the prediction set with a smaller allocation, while a larger allocation to stage~1 increases the chance that the prediction set reaches size one before biopsy and thereby improves process-level utility.

Moreover, \vocoinm achieves higher process-level utility than
\texttt{Equal-COINS}, despite a slightly larger mean
terminal set size. 
This behavior illustrates why terminal efficiency and process-level utility are
not interchangeable. The allocation is not chosen merely to minimize the last
prediction set; it changes with the stage at which useful information becomes
available because the objective rewards reaching an actionable set early.

\subsection{Terminal-utility special case: ordered score aggregation}\label{ssec:specialcase}

When all score functions are available in advance and utility depends only on
the final prediction set, the framework reduces to ordered score aggregation.
The supplementary study shows that count investment can exploit complementary
scores while maintaining nominal coverage: \vocoinm attains the shortest
average interval on five of eight UCI datasets and is within two Monte Carlo
standard errors of the shortest on a sixth. The gains are largest when
different score functions identify
different parts of the oracle exclusion region, clarifying why aggregation can
outperform selection of a single score. The theoretical characterization,
illustrative example, experimental setup, and complete numerical results are
reported in Appendix~\ref{ssec:numerical-aggregation}.

\section*{Acknowledgments}

The authors used large language models to assist with language editing. 
The authors take full responsibility for the accuracy, originality, and
integrity of the manuscript, including its references.

\begingroup
\small
\linespread{1}\selectfont
\bibliographystyle{apalike}
\bibliography{reference}
\endgroup

\clearpage

\appendix

\renewcommand*{\theHsection}{supp.\arabic{section}}
\renewcommand*{\theHsubsection}{\theHsection.\arabic{subsection}}
\renewcommand*{\theHequation}{supp.\arabic{equation}}
\renewcommand*{\theHfigure}{supp.\arabic{figure}}
\renewcommand*{\theHtable}{supp.\arabic{table}}
\renewcommand*{\theHtheorem}{supp.\arabic{theorem}}
\renewcommand*{\theHdefinition}{supp.\arabic{definition}}
\renewcommand*{\theHexample}{supp.\arabic{example}}
\renewcommand*{\theHlemma}{supp.\arabic{lemma}}
\renewcommand*{\theHassumption}{supp.\arabic{assumption}}
\renewcommand*{\theHremark}{supp.\arabic{remark}}

\renewcommand{\thesection}{\Alph{section}}
\renewcommand{\theequation}{S.\arabic{equation}}
\renewcommand{\thefigure}{S.\arabic{figure}}
\renewcommand{\thetable}{S.\arabic{table}}
\renewcommand{\thealgorithm}{S.\arabic{algorithm}}
\renewcommand{\thetheorem}{S.\arabic{theorem}}
\renewcommand{\thedefinition}{S.\arabic{definition}}
\renewcommand{\theexample}{S.\arabic{example}}
\renewcommand{\thelemma}{S.\arabic{lemma}}
\renewcommand{\theassumption}{S.\arabic{assumption}}
\renewcommand{\theremark}{S.\arabic{remark}}

\setcounter{section}{0}
\setcounter{equation}{0}
\setcounter{table}{0}
\setcounter{figure}{0}

\setcounter{theorem}{0}
\setcounter{definition}{0}
\setcounter{example}{0}
\setcounter{lemma}{0}
\setcounter{assumption}{0}
\setcounter{remark}{0}

\begin{center}
    {\bfseries\large Online Supplementary Material for}\\[3pt]
    {\bfseries\Large ``COINS: Any-Stage-Valid and Utility-Oriented Sequential Conformal Prediction''}
\end{center}

The supplement is organized as follows. Section~\ref{asec:literaturereview}
provides additional connections to the literature, and
Section~\ref{asec:weighted-process} develops weighted and localized
extensions. Section~\ref{asec:algorithmic-numerical-details} collects
algorithmic and numerical details. Sections~\ref{asec:nested-quantile-results}
and~\ref{asec:uniform-convergence-delta} state supporting theoretical results.
Proofs of main-text and supplementary results are collected separately in
Sections~\ref{asec:proofs-main} and~\ref{asec:proofs-appendix}, respectively.

\section{Literature Review}\label{asec:literaturereview}

\subsection{Anytime-valid inference and e-processes}

Any-stage validity is closely related to confidence sequences and $e$-processes,
which support continuous monitoring and optional stopping
\citep{darling1967confidence,jennison1989interim,howard2021time,ramdas2023game,waudby2024time}.
Classical approaches typically construct anytime-valid evidence from a
nonnegative martingale, supermartingale, or betting process and then obtain
sequential inference by thresholding or inversion. Proposition~\ref{the:eprocess-val}
provides a different route: survivor-symmetric elimination in an exchangeable
augmented sample first yields exact finite-sample count identities, which then
generate stagewise $e$-values and an $e$-process. Thus, the evidence process is
a consequence of the conformal exclusion structure rather than the starting
point of the construction.

\subsection{Multiple-score conformal prediction}
\label{asubsec:multiple-score-review}

Multiple-score conformal methods improve a terminal prediction set through
score selection, score aggregation, calibrated evidence combination, or
miscoverage allocation
\citep{yang2025selection,liang2024conformal,wang2026localized,patel2025conformal,alami2026symmetric,qin2024data,xu2025aggregating}.
These methods are most directly related to the terminal-utility special case of
our framework, in which all scores are available and intermediate reports have
no operational role.

The closest allocation-based comparison is \texttt{COLA}
\citep{xu2025aggregating}, which allocates miscoverage across score-specific
conformal sets to optimize their terminal intersection. In contrast, \coinm
allocates a finite-sample exclusion budget among surviving observations and
optimizes the entire prediction sequence, allowing intermediate sets to guide
stopping and information acquisition. Proposition~\ref{prop:prediction-set-dominance}
and Corollary~\ref{cor:utility-dominance} establish the resulting
prediction-set and utility dominance, and Appendix~\ref{ssec:numerical-aggregation} provides empirical evidence. 

\subsection{Relation to Sequential Single-Batch aggregation}

The concurrent and independent work by \citet{schrab2026aggregation} develops Single-Batch (SB) aggregation and its sequential extension, Sequential Single-Batch (\texttt{SeqSB}), for ordered test statistics under exchangeability induced by group invariance. \texttt{SeqSB} uses stagewise spending and permanent elimination of exchangeable rows, ensuring that previously excluded rows do not consume subsequent budgets. This construction can be extended to sequential conformal prediction and shares the survivor-based exclusion mechanism of \coinm.

Choosing or learning the spending sequence for process-level utility is left open in that work. 
We address this problem by learning allocations for sequential conformal prediction that account for stopping and information costs.
Moreover, our structural theory specifies the monotonicity, exclusion-count, and filtration conditions for finite-sample any-stage validity. In particular, the filtration describes how exclusion rules may adapt while preserving conditional symmetry among surviving observations, thereby enabling branchwise and localized conformal extensions. 
We also establish a universality result: nested, calibration-symmetric prediction sequences satisfying the stated exact-AMR condition admit a sequential exclusion representation satisfying these structural conditions.

We additionally derive an efficient algorithm based on a reusable calibration-only construction. For a given score sequence and allocation, the calibration computation is performed once and reused across hypothesized response values and test units. The Two-Batch (TB) conformal construction of \citet{schrab2026aggregation} also permits calibration reuse, but separates the calibration data into reference and aggregation samples. Our construction obtains this computational simplification for sequential exclusion without an additional reference split.

\subsection{Interactive filtrations}

Filtrations have also been used to preserve validity during adaptive analysis
in large-scale testing, knockoffs, and conformal selection
\citep{lei2018adapt,ren2023knockoffs,jin2023selection,gui2025acs}.
Those settings regulate information across many hypotheses or selected units.
Our filtration instead preserves survivor symmetry during repeated prediction
for a single test unit as its available information grows.

\section{Weighted and Localized Extensions}\label{asec:weighted-process}

\subsection{Structural conditions and \coinm under weighted settings}\label{assec:principles-coinm-weighted}

We first extend the three structural conditions to a weighted setting.
Specifically, let $w:\cX\to\bbR_{\scriptscriptstyle\ge 0}$ be a nonnegative weight function that reflects the covariate shift.
We replace Conditions~\ref{p2:rejection-count} and~\ref{p3:per-inv} with the following weighted counterparts. 
\begin{enumerate}[label={\small (S2-W)}, ref=(S2-W), series = weightedcriteria,leftmargin=*]
    \item\label{p2w:rejection-count-weight} 
    For each $y \in \cY$, the final exclusion indicators satisfy
    \begin{equation*}
        \sum_{i=1}^{n+1}
        w(X_i)
        \delta_T^y\bigl(Z_i^y\bigr)
        \le
        \alpha\sum_{i=1}^{n+1}w(X_i)
        =:
        c_{\rm tot}^w.
    \end{equation*}
\end{enumerate}

\begin{enumerate}[label={\small (S3-W)}, ref=(S3-W), series = weightedcriteria,leftmargin=*]
    \item\label{p3w:per-inv-weight} 
    Condition~\ref{p3:per-inv} holds with $\cF_0^y$ revealing the weight function:
    \[
        \cF_0^y = \sigma \big( \Lbag \ocD(y) \Rbag,
        \{s_t\}_{t = 1}^T , w \big).
    \]
\end{enumerate}

Condition~\ref{p2w:rejection-count-weight} replaces the unweighted rejection-count budget in Condition~\ref{p2:rejection-count} with a weighted counterpart.
This weighted version can be viewed as a sequential generalization of the weighted quantile used by \citet{tibshirani2019conformal}.
Notably, the constraint is now stated as an inequality because real-valued weights may prevent the error budget from being exhausted exactly.
In turn, Condition~\ref{p3w:per-inv-weight} retains the survivor-symmetry restriction of Condition~\ref{p3:per-inv}, with the sole addition of the weight function to the initial $\sigma$-algebra.

Under covariate shift, these conditions likewise induce a prediction sequence with any-stage validity, provided that the weight function is proportional to the corresponding covariate likelihood ratio.
We first formalize this requirement as follows. 

\begin{assumption}[Covariate shift]
    \label{cond:covariate-shift}
    Conditioning on $w$, $\{(X_i,Y_i)\}_{i = 1}^{n+1}$ are independent random pairs satisfying 
    $(X_1,Y_1), \cdots, (X_n,Y_n) \overset{i.i.d.}{\sim} P_X \times P_{Y \mid X}$ and 
    $(X_{n+1}, Y_{n+1}) \sim Q_X \times P_{Y \mid X}$.
    Assume also that $Q_X \ll P_X$ and $w(x) \propto (\rmd Q_X / \rmd P_X)(x)$.
\end{assumption}

\begin{theorem}[Any-stage validity under covariate shift]\label{the:basicresult-weighted}
    Suppose that $\{s_t\}_{t = 1}^T$ are pre-trained on independent data.
    Under Assumption~\ref{cond:covariate-shift}, suppose a sequential exclusion process $\boldsymbol{\delta}$ satisfies Conditions~\ref{p1:nondecreasing}, \ref{p2w:rejection-count-weight}, and~\ref{p3w:per-inv-weight}. Then the induced prediction sequence satisfies
    \begin{equation*}
        \pr \big\{ 
            \exists \, t \in [T], 
            Y_{n+1} \notin 
            \{y: \delta_t^y(Z_{n+1}^y) = 0\}
        \big\}
        \le \alpha
        .
    \end{equation*}
\end{theorem}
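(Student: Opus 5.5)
The plan follows the weighted-exchangeability argument of \citet{tibshirani2019conformal}, adapted to the survivor-count structure used for Theorem~\ref{the:basicresult}. First, Condition~\ref{p1:nondecreasing} reduces the any-stage miscoverage event to $\{\delta_T^{Y_{n+1}}(Z_{n+1})=1\}$, so only the terminal exclusion of the true test observation matters. Next, condition on the multiset $E:=\Lbag Z_1,\ldots,Z_{n+1}\Rbag=\Lbag z_1,\ldots,z_{n+1}\Rbag$, the score functions, and $w$. Under Assumption~\ref{cond:covariate-shift}, the joint density of $(Z_1,\ldots,Z_{n+1})$ at a labeling equals $\prod_{i\le n}p(z_{\pi(i)})\cdot w(x_{\pi(n+1)})p(z_{\pi(n+1)})/\E_P w$. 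Hence the conditional law of which element of $E$ sits at index $n+1$ is
\[
    \pr\{Z_{n+1}=z_j\mid E\}=\frac{w(x_j)}{\sum_{k=1}^{n+1}w(x_k)},
\]
with multiplicities handled by summation. Meanwhile the first $n$ indices are exchangeable given $E$ and the identity of $Z_{n+1}$.

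The key step is to show that, given $E$, the set of \emph{values} excluded by time $T$ is a deterministic function of $E$ (and $s,w$), and does not depend on which value is labeled $n+1$. I would prove this by induction over $t$ using Condition~\ref{p3w:per-inv-weight}. Each $\delta_t$ is $\cF_{t-1}$-measurable, and $\cF_{t-1}$ reveals only $\Lbag\ocD\Rbag$, $w$, the scores, the excluded index sets, and the multiset of excluded values. The same value-based rule is applied throughout. Therefore, at each stage, the multiset of values excluded is a function of $E$ and of the previously excluded value-multiset. The index labels enter only by placing exclusions on whichever indices carry those values.

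I expect this to be the main obstacle. The filtration formally contains the index sets $\cR_t$, so I must argue that the decision of which values to exclude is invariant to permutations of the labels of survivors. The plan is to mirror whatever symmetrization was used for Theorem~\ref{the:basicresult}: apply a permutation $\pi$ and check that the rule commutes with it. Ties in values are resolved by the no-ties/randomization convention.

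Given this invariance, let $V\subseteq E$ be the excluded sub-multiset. Then
\[
    \pr\{\delta_T(Z_{n+1})=1\mid E\}=\frac{\sum_{z_j\in V}w(x_j)}{\sum_k w(x_k)}.
\]
The numerator equals $\sum_i w(X_i)\delta_T(Z_i)$, which by Condition~\ref{p2w:rejection-count-weight} is at most $\alpha\sum_k w(x_k)$. So the conditional probability is at most $\alpha$, and taking expectations over $E$ yields the claim. Applying the argument at $y=Y_{n+1}$, where $\ocD(Y_{n+1})$ is the true augmented sample, completes the proof.
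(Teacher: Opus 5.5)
Your reduction to terminal exclusion and the weighted-exchangeability formula $\pr\{Z_{n+1}=z_j\mid E\}=w(x_j)/\sum_k w(x_k)$ are fine. The step you flag as the main obstacle, however, is not merely hard: under Condition~\ref{p3w:per-inv-weight} it is false.

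The filtration $\cF_{t-1}^y$ reveals the excluded \emph{index} set $\cR_{t-1}^y$, not only the multiset of excluded values. So the stage-$t$ rule may legitimately depend on which indices were hit earlier. Your inductive claim, that stage-$t$ exclusions are a function of $E$ and the previously excluded value multiset, therefore does not follow. No permutation argument repairs it, because swapping the values carried by an excluded index and a surviving index changes $\cR_{t-1}^y$, and with it possibly the later rule.

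Here is a concrete counterexample. Take $w\equiv1$, $T=2$ and $\alpha(n+1)\ge3$. At stage~1, exclude the observation with the largest $s_1$ score. At stage~2, exclude the two largest surviving $s_2$ scores if $1\in\cR_1^y$, and only the largest otherwise. This process satisfies Conditions~\ref{p1:nondecreasing}, \ref{p2w:rejection-count-weight} and~\ref{p3w:per-inv-weight}. Yet given $E$, the excluded multiset has two or three elements according to whether calibration index~$1$ happens to carry the $s_1$-maximal value. Consequently $\pr\{\delta_T(Z_{n+1})=1\mid E\}$ is a mixture over labelings; here it equals $1/(n+1)+1/n$, which is not $w(V)/W$ for any fixed sub-multiset $V$. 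Your final display therefore has nothing to apply to.

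Your argument is complete only for rules that use the history through excluded values alone. Weighted \coinm is such a rule, since its budgets are fixed by the multiset and $w$. So your proposal establishes Theorem~\ref{the:validity-covariate-shift}, but not this structural theorem.

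The missing idea is to work stagewise rather than in one shot. What Condition~\ref{p3w:per-inv-weight} does guarantee is that the \emph{value} multiset excluded at stage $t+1$ is $\cF_t^{Y_{n+1}}$-measurable. From this you can show by induction the following: conditional on $\cF_t^{Y_{n+1}}$ and $n+1\notin\cR_t^{Y_{n+1}}$, the test index carries each surviving value with probability proportional to its weight. The reason is that revealing which surviving indices receive the newly excluded values is equally likely under every surviving test value, since calibration labels are uniformly arranged.

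Write $W=\sum_i w(X_i)$ and $b_t=\sum_{i\in\cR_t^{Y_{n+1}}}w(X_i)$. The conditional probability of the test index being newly excluded at stage $t+1$ is then $(b_{t+1}-b_t)/(W-b_t)$. Backward induction gives $\pr\{n+1\in\cR_T^{Y_{n+1}}\mid\cF_t^{Y_{n+1}}\}\le(c_{\rm tot}^w-b_t)/(W-b_t)$ on the survival event, and this equals $\alpha$ at $t=0$. Condition~\ref{p2w:rejection-count-weight} is used pathwise here ($b_T\le c_{\rm tot}^w$ on every path), so it does not matter that the total excluded weight varies with the labeling.
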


Now, we introduce the weighted \coinm procedure that satisfies Conditions~\ref{p1:nondecreasing}, \ref{p2w:rejection-count-weight}, and~\ref{p3w:per-inv-weight}. 
Notably, the total budget $c_{\rm tot}^w$ depends on the observed covariates and the weight function.
Therefore, we follow the approach in Section~\ref{sec:opt-via-erm} and parameterize the stagewise budget allocation using a proportion vector $\bvrho$.
Each $\bvrho \in \Theta_{\bvrho}$ induces the stagewise budget vector
$\bc^w (\bvrho; c_{\rm tot}^w) = \big( c_1^w (\bvrho; c_{\rm tot}^w), \cdots, c_T^w (\bvrho; c_{\rm tot}^w) \big)$,
with 
$c_t^w (\bvrho; c_{\rm tot}^w) = c_{\rm tot}^w \varrho_t$ for $t \in [T]$.
With the proportions specified in advance, the stagewise budgets can be computed without compromising the exchangeability-based validity guarantee.

Then, to achieve~\ref{p2w:rejection-count-weight}, the weighted \coinm procedure follows the rule in~\eqref{eq:coinupdatestrategy},
\begin{equation}\label{eq:coin-update-weight}
    \delta_t^{y, \wcoin}(Z_i^y) =
    \begin{cases}
            1, 
        & 
            \delta_{t-1}^{y, \wcoin}(Z_i^y) = 0
            \text{ and } 
            i \in \cS_t^{y, \wcoin},
        \\
            \delta_{t-1}^{y, \wcoin}(Z_i^y), 
        &
            \text{otherwise},
    \end{cases}
\end{equation}
while replacing the selection set in~\eqref{eq:coin-selection-set} with
\begin{equation}\label{eq:coin-selection-set-weight}
    \begin{aligned}
        \cS_t^{y, \wcoin} 
        \! = \!
        \bigg\{ 
            i \! \in \! [n+1] : \, 
            &
            \delta_{t - 1}^{y, \wcoin}(Z_i^y) = 0 , 
            \\
            & \sum_{j = 1}^{n+1} 
            w(X_j) \,
            \bbI 
            \Big\{ 
                \delta_{t-1}^{y, \wcoin}(Z_j^y) = 0, s_t(Z_j^y) \ge s_t(Z_i^y)
            \Big\} \le c_t^w(\bvrho; c_{\rm tot}^w)
        \bigg\}.
    \end{aligned}
\end{equation}
This construction ensures
$\sum_{i=1}^{n+1} w(X_i) \delta_T^{y, \wcoin}(Z_i^y)
\le 
\sum_{t=1}^{T} c_t^w(\bvrho; c_{\rm tot}^w) = c_{\rm tot}^w$,
verifying Condition~\ref{p2w:rejection-count-weight}.
Finally, the prediction sequence induced by weighted \coinm is given by
\begin{equation}\label{eq:prediction-set-wcoin}
    \hcC_t^\wcoin(X_{n+1})
    :=
    \big\{
        y\in\cY:
        \delta_t^{y, \wcoin}(Z_{n+1}^y)=0
    \big\},
    \qquad
    t\in[T].
\end{equation}

The following theorem establishes finite-sample any-stage validity for weighted \coinm.

\begin{theorem}[Any-stage validity of weighted \coinm]
    \label{the:validity-covariate-shift}
    Suppose that $\{s_t\}_{t = 1}^T$ are pre-trained on independent data and $\bvrho$ is pre-specified.
    Then, under Assumption~\ref{cond:covariate-shift}, the prediction sequence in \eqref{eq:prediction-set-wcoin} satisfies
    $\pr\big\{
        \exists\,t\in[T],
        Y_{n+1}\notin
        \hcC_t^\wcoin(X_{n+1})
    \big\}
    \le \alpha$.
\end{theorem}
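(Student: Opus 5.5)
The plan is to reduce Theorem~\ref{the:validity-covariate-shift} to Theorem~\ref{the:basicresult-weighted}. That is, I will verify that weighted \coinm, with a prespecified proportion vector $\bvrho$, satisfies Conditions~\ref{p1:nondecreasing}, \ref{p2w:rejection-count-weight}, and~\ref{p3w:per-inv-weight} almost surely. The validity bound then follows by direct invocation of that theorem.

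Condition~\ref{p1:nondecreasing} is immediate from the update~\eqref{eq:coin-update-weight}. An indicator changes only from $0$ to $1$, and otherwise it retains its previous value.

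For Condition~\ref{p2w:rejection-count-weight}, the selection sets are disjoint across stages, because only survivors with $\delta_{t-1}^{y,\wcoin}=0$ are eligible at stage $t$. Within stage $t$, the selected set $\cS_t^{y,\wcoin}$ is an upper set in the $s_t$-ordering of survivors. Let $i^\ast$ be the selected index with the smallest score. The defining inequality at $i^\ast$ gives
\[
\sum_{j\in\cS_t^{y,\wcoin}}w(X_j)\le c_t^w(\bvrho;c_{\rm tot}^w).
\]
This uses the no-ties assumption, under which the cumulative weight at $i^\ast$ equals the total weight of the selected set. Summing over $t$ gives
\[
\sum_i w(X_i)\delta_T^{y,\wcoin}(Z_i^y)\le\sum_t c_t^w = c_{\rm tot}^w\sum_t\varrho_t=c_{\rm tot}^w.
\]

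The main work is Condition~\ref{p3w:per-inv-weight}. I will show that $\delta_t^{y,\wcoin}(z)$ is $\cF_{t-1}^y$-measurable, with $\cF_0^y=\sigma(\Lbag\ocD(y)\Rbag,\{s_t\},w)$. The steps are as follows.
\begin{enumerate}[label=(\roman*)]
\item The budget $c_{\rm tot}^w=\alpha\sum_i w(X_i)$ is a symmetric function of the multiset together with $w$. Since $\bvrho$ is prespecified, every $c_t^w$ is $\cF_0^y$-measurable.
\item Proceed by induction on $t$. Given $\cF_{t-1}^y$, the excluded indices $\cR_{t-1}^y$ and the multiset of excluded observations are known. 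Subtracting this multiset from $\Lbag\ocD(y)\Rbag$ yields the multiset of survivors.
\item The rule~\eqref{eq:coin-selection-set-weight} decides whether a survivor is excluded using only its own value $z$, the values $(w(x'),s_t(z'))$ of the other survivors, and $c_t^w$. All of these are available from the survivor multiset. The decision therefore depends on values and not on index labels, and the same rule is applied for every $y$.
\end{enumerate}
Consequently $\delta_t^{y,\wcoin}(z)$ is an $\cF_{t-1}^y$-measurable function of $z$. The new exclusions then enter $\cF_t^y$ through $\cR_t^y$ and the excluded multiset, which closes the induction.

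I expect this measurability and symmetry check to be the main obstacle. The point that needs care is that the survivor multiset is recoverable without revealing the pairing between indices and values, since that is exactly what Theorem~\ref{the:basicresult-weighted} requires. The remaining hypotheses of that theorem hold by assumption: the scores are pretrained on independent data, and Assumption~\ref{cond:covariate-shift} holds with $w\propto \rmd Q_X/\rmd P_X$. Applying Theorem~\ref{the:basicresult-weighted} therefore yields AMR at most $\alpha$ for the sequence~\eqref{eq:prediction-set-wcoin}.
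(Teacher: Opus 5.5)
Your proposal is correct and follows the paper's proof essentially step for step. Both verify Conditions~\ref{p1:nondecreasing}, \ref{p2w:rejection-count-weight}, and~\ref{p3w:per-inv-weight} for weighted \coinm and then invoke Theorem~\ref{the:basicresult-weighted}; the checks are monotone updates, disjoint survivor-only selections with weight at most $c_t^w$, and a value-based $\cF_{t-1}^y$-measurable rule whose budgets are fixed by the multiset, $w$, and the prespecified $\bvrho$.

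One small correction: this theorem has no no-ties hypothesis, and your weight bound does not need one. Tied survivors have the same cumulative weight, so they are selected together, and $\cS_t^{y,\wcoin}$ is exactly the set of survivors with score at least $s_t(Z_{i^\ast}^y)$. The bound therefore follows from $w\ge 0$ alone, and is trivial when $\cS_t^{y,\wcoin}$ is empty.
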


\subsection{Robustness of \rlcoinm to covariate shift}\label{assec:robust-covariate-shift}

This subsection establishes an approximate AMR guarantee for \rlcoinm under a broad class of covariate shifts.
We first introduce the required notation.
For any measurable function $g: \cX \to \bbR_{\scriptscriptstyle \ge 0}$ with $0 < \E_{P_X} [g(X)] < \infty$, 
we define $P_X \circ g$ as the distribution $P_X$ reweighted by $g$ and introduce a parameter to characterize the variability of $g$: 
for any measurable set $ A \subseteq \cX$,
\begin{equation*}
    (P_X \circ g)(A)
    \coloneqq
    \frac{
        \int_A g(x) \rmd P_X(x)
    }{
        \int_\cX g(x) \rmd P_X(x)
    },
    \qquad 
    L_{g,r}(A) \coloneqq
    \sup_{x,x' \in A, \|x - x'\| \le r} 
    \frac{
        | g(x) - g(x') |
    }{
        r
    }.
\end{equation*}
$L_{g,r}(A)$ can be interpreted as a relaxed Lipschitz constant for the reweighting function $g$ on $A$.
For example, if $g$ is $L_g$-Lipschitz continuous on $\cX$, then $L_{g,r}(A) \le L_g$ for every $A \subseteq \cX$ and $r >0$. 
Let $P_{X,\widetilde X}$ denote the joint law obtained by drawing
$X\sim P_X$ and then $\widetilde X\mid X\sim H(X,\cdot)$.
The following proposition extends Theorem~3 of \citet{hore2025conformal} to AMR control.

\begin{proposition}
    \label{prop:robust-covariate-shift}
    Suppose that the score functions $\{s_t\}_{t=1}^T$ and the allocation
    $\bvrho$ are independent of the calibration and test samples.
    Suppose that the calibration pairs $\{(X_i,Y_i)\}_{i=1}^n$ are i.i.d.\ from $P_X\times P_{Y\mid X}$ and are independent of the test pair $(X_{n+1},Y_{n+1})\sim(P_X\circ g)\times P_{Y\mid X}$.
    Then, for every $g$ with $P_X \circ g$ well defined, 
    \begin{equation*}
        \begin{aligned}
            \pr\Big\{ & \, 
                \exists\,t\in[T]:
                Y_{n+1}\notin\hcC_t^\rlcoin(X_{n+1}, \tX_{n+1}) 
            \Big\}
            \le 
            \alpha + 
            \\ & \, 
            \frac{
                \inf_{A \subseteq \cX, \varepsilon > 0} 
                \big\{ 
                    \varepsilon L_{g,2\varepsilon} (A)
                    + 
                    \|g\|_\infty
                    \pr_{P_{X,\widetilde X}}
                    (\|X-\widetilde X\|>\varepsilon)
                    + 
                    \|g\|_\infty \pr_{P_X} (A^c)
                \big\}
            }{
                \E_{P_X} [g(X)]
            },
        \end{aligned}
    \end{equation*}
    where the probability on the left-hand side is taken with respect to $\{(X_i, Y_i)\}_{i = 1}^{n+1}$ and $\tX_{n+1}$.
\end{proposition}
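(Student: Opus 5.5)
This follows the route of Theorem~3 of \citet{hore2025conformal}, using the conditional guarantee of Corollary~\ref{cor:validity-rlcoinm} and its weighted analogue. Condition on $\tX_{n+1}=\tilde x$. The procedure is then weighted \coinm with weights $w(x)=H(x,\tilde x)$. Under the reference i.i.d.\ model (test drawn from $P_X$), the conditional law of $X_{n+1}$ given $\tX_{n+1}=\tilde x$ has density proportional to $H(x,\tilde x)\,\rmd P_X(x)$ relative to the calibration law. This is exactly the covariate-shift setting of Theorem~\ref{the:validity-covariate-shift}, so the conditional AMR is at most $\alpha$.

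Under the shifted test law $P_X\circ g$, the joint law of $(X_{n+1},\tX_{n+1})$ is $g(x)\,\rmd P_X(x)\,H(x,\tilde x)\,\rmd\tilde x/\E_{P_X}[g]$. The marginal of $\tX_{n+1}$ reweights the reference marginal by the smoothed ratio
\[
\bar g(\tilde x)=\frac{\int g(x)H(x,\tilde x)\,\rmd P_X(x)}{\int H(x,\tilde x)\,\rmd P_X(x)}\Big/\E_{P_X}[g].
\]
The conditional law of $X_{n+1}$ given $\tilde x$ is the reference conditional law reweighted by $g(x)/(\bar g(\tilde x)\,\E_{P_X}[g])$.

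\textbf{Step 1: auxiliary distribution.} Introduce the auxiliary joint law in which $X_{n+1}\mid\tX_{n+1}=\tilde x$ follows the reference conditional law (weights $H(\cdot,\tilde x)$), but $\tX_{n+1}$ has the shifted marginal. Because the conditional guarantee holds for every $\tilde x$, the AMR under this auxiliary law is still at most $\alpha$.

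\textbf{Step 2: bound the discrepancy.} The difference between the true AMR and the auxiliary AMR is at most
\[
\E\bigl|g(X)-\bar g(\tX)\,\E_{P_X}[g]\bigr|/\E_{P_X}[g],
\]
with the expectation taken under $P_{X,\widetilde X}$. This holds because the miscoverage indicator lies in $[0,1]$ and both laws share the same conditional law of $Y_{n+1}$ and of the calibration data.

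\textbf{Step 3: decompose $|g(X)-g(X')|$.} Write $\bar g(\tilde x)\E_{P_X}[g]=\E[g(X')\mid\tX=\tilde x]$ with $X'$ an independent copy drawn from the reference conditional given $\tilde x$. Then bound $\E|g(X)-g(X')|$ by splitting into three events:
\begin{itemize}
\item on $\{X,X'\in A,\ \|X-\tX\|\le\varepsilon,\ \|X'-\tX\|\le\varepsilon\}$, the increment is at most $2\varepsilon L_{g,2\varepsilon}(A)$;
\item on the complementary distance events, the increment is at most $\|g\|_\infty$ times $\pr(\|X-\tX\|>\varepsilon)$;
\item on $\{X\notin A\}$ or $\{X'\notin A\}$, the increment is at most $\|g\|_\infty$ times $\pr_{P_X}(A^c)$.
\end{itemize}
Both $X$ and $X'$ have marginal $P_X$, and $(X',\tX)$ has the same law as $(X,\tX)$.

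\textbf{Main obstacle: constants.} A crude split yields factors of $2$, but the stated bound has coefficient one on each term. I would first try to tighten by writing the difference as a single signed integral $\E[(g(X)-\bar g(\tX)\E g)\,\bbI\{\text{miscover}\}]$ and using the symmetry of $(X,X')$ to halve the terms, as in Hore and Barber's proof. Failing that, I would follow their exact coupling argument verbatim. Taking the infimum over $A$ and $\varepsilon$ then gives the result. The AMR event enters only as a bounded indicator, so the sequential structure poses no additional difficulty.
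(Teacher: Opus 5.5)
Your plan matches the paper's proof step for step. It uses the reference law that keeps the calibration sample and the shifted marginal of $\tX_{n+1}$ but replaces the conditional covariate law by $K_{\tilde x}=P_X\circ H(\cdot,\tilde x)$. It then applies Theorem~\ref{the:validity-covariate-shift} conditionally, uses Jensen with a conditionally independent copy $X'$, cancels $m(\tilde x)=\bar g(\tilde x)\E_{P_X}[g]$ against the density of the shifted $\tX$-marginal, and splits into three events. The gap is the one you flagged and left open: Step~2 is too loose by exactly a factor of two. As written, Steps~1--3 prove the bound only with every term doubled. The bad event in Step~3 is a union of four events (two for $A^c$ and two for the distance condition), and the good event gives $2\varepsilon L_{g,2\varepsilon}(A)$. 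Offering to ``follow Hore and Barber verbatim'' does not close this.

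The missing observation is that, given $\tX_{n+1}=\tilde x$, the two laws differ only in the law of $X_{n+1}$. The discrepancy is therefore bounded by $d_{\rm TV}(K_{\tilde x},K^g_{\tilde x})=\tfrac12\E_{K_{\tilde x}}|g(X)/m(\tilde x)-1|$, not by the full $L^1$ distance you used.

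Concretely, let $\phi(x,\tilde x)\in[0,1]$ be the conditional miscoverage probability given $X_{n+1}=x$. The discrepancy is $\E_{K_{\tilde x}}[\{g(X)/m(\tilde x)-1\}\phi]$. Because $g/m-1$ has mean zero under $K_{\tilde x}$, you may replace $\phi$ by $\phi-\tfrac12$.

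Your symmetrization idea is the same fact in coupled form, and it does work:
$\E[\{g(X)-g(X')\}\phi(X,\tX)]=\tfrac12\E[\{g(X)-g(X')\}\{\phi(X,\tX)-\phi(X',\tX)\}]\le\tfrac12\E|g(X)-g(X')|$.

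The paper obtains this factor $\tfrac12$ by passing through $d_{\rm TV}$ explicitly. With it, the three factors of two in Step~3 cancel and the stated constants follow.
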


Proposition~\ref{prop:robust-covariate-shift}
characterizes how the reweighting function $g$ affects the excess term in the AMR upper bound under covariate shift.
\citet{hore2025conformal} discuss several specific choices of $g$ in Section~4.3, including Lipschitz functions and indicator functions; see that section for further details.

\subsection{Locally adaptive allocation learning}
\label{assec:localized-allocation-learning}

The stagewise allocation for \rlcoinm can be learned with the same
sample-splitting principle used for \vocoinm. Randomly partition the calibration
sample into a strategy-learning sample
$\cD_1^\ca=\{(X_i,Y_i)\}_{i=1}^{n_1}$ and a final-calibration sample
$\cD_2^\ca=\{(X_i,Y_i)\}_{i=n_1+1}^{n}$. For
$x,\tilde x\in\cX$, let
$\hcC_t^\rlcoin(x,\tilde x;\cD,\bvrho)$ denote the stage-$t$ set produced
from calibration data $\cD$, allocation proportions $\bvrho$, and weights
$H(\cdot,\tilde x)$. Conditional on $\tX_{n+1}$, define the localized ERM
allocation by
\begin{equation}\label{eq:erm-weighted}
    \hbvrho^{{\rm s},\rlcoin}(\tX_{n+1})
    \in
    \argmin{\bvrho\in\Theta_{\bvrho}}
    -\sum_{i=1}^{n_1}H(X_i,\tX_{n+1})
    u\left(
        \big\{
            \hcC_t^\rlcoin
            (X_i,\tX_{n+1};\cD_1^\ca,\bvrho)
        \big\}_{t=1}^T
    \right).
\end{equation}
The weights make the empirical criterion emphasize calibration observations
near the auxiliary localization point and hence, indirectly, near the test
point. The final prediction sequence is calibrated on $\cD_2^\ca$ using
$\hbvrho^{{\rm s},\rlcoin}(\tX_{n+1})$. Conditional on
$\cD_1^\ca$ and $\tX_{n+1}$, the allocation is fixed and the test-covariate
likelihood ratio relative to the final-calibration law is proportional to
$H(\cdot,\tX_{n+1})$. Theorem~\ref{the:validity-covariate-shift} therefore
gives finite-sample any-stage validity on the final-calibration split.

\section{Algorithmic and Numerical Details}
\label{asec:algorithmic-numerical-details}

\subsection{Complete \coinm algorithm}
\label{asubsec:coin-algorithm}

Algorithm~\ref{alg:coinprocedure} implements the calibration-only threshold
recursion in~\eqref{eq:calibration-only-coin-thresholds} and the cumulative
intersection in~\eqref{eq:calibration-only-coin-sets}.

\begin{figure}[!ht]
    \centering
    \refstepcounter{algorithm}
    \begin{minipage}{0.96\linewidth}
    \raggedright
    \textbf{Algorithm \thealgorithm.}
    \coinm: prediction sequences via rejection-count investment
    \label{alg:coinprocedure}
    \vspace{2pt}

    {\small
    \begin{enumerate}[label=\arabic*:,leftmargin=4.2em,itemsep=0pt,topsep=1pt,
                      parsep=0pt,partopsep=0pt]
        \item[\textbf{Input:}]
        Calibration data $\cD^\ca=\{Z_i=(X_i,Y_i), i\in[n]\}$;
        test feature vector $X_{n+1}$; pre-trained score functions $\{s_t\}_{t=1}^T$;
        target AMR level~$\alpha$; pre-specified allocation
        $\bc=(c_1,\ldots,c_T)\in\mathbb{Z}_{\ge0}^T$ satisfying
        $\sum_{t=1}^T c_t=\lfloor(n+1)\alpha\rfloor$.
        \item Initialize the active calibration indices $\cA_0=[n]$ and
        $\hcC_0^\coin(X_{n+1})=\cY$.
        \item \textbf{for} $t=1,\ldots,T$ \textbf{do}
        \item \hspace{1em}Calculate $S_{t,i}=s_t(Z_i)$ for every $i\in\cA_{t-1}$.
        \item \hspace{1em}Set
        $\htau_t(\bc)=\bbC_{c_t}(\{S_{t,i}:i\in\cA_{t-1}\})$,
        where $\bbC_0(\cdot)=+\infty$.
        \item \hspace{1em}Update the active calibration indices
        $\cA_t=\cA_{t-1}\setminus
        \{i\in\cA_{t-1}:S_{t,i}\ge\htau_t(\bc)\}$.
        \item \hspace{1em}Construct the stagewise set
        $\tcC_t(X_{n+1})=
        \{y\in\cY:s_t(X_{n+1},y)\le\htau_t(\bc)\}$.
        \item \hspace{1em}Update
        $\hcC_t^\coin(X_{n+1})=
        \hcC_{t-1}^\coin(X_{n+1})\cap\tcC_t(X_{n+1})$.
        \item \textbf{end for}
        \item[\textbf{Output:}]
        The prediction sequence $\{\hcC_t^\coin(X_{n+1}), t\in[T]\}$.
    \end{enumerate}}
    \end{minipage}
\end{figure}

\subsection{Ordered score aggregation}
\label{ssec:numerical-aggregation}

When all score functions are available in advance and only the terminal set is
evaluated, process-level utility reduces to
$u(\{\cC_t\}_{t=1}^T)=-|\cC_T|$. In this special case, \vocoinm learns how
to distribute the rejection-count budget across an ordered collection of
nonconformity scores. The following result explains why complementary scores
can improve the terminal set.

\begin{assumption}
    \label{cond:densityfunction}
    For $P_X$-almost every $x\in\cX$, $P_{Y\mid X=x}$ admits a density
    $f_{Y\mid X}(y\mid x)$ with respect to a measure $\nu$. Moreover,
    $-f_{Y\mid X}(Y\mid X)$ has a continuous distribution.
\end{assumption}

Under this assumption, a minimum-$\nu$-measure oracle prediction set is
$\cC^\ora(x)=\{y:-f_{Y\mid X}(y\mid x)\leq\tau^*\}$, where
\[
    \tau^*
    \coloneqq
    \inf\bigl\{
        \tau:\pr_{XY}\{-f_{Y\mid X}(Y\mid X)\leq\tau\}\geq1-\alpha
    \bigr\};
\]
see Section~5.3.1 of \citet{angelopoulos2024theoretical}. Define the oracle
exclusion region
$\cO^*\coloneqq\{(x,y):-f_{Y\mid X}(y\mid x)>\tau^*\}$. For
$t\in[T]$, let $\cO_t(\tau)\coloneqq\{(x,y):s_t(x,y)>\tau\}$ and
\[
    \tau_t^*
    \coloneqq
    \inf\bigl\{
        \tau\in\overline{\bbR}:
        \pr_{XY}\{\cO_t(\tau)\setminus\cO^*\}=0
    \bigr\}.
\]

\begin{theorem}[Oracle recovery by complementary scores]
    \label{the:oracle-recovery}
    Suppose that Assumption~\ref{cond:densityfunction} holds and that the
    score values are almost surely free of ties. If
    \[
        \pr_{XY}\left\{
            \cO^*\mathbin{\triangle}
            \bigcup_{t=1}^T\cO_t(\tau_t^*)
        \right\}=0,
    \]
    then there exists $\bvrho^*\in\Theta_{\bvrho}$ such that
    \[
        \pr_{XY}\left\{
            Y\in
            \cC^\ora(X)\mathbin{\triangle}
            \bigcap_{t=1}^T
            \cC_t^\coin(X;\alpha\bvrho^*)
        \right\}=0.
    \]
\end{theorem}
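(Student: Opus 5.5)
The plan is to construct $\bvrho^*$ explicitly from the oracle decomposition. Then I will show that the population nested quantiles $q_t(\alpha\bvrho^*)$ reproduce the thresholds $\tau_t^*$ up to null sets, so that the exclusion region of the population \coinm sequence coincides with $\cO^*$.

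First I will record the basic facts about the oracle. By Assumption~\ref{cond:densityfunction}, $-f_{Y\mid X}(Y\mid X)$ is continuous, so $\pr_{XY}(\cO^*)=\alpha$ exactly. By definition of $\tau_t^*$ and right-continuity in $\tau$ (the sets $\cO_t(\tau)$ decrease as $\tau$ increases), we have $\pr_{XY}\{\cO_t(\tau_t^*)\setminus\cO^*\}=0$. Hence each $\cO_t(\tau_t^*)$ is a.s.\ contained in $\cO^*$, and by hypothesis their union a.s.\ equals $\cO^*$.

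Next I will define the allocation through the nested increments
\[
\alpha\varrho_t^*\coloneqq \pr_{XY}\Bigl\{\cO_t(\tau_t^*)\setminus\textstyle\bigcup_{r<t}\cO_r(\tau_r^*)\Bigr\},\qquad t\in[T].
\]
These are nonnegative and telescope to $\pr_{XY}(\bigcup_t\cO_t(\tau_t^*))=\pr_{XY}(\cO^*)=\alpha$, so $\bvrho^*\in\Theta_{\bvrho}$.

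The core step is to prove by induction on $t$ that $\{S_t>q_t(\alpha\bvrho^*)\}\cap\{S_r\le q_r,\,r<t\}$ equals $\cO_t(\tau_t^*)\cap\{S_r\le\tau_r^*,\,r<t\}$ up to $P_{XY}$-null sets. Equivalently, the surviving region after stage $t$ equals $\bigcap_{r\le t}\cO_r(\tau_r^*)^c$ a.s. Given the induction hypothesis, the map $v\mapsto\pr(S_t\le v,\text{survive}_{<t})$ equals $1-\sum_{r<t}\alpha\varrho_r^*-\pr(S_t>v,\text{survive}_{<t})$. At $v=\tau_t^*$ this gives exactly $1-\sum_{r\le t}\alpha\varrho_r^*$, so $q_t\le\tau_t^*$.

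The main obstacle is the reverse direction: if $q_t<\tau_t^*$ strictly, the region $\{q_t<S_t\le\tau_t^*\}\cap\text{survivors}$ must have zero mass. Otherwise the infimum defining $q_t$ would not be attained below $\tau_t^*$, by the attained-equality just computed and the no-ties (continuity) condition. Hence the exclusion sets agree a.s. Any positive mass there would contradict the defining infimum, so only null differences remain; this handles the case $\varrho_t^*>0$. When $\varrho_t^*=0$, we have $q_t=+\infty$, and the increment set $\cO_t(\tau_t^*)\setminus\bigcup_{r<t}\cO_r(\tau_r^*)$ is null, so the claim again holds. I expect the care here to lie in handling atoms and the $\overline{\bbR}$ conventions, which I would dispose of using the continuity assumption.

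Finally, the a.s.\ identity of exclusion regions gives that $\bigcap_t\cC_t^\coin(X;\alpha\bvrho^*)$ excludes $y$ iff $(X,y)\in\bigcup_t\cO_t(\tau_t^*)$, up to null sets. This equals $\cO^*$ a.s., and $\cO^*$ is the complement of $\{(x,y):y\in\cC^\ora(x)\}$. Therefore $\pr_{XY}\{Y\in\cC^\ora(X)\triangle\bigcap_t\cC_t^\coin(X;\alpha\bvrho^*)\}=0$.
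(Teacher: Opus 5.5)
Your proposal is correct and follows essentially the same route as the paper's proof. It uses the same increment-based choice of $\bvrho^*$ and the same induction showing that the surviving region after stage $t$ agrees almost surely with $\bigl(\bigcup_{r\le t}\cO_r(\tau_r^*)\bigr)^c$, split into the cases $\varrho_t^*=0$ and $\varrho_t^*>0$, with the same squeeze: evaluating at $\tau_t^*$ gives $q_t(\alpha\bvrho^*)\le\tau_t^*$, and a mass comparison rules out a non-null gap. Your reverse step actually needs only right-continuity of $v\mapsto\pr(S_t\le v,\text{survive}_{<t})$, which guarantees the infimum defining $q_t$ is attained, so your appeal there to the no-ties condition (which the paper uses to get the exact mass $1-\alpha\beta_t(\bvrho^*)$) is dispensable.
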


The premise is an idealized condition involving the unknown oracle region, but
its implication is useful: aggregation can approach the oracle when the
score-specific exclusion regions cover complementary parts of that region.
The example in Figure~\ref{fig:demo-aggr-multi-scores} illustrates this
mechanism. The residual and scaled-residual scores are strongest in different
parts of the covariate space, and their combined exclusions yield a shorter
interval than either score alone.

\begin{figure}[!t]
    \centering
    \includegraphics[width=.9\linewidth]{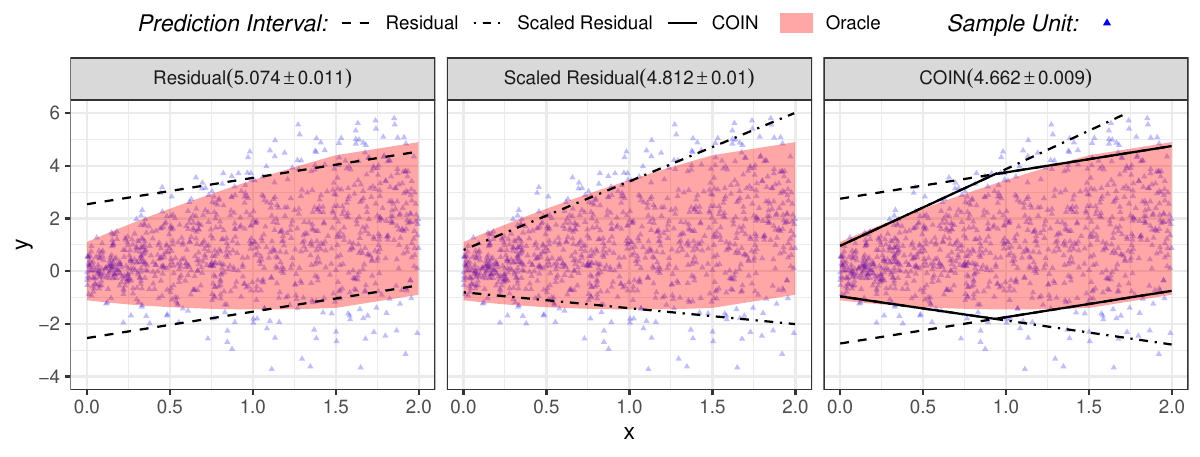}
    \caption{\small Aggregation of complementary nonconformity scores. The two scores
    have different local strengths, whereas \coinm combines their exclusions
    and approaches the oracle interval. Parenthetical values are average
    interval lengths over 1,000 repetitions; empirical coverages are 0.8998,
    0.9004, and 0.9004 at nominal coverage $1-\alpha=0.9$. Additional details
    are given in Appendix~\ref{assec:demo-aggr-multi-scores}.}
    \label{fig:demo-aggr-multi-scores}
\end{figure}

\subsubsection{UCI data analysis}
\label{sssec:uci-datasets-analyses}

We compare \vocoinm with five aggregation or score-selection methods:
\texttt{COLA-s} \citep{xu2025aggregating}, \texttt{TR-COINS},
\texttt{VFCP} \citep{yang2025selection}, \texttt{CSA}
\citep{patel2025conformal}, and \texttt{SACP++}
\citep{alami2026symmetric}. Here, \texttt{TR-COINS} transfers the allocation
learned by \texttt{COLA-s} to the \coinm calibration procedure.
All methods use separate tuning and final calibration samples. We use eight public regression datasets from the UCI
Repository \citep{kelly2023uci}; for each of 500 random partitions, 1,000
observations are split in a $3{:}2{:}2{:}3$ ratio into training, tuning,
calibration, and test sets. Five nonconformity scores are constructed from
random-forest, XGBoost, and conformal-quantile-regression models. Complete
implementation details appear in Appendix~\ref{assec:detailed-uci-analysis}.

\begin{table*}[!t]
    \centering
    \caption{\small
        Empirical coverage (\%) computed over 500 random data partitions.
        Entries are Monte Carlo means with standard errors in parentheses.}
    \label{tab:multiscore-aggr-coverage}
    \begingroup
    \scriptsize
    \setlength{\tabcolsep}{2.2pt}
    \renewcommand{\arraystretch}{0.94}
    \resizebox{\textwidth}{!}{%
    \begin{tabular}{l*{8}{c}}
    \toprule
    Method & UCI-001 & UCI-165 & UCI-265 & UCI-291 & UCI-304 & UCI-332 & UCI-440 & UCI-464 \\
    \midrule
    \vocoinm                  & 90.75 (.11) & 90.08 (.12) & 89.77 (.12) & 90.09 (.13) & 90.71 (.11) & 89.98 (.13) & 90.02 (.12) & 89.99 (.12) \\
    \texttt{COLA-s}           & 91.72 (.12) & 90.44 (.12) & 90.98 (.12) & 90.89 (.12) & 92.28 (.11) & 91.12 (.14) & 91.50 (.13) & 91.47 (.12) \\
    \texttt{TR-COINS}         & 90.75 (.11) & 89.97 (.12) & 89.76 (.12) & 90.21 (.12) & 91.11 (.10) & 90.00 (.12) & 90.18 (.12) & 90.00 (.12) \\
    \texttt{VFCP}             & 91.01 (.12) & 89.98 (.12) & 89.79 (.12) & 90.07 (.12) & 90.69 (.14) & 90.04 (.12) & 89.87 (.13) & 89.96 (.12) \\
    \texttt{CSA}          & 90.07 (.12) & 90.03 (.12) & 89.70 (.12) & 90.15 (.12) & 90.28 (.12) & 90.04 (.12) & 89.96 (.12) & 89.97 (.12) \\
    \texttt{SACP++}       & 90.04 (.12) & 90.08 (.12) & 89.77 (.12) & 90.09 (.13) & 90.14 (.12) & 90.09 (.12) & 90.00 (.12) & 89.88 (.12) \\
    $s_1$                     & 90.13 (.12) & 90.04 (.12) & 89.86 (.12) & 90.32 (.12) & 90.12 (.12) & 90.10 (.13) & 90.00 (.11) & 89.99 (.12) \\
    $s_2$                     & 90.16 (.12) & 90.01 (.12) & 90.01 (.12) & 90.06 (.12) & 90.03 (.11) & 90.04 (.13) & 90.03 (.12) & 89.93 (.12) \\
    $s_3$                     & 89.94 (.12) & 90.04 (.12) & 89.70 (.12) & 90.16 (.12) & 90.14 (.13) & 90.07 (.12) & 89.81 (.13) & 89.88 (.13) \\
    $s_4$                     & 89.90 (.12) & 90.27 (.12) & 89.82 (.12) & 90.21 (.12) & 90.10 (.12) & 89.99 (.13) & 90.12 (.13) & 90.02 (.12) \\
    $s_5$                     & 92.87 (.09) & 90.12 (.12) & 89.79 (.12) & 90.14 (.12) & 94.95 (.07) & 90.08 (.12) & 90.21 (.12) & 90.05 (.12) \\
    \bottomrule
    \end{tabular}%
    }
    \endgroup
\end{table*}

As shown in Table~\ref{tab:multiscore-aggr-coverage}, all methods attain empirical coverage close to the target level of $0.9$.
Table~\ref{tab:multiscore-aggr-length} further shows that \vocoinm attains the shortest
average interval on five datasets and is within two Monte Carlo standard
errors of the shortest on a sixth. It also outperforms every individual score
on most datasets, supporting the value of combining complementary scores.

\begin{table*}[t]
    \centering
    \caption{\small
        Average prediction-interval length over 500 random data partitions.
        Entries are Monte Carlo means with standard errors in parentheses.
        The minimum and entries no more than two standard errors above the
        minimum are highlighted.}
    \label{tab:multiscore-aggr-length}
    \begingroup
    \scriptsize
    \setlength{\tabcolsep}{2.2pt}
    \renewcommand{\arraystretch}{0.94}
    \resizebox{\textwidth}{!}{%
    \begin{tabular}{l*{8}{c}}
    \toprule
    Method & UCI-001 & UCI-165 & UCI-265 & UCI-291 & UCI-304 & UCI-332 & UCI-440 & UCI-464 \\
    \midrule
    \vocoinm                  & 2.123 (.012) & 1.106 (.005) & \bestcell{2.351 (.007)} & \bestcell{1.245 (.005)} & \bestcell{0.438 (.011)} & \bestcell{1.327 (.023)} & \bestcell{0.861 (.008)} & \bestcell{1.243 (.006)} \\
    \texttt{COLA-s}           & 2.279 (.013) & 1.125 (.006) & 2.442 (.008) & 1.300 (.006) & 0.585 (.020) & 1.581 (.028) & 0.994 (.011) & 1.352 (.006) \\
    \texttt{TR-COINS}         & 2.173 (.012) & 1.104 (.005) & 2.370 (.007) & 1.263 (.006) & 0.486 (.014) & 1.473 (.026) & 0.906 (.009) & 1.280 (.006) \\
    \texttt{VFCP}             & 2.209 (.011) & 1.105 (.006) & 2.416 (.006) & 1.275 (.006) & 0.721 (.012) & 1.603 (.023) & 1.058 (.010) & 1.394 (.006) \\
    \texttt{CSA}              & \bestcell{2.097 (.009)} & 1.132 (.005) & 2.456 (.009) & 1.276 (.005) & 0.496 (.009) & 1.397 (.021) & 0.954 (.008) & 1.271 (.006) \\
    \texttt{SACP++}           & \bestcell{2.086 (.009)} & 1.102 (.005) & 2.560 (.009) & \bestcell{1.243 (.005)} & 0.490 (.009) & 1.478 (.022) & 0.916 (.007) & 1.293 (.006) \\
    \midrule
    $s_1$                     & 2.244 (.011) & 1.290 (.006) & 2.832 (.009) & 1.537 (.007) & 0.943 (.020) & 1.936 (.028) & 1.258 (.013) & 1.667 (.008) \\
    $s_2$                     & 2.334 (.012) & \bestcell{1.090 (.005)} & 2.984 (.010) & 1.259 (.006) & 1.170 (.020) & 2.222 (.028) & 1.349 (.012) & 1.626 (.008) \\
    $s_3$                     & 2.200 (.011) & 1.293 (.006) & 2.748 (.012) & 1.465 (.006) & 0.722 (.013) & 1.607 (.024) & 1.033 (.009) & 1.385 (.007) \\
    $s_4$                     & 2.387 (.012) & 1.298 (.007) & 3.150 (.015) & 1.349 (.006) & 1.553 (.026) & 2.641 (.033) & 1.617 (.014) & 1.579 (.009) \\
    $s_5$                     & 2.186 (.010) & 1.417 (.004) & 2.402 (.005) & 1.664 (.006) & 0.862 (.011) & 1.733 (.020) & 1.320 (.009) & 1.414 (.004) \\
    \bottomrule
    \end{tabular}%
    }
    \endgroup
\end{table*}

\subsection{Details of the UCI data analyses}\label{assec:detailed-uci-analysis}

This subsection provides additional implementation details for the UCI experiments in Section~\ref{sssec:uci-datasets-analyses}.
Table~\ref{tab:uci-dataset-information} summarizes the basic information about the eight datasets.
In what follows, we will introduce the preprocessing step, the construction of the nonconformity scores, and the implementations of the \texttt{CSA} and \texttt{SACP++} procedures. 

\textbf{Preprocessing}. For each dataset and replication, we sampled 1,000 observations without replacement and randomly divided them into 300 training, 200 tuning, 200 final calibration, and 300 test observations.
Numerical predictors were median-imputed and standardized, whereas categorical predictors were mode-imputed and one-hot encoded.
All preprocessing transformations were fitted using only the training observations.
The response was centered and scaled using its mean and sample standard deviation in the training split.

\begin{table*}[t]
    \centering
    \caption{\small Basic information for the eight datasets from the UCI Machine Learning Repository \citep{kelly2023uci}. Here, $n$ and $p$ denote the numbers of available observations and raw predictors, respectively.}
    \label{tab:uci-dataset-information}

    \begingroup
    \scriptsize
    \setlength{\tabcolsep}{3.5pt}
    \renewcommand{\arraystretch}{1.08}

    \resizebox{\textwidth}{!}{%
    \begin{tabular}{llrrp{7.5cm}}
    \toprule
    UCI ID & Dataset & $n$ & $p$ & Response variable \\
    \midrule
    UCI-001 & Abalone                       & 4,177   & 8   & Number of shell rings, used as a proxy for age \\
    UCI-165 & Concrete Compressive Strength & 1,030   & 8   & Concrete compressive strength in MPa \\
    UCI-265 & Protein Tertiary Structure     & 45,730  & 9   & Root-mean-square deviation (RMSD) of the protein structure \\
    UCI-291 & Airfoil Self-Noise             & 1,503   & 5   & Scaled sound-pressure level in dB \\
    UCI-304 & BlogFeedback                   & 52,397  & 280 & Number of comments received in the following 24 hours \\
    UCI-332 & Online News Popularity         & 39,644  & 59  & Number of times an online news article was shared \\
    UCI-440 & SGEMM GPU Kernel Performance   & 241,600 & 14  & Mean GPU-kernel running time over four repeated runs, in ms \\
    UCI-464 & Superconductivity              & 21,263  & 81  & Critical temperature of the superconductor in K \\
    \bottomrule
    \end{tabular}%
    }

    \endgroup
\end{table*}

\textbf{Construction of the five nonconformity scores.}
Let $\widehat\mu_{\mathrm{RF}}$ and
$\widehat\mu_{\mathrm{XGB}}$ denote the random-forest and XGBoost mean regressors, respectively;
let $\widehat\sigma_{\mathrm{RF}}$ and
$\widehat\sigma_{\mathrm{XGB}}$ denote the corresponding scale regressors; and let
$\widehat q_{\mathrm{L}}$ and $\widehat q_{\mathrm{U}}$ denote the lower and upper quantile-regression-forest estimates.
The five candidate scores are
\begin{equation*}
    \begin{aligned}
        s_1(x,y)
        & =
        \left|
            y-\widehat\mu_{\mathrm{RF}}(x)
        \right|,
        & 
        s_2(x,y)
        =
        \left|
            y-\widehat\mu_{\mathrm{XGB}}(x)
        \right|,
        \\
        s_3(x,y)
        &=
        \frac{
            \left|y-\widehat\mu_{\mathrm{RF}}(x)\right|
        }{
            \widehat\sigma_{\mathrm{RF}}(x)
        },
        &s_4(x,y)
        =
        \frac{
            \left|y-\widehat\mu_{\mathrm{XGB}}(x)\right|
        }{
            \widehat\sigma_{\mathrm{XGB}}(x)
        },
        \\
        s_5(x,y)
        &=
        \max\left\{
            \widehat q_{\mathrm{L}}(x)-y,\,
            y-\widehat q_{\mathrm{U}}(x)
        \right\}.
    \end{aligned}
\end{equation*}
Here, $\widehat q_{\mathrm{L}}$ and $\widehat q_{\mathrm{U}}$ estimate the conditional
$\alpha/2$ and $1-\alpha/2$ quantiles, respectively.
Thus, $s_5$ is the conformalized-quantile-regression score of \citet{romano2019conformalized}.

To construct the targets for the scale regressors, we first obtain five-fold out-of-fold mean predictions and define
\begin{equation*}
    R_i^{\mathrm{oof}}
    =
    \frac{1}{2}
    \left\{
        \left|
            Y_i-\widehat\mu_{\mathrm{RF}}^{\mathrm{oof}}(X_i)
        \right|
        +
        \left|
            Y_i-\widehat\mu_{\mathrm{XGB}}^{\mathrm{oof}}(X_i)
        \right|
    \right\}.
\end{equation*}
Random-forest and XGBoost scale regressors are then fitted to
$\{(X_i,R_i^{\mathrm{oof}})\}$, with their predictions lower-bounded by $0.02$ on the standardized response scale.
Moreover, the dataset-specific hyperparameters are fixed before the Monte Carlo experiment.
For the sequential intersection methods, the five scores are ordered within each replication by increasing mean length of their individually calibrated intervals computed from the out-of-fold training predictions.

\textbf{Implementations of \texttt{CSA} and \texttt{SACP++}.}
Both methods require nonnegative nonconformity scores while $s_5$ may produce a negative value.
Therefore, we first transform the fifth score as follows. 
Using five-fold out-of-fold predictions on the training split, we compute the median $m_5$ and interquartile range $r_5$ of the fifth training score.
We then define
\begin{equation*}
    \widetilde{s}_5(x,y)
    =
    \log\left[
        1+\exp\left\{
            \frac{s_5(x,y)-m_5}{r_5}
        \right\}
    \right],
\end{equation*}
and write
$\widetilde{\bs}(x,y)
:=
(s_1(x,y),\ldots,s_4(x,y),\widetilde{s}_5(x,y))^\top$.
This softplus transformation ensures that the fifth score is positive and is estimated exclusively from the training split.
The remaining intersection-based methods operate directly on the raw scores.

For \texttt{CSA} \citep{patel2025conformal}, we first independently generate
$M=100$ directions $\{\bu_m\}_{m=1}^M$ uniformly over the positive orthant of the unit sphere in $\bbR^5$.
For any $\beta\in[\alpha/M,\alpha]$, let $q_m(\beta)$ be the order
statistic with rank $\min\{n_A,\lceil(n_A+1)(1-\beta)\rceil\}$ among
$
    \left\{
        \bu_m^\top\widetilde{\bs}(X_i,Y_i):
        i\in\cI_A
    \right\},
$
where $\cI_A$ denotes the tuning split and $n_A=|\cI_A|=200$.
Then, a 30-step binary search selects the largest $\widehat{\beta}$ for which the resulting quantile envelope has empirical coverage at least $1-\alpha$ on the tuning split:
\begin{equation*}
    \frac{1}{n_A}
    \sum_{i\in\cI_A}
    \bbI\left\{
        \bu_m^\top\widetilde{\bs}(X_i,Y_i)
        \le q_m(\widehat{\beta}),
        \ \forall\,m\in[M]
    \right\}
    \ge 1-\alpha.
\end{equation*}
This envelope induces the scalar score
\begin{equation*}
    g_{\mathrm{CSA}}(x,y)
    =
    \max_{m\in[M]}
    \frac{
        \bu_m^\top\widetilde{\bs}(x,y)
    }{
        q_m(\widehat{\beta})
    }.
\end{equation*}
Finally, $g_{\mathrm{CSA}}$ is used as a nonconformity score on the independent  calibration split $\cI_C$, and the prediction set is
$ \hcC_{\mathrm{CSA}}(x) =
\left\{
    y\in\cY:
    g_{\mathrm{CSA}}(x,y)
    \le
    \widehat q_{\mathrm{CSA}}
\right\}$, 
where $\widehat q_{\mathrm{CSA}}$ is the usual split-conformal upper quantile computed from
$\{g_{\mathrm{CSA}}(X_i,Y_i):i\in\cI_C\}$.

For \texttt{SACP++} \citep{alami2026symmetric}, consider a calibration dataset $\{(X_i, Y_i)\}_{i = 1}^n$ and an unlabeled feature vector $X_{n+1} \in \cX$.
For each $y \in \cY$, let $Z_i^y=Z_i$ for $i \in [n]$ and $Z_{n+1}^y=(X_{n+1},y)$.
For each score coordinate $t\in[5]$, define
\begin{equation*}
    E_i^t(y)
    =
    \frac{(n+1) \tilde{s}_t(Z_i^y)}
    {\sum_{j=1}^{n+1} \tilde{s}_t(Z_j^y)},
    \qquad
    i\in[n+1],\quad t\in[5],
\end{equation*}
where $\tilde{s}_t = s_t$ for $t \le 4$.
We then aggregate the resulting $e$-vectors using
\begin{equation*}
    G_{p,i}(y)
    =
    \bigg\{
        \sum_{t=1}^5
        \bigl(E_i^t(y)\bigr)^p
    \bigg\}^{1/p},
    \qquad
    p > 0,\quad i\in[n+1].
\end{equation*}
For each candidate $y$, let
$\widehat q_p(y) = G_{p,(\ell)}(y)$,
where $ \ell =\lceil(n+1)(1-\alpha)\rceil$ and $G_{p,(\ell)}(y)$ denotes the $\ell$th order statistic of
$\{G_{p,i}(y): i \in [n]\}$.
The candidate $y$ is included in the prediction set if and only if
$G_{p,n+1}(y) \le \widehat q_p(y)$.

In the split-tuned implementation of \texttt{SACP++}, 
we first select $\widehat p$ from the candidate set $\{0.5,1,2,4,8,16\}$ on the tuning split by minimizing the average prediction-set length. 
Holding $\widehat p$ fixed, we then use the independent calibration split to construct the final prediction set.

\subsection{Details for the demonstration in Figure~\ref{fig:demo-aggr-multi-scores}}
\label{assec:demo-aggr-multi-scores}

In this subsection, we provide implementation details and additional results for the illustrative example in Figure~\ref{fig:demo-aggr-multi-scores}.

Specifically, we consider the heteroscedastic regression model
\begin{equation*}
    X \sim \operatorname{Unif}(0,2),
    \quad 
    Y = \mu(X) + \sigma(X)\varepsilon,
    \quad 
    \mu(x) = x,
    \quad 
    \sigma(x)=\min\{x+1/2,2\},
    \quad
    \varepsilon \sim \mathcal{N}(0,1),
\end{equation*}
where $\varepsilon$ is independent of $X$.
Two nonconformity scores are considered:
$s_{\mathrm{res}}(x,y) = |y-\mu(x)|$ and $s_{\mathrm{sca}}(x,y) = |y-\mu(x)|/(x+1/2)$.
Consequently, neither score is uniformly more efficient over the entire covariate space:
The residual score tends to produce unnecessarily wide intervals for small values of $x$, whereas the scaled residual score becomes relatively inefficient near the right boundary of the covariate space. This complementary behavior motivates aggregating the two scores.
For comparison, the oracle interval is included, defined as the conditional-density level set
\begin{equation*}
    \cC_{\mathrm{ora}}(x)
    =
    \left\{
        y:
        p_{Y\mid X}(y\mid x)\geq\lambda_{\mathrm{ora}}
    \right\},
\end{equation*}
where $\lambda_{\mathrm{ora}}=0.07$ is chosen numerically to attain marginal coverage of approximately $0.9$. 
This oracle interval is included only as an efficiency benchmark because its construction requires knowledge of the true conditional distribution.

For each replication, we generate 2,000 independent observations, which are evenly split into a calibration set and a test set.
We set $\alpha=0.1$, so the total rejection-count budget is
$c_{\mathrm{tot}} = \lfloor (n+1)\alpha \rfloor = 100$.
We use the pre-specified allocation $(c_1,c_2)=(80,20)$ for \coinm. 
All metrics are computed over 1,000 independent replications. 
Table~\ref{tab:demo-aggr-multi-scores-additional} summarizes the results. 
It shows that all three methods attain empirical coverage close to the nominal level of $0.9$, while \coinm produces the shortest average prediction interval.

\begin{table}[!ht]
    \centering
    \caption{\small
        Average interval lengths and empirical marginal coverage rates for the illustrative aggregation example.
    }
    \label{tab:demo-aggr-multi-scores-additional}
    \begin{tabular}{lcc}
        \toprule
        Method
        & Average length
        & Empirical coverage \\
        \midrule
        Residual
        & 5.074
        & 0.8998 \\
        Scaled residual
        & 4.812
        & 0.9004 \\
        \coinm
        & 4.662
        & 0.9004 \\
        \bottomrule
    \end{tabular}
\end{table}

\section{Nested-Quantile Results}
\label{asec:nested-quantile-results}

This section states nonasymptotic results for the \emph{nested}-quantile
mechanism introduced in Section~\ref{ssec:population-utility-loss}. Their proofs
are collected in Section~\ref{asec:proofs-appendix}.
Specifically, let $\bS = (S_1, \cdots, S_T)$ be an $\bbR^T$-valued random vector with distribution $P_\bS$ on $(\bbR^T, \cB(\bbR^T))$, and let $(\bS_1,\cdots,\bS_n)$ be an i.i.d.\ sample of size $n$ from $P_\bS$.
Let $F_\bS$ denote the joint CDF of $\bS$, and let $F_{n,\bS}$ denote the empirical joint CDF based on $(\bS_1,\cdots,\bS_n)$.
For $t=1,\ldots,T$, let $\bS(\ole t) = (S_1, \cdots, S_t)$ denote the subvector consisting of the first $t$ components of $\bS$, and let $F_{{\scriptscriptstyle \bS(\ole t)}}$ and $F_{n,{\scriptscriptstyle \bS(\ole t)}}$ denote its CDF and the corresponding empirical CDF, respectively.

Then, for a given allocation proportion vector $\bvrho \in \Theta_{\bvrho}$ and a target level $\alpha \in (0,1)$, 
the associated nested quantile sequence is defined as, for each $t \in [T]$, 
\begin{equation}\label{eq:def-nested-quant}
    q_t (\alpha \bvrho) 
    \! := \!
    \begin{cases}
        \infty, & \varrho_t = 0, 
        \\
        \inf \left\{
            v \in \overline{\bbR}: \pr_{P_{\bS}} \left(
                S_t \le v, S_{t'} \le q_{t'} (\alpha  \bvrho), \forall \, t' < t
            \right) 
            \ge 
            1 - \alpha \beta_t(\bvrho)
        \right\}, 
        & \varrho_t > 0,
    \end{cases}
\end{equation}
where $\beta_t(\bvrho) = \sum_{t'=1}^t\varrho_{t'}$ is the cumulative allocation proportion up to stage~$t$.
We denote $\bq (\alpha \bvrho) = \big(q_1(\alpha \bvrho), \cdots, q_T(\alpha \bvrho) \big)$. 
Naturally, replacing the population distribution in~\eqref{eq:def-nested-quant} with the empirical distribution yields the empirical nested quantile, denoted as $\hbq (\alpha \bvrho) = (\hq_1(\alpha \bvrho),\cdots, \hq_T(\alpha \bvrho))$.
Notably, this plug-in version yields a sequence of thresholds that are slightly different from the ones derived by \coinm in Section~\ref{ssec:implementcoin}.
We account for this difference in Proposition~\ref{prop:conservativequantile}.

The following theorem characterizes the non-asymptotic behavior of $\hbq (\alpha \bvrho)$ under the non-atomic assumption. 

\begin{theorem}\label{the:errorofnestedquantile}
    Suppose that every coordinate marginal distribution of $P_{\bS}$
    is non-atomic. Let $(\bS_1,\ldots,\bS_n)$ be an i.i.d.\ sample from
    $P_{\bS}$. If $n > 15, 1 < T < n\log n /9$, 
    then, with probability at least $1-n^{-1}$, the following inequalities
    hold simultaneously for every $t\in[T]$:
    \begin{equation*}%
        \begin{aligned}
            \sup_{\alpha\in(0,1), \bvrho\in\Theta_{\bvrho}}
            \bigg|
            F_{\scriptscriptstyle \bS(\ole t)}
            \Big(
                q_1(\alpha\bvrho),\ldots,
                q_{t-1}(\alpha\bvrho),
                \widehat q_t(\alpha\bvrho)
            \Big) 
            -
            F_{{\scriptscriptstyle \bS(\ole t)}}
            \Big(
                q_1(\alpha\bvrho),\ldots,
                q_t(\alpha\bvrho)
            \Big) 
            \bigg|
            \le
            3t\sqrt{\frac{T\log n}{n}},
        \end{aligned}
    \end{equation*}
    and
    \begin{equation*}
        \sup_{\alpha\in(0,1), \bvrho\in\Theta_{\bvrho}}
        \pr_{P_{\bS}}
        \bigg[
        \Big(
            \bigcap_{j=1}^t
            \{S_j\le\widehat q_j(\alpha\bvrho)\}
        \Big)
        \triangle
        \Big(
            \bigcap_{j=1}^t
            \{S_j\le q_j(\alpha\bvrho)\}
        \Big)
        \bigg]
        \le
        3t\sqrt{\frac{T\log n}{n}}.
    \end{equation*}
\end{theorem}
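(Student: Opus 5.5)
The plan is to reduce everything to a single uniform deviation bound over lower-orthant sets, and then run an induction over $t$ in which one-sided errors accumulate \emph{additively} (linearly in $t$) rather than multiplicatively.

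\textbf{Step 1: a uniform deviation event.} For $t\in[T]$ and $\mathbf v\in\overline{\bbR}^t$, lower orthants $O(\mathbf v)=\bigcap_{j\le t}\{S_j\le v_j\}$ (with $v_j=+\infty$ allowed, so that all $t\le T$ are covered at once) form a VC class in $\bbR^T$ of dimension at most $T$. I would use a multivariate DKW/VC inequality, e.g.\ the Vapnik--Chervonenkis bound with growth function at most $(n+1)^T$. This should give an event $\cE$ with $\pr(\cE)\ge1-n^{-1}$ on which
\[
\sup_{\mathbf v}\bigl|F_{n,\bS}(\mathbf v)-F_\bS(\mathbf v)\bigr|\le\varepsilon_n,\qquad \varepsilon_n\asymp\sqrt{T\log n/n}.
\]
The side conditions $n>15$ and $T<n\log n/9$ serve only to tune the constants. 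They should ensure $\varepsilon_n+1/n\le\tfrac32\sqrt{T\log n/n}$, which is what produces the factor $3t$ after doubling. Because $\cE$ does not depend on $(\alpha,\bvrho)$, all later bounds hold uniformly over $\alpha\in(0,1)$ and $\bvrho\in\Theta_{\bvrho}$ automatically.

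\textbf{Step 2: calibration of each stage.} Write $G_t=\bigcap_{j\le t}\{S_j\le q_j\}$ and $H_t=\bigcap_{j\le t}\{S_j\le\hq_j\}$, both lower orthants. Non-atomicity of the marginal of $S_t$ makes $v\mapsto\pr(G_{t-1}\cap\{S_t\le v\})$ continuous, so $\pr(G_t)=1-\alpha\beta_t$ exactly when $\varrho_t>0$. On the empirical side, minimality of $\hq_t$ together with a.s.\ distinct sample values (jumps of size $1/n$) gives $|P_n(H_t)-(1-\alpha\beta_t)|\le1/n$. On $\cE$ these combine to
\[
|\pr(H_t)-\pr(G_t)|\le\varepsilon_n+1/n .
\]
When $\varrho_t=0$, both thresholds are $+\infty$, so $H_t=H_{t-1}$ and $G_t=G_{t-1}$ and the step is trivial.

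\textbf{Step 3: induction with one-sided errors (the main obstacle).} The naive recursion $D_t\le D_{t-1}+|F(q_{<t},\hq_t)-F(q_{<t},q_t)|$, with $D_t=\pr(H_t\triangle G_t)$, fails. The second term is itself only controlled by $D_{t-1}+O(\varepsilon_n)$, so it doubles the error at every stage and gives exponential growth in $t$. To avoid this, I would track the one-sided masses $a_t=\pr(G_t\setminus H_t)$ and $b_t=\pr(H_t\setminus G_t)$ separately and split on the order of $\hq_t$ and $q_t$.

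If $\hq_t\ge q_t$, then $G_t\setminus H_t\subseteq G_{t-1}\setminus H_{t-1}$, so $a_t\le a_{t-1}$. Moreover $b_t=\pr(H_t)-\pr(G_t)+a_t\le a_{t-1}+\varepsilon_n+1/n$. The case $\hq_t<q_t$ is symmetric, with the roles of $a$ and $b$ swapped. By induction, $\max(a_t,b_t)\le t(\varepsilon_n+1/n)$, so $D_t\le2t(\varepsilon_n+1/n)\le3t\sqrt{T\log n/n}$. This is the second displayed inequality.

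\textbf{Step 4: the first inequality.} I would write
\[
F(q_{<t},\hq_t)-F(q_{<t},q_t)=\pr\bigl(G_{t-1}\cap\{S_t\le\hq_t\}\bigr)-\pr(G_t).
\]
Then compare $G_{t-1}\cap\{S_t\le\hq_t\}$ with $H_t$: the two sets differ by at most $H_{t-1}\triangle G_{t-1}$ intersected with a slab. Applying Step 2 and the one-sided bounds from Step 3 at stage $t-1$ again gives a bound linear in $t$.

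The delicate point in this last step is to apply only the one-sided quantity $a_{t-1}$ or $b_{t-1}$ dictated by the sign of $\hq_t-q_t$, not the full $D_{t-1}$, so that the constant stays at $3t$. I expect this sign-dependent bookkeeping, together with pinning the VC constant in Step 1, to be the main technical work.
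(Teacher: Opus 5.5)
Your proposal is correct. Steps 1, 2 and the decomposition in Step 4 coincide with the paper's. The paper's three-term triangle inequality through $F_{n,\bS(\ole t)}(\hq_1,\ldots,\hq_t)$ is exactly your pivot through $H_t=\hQ_t$, and its first two terms give $\eta_n:=\varepsilon_n+1/n$.

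The genuine difference is how you close the recursion. The paper uses Lemma~\ref{lem:boundsymmetricdifference} to get $D_t\le\pr((A_t\triangle\hA_t)\cap Q_{t-1})+\pr(\hA_t\cap(Q_{t-1}\triangle\hQ_{t-1}))$. By nestedness of $A_t$ and $\hA_t$, the first term is exactly the first-inequality quantity. The paper bounds it via Lemma~\ref{lem:boundintersectionmeasure} by $2\eta_n+\pr(\hA_t^c\cap(Q_{t-1}\triangle\hQ_{t-1}))$. The $\hA_t$ and $\hA_t^c$ pieces of $Q_{t-1}\triangle\hQ_{t-1}$ then recombine into $D_{t-1}$, giving $D_t\le D_{t-1}+2\eta_n$ with no case split.

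Your one-sided bookkeeping avoids the doubling differently. Nestedness of the two threshold events for $S_t$ makes one of $a_t,b_t$ non-increasing, and the mass identity $b_t-a_t=\pr(H_t)-\pr(G_t)$ controls the other. This checks out, including $\varrho_t=0$, which falls in the case $\hq_t\ge q_t$, and it gives the same $D_t\le 2t\eta_n$.

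Your route also yields a sharper first inequality, and it makes Step 4 easier than you fear. Write $\pr(G_{t-1}\cap\{S_t\le\hq_t\})-\pr(H_t)=x-y$ with $0\le x\le a_{t-1}$ and $0\le y\le b_{t-1}$. Then $|x-y|\le\max(a_{t-1},b_{t-1})\le(t-1)\eta_n$ without looking at the sign of $\hq_t-q_t$. The first quantity is therefore at most $t\eta_n$, half the paper's $2t\eta_n$. Even the crude bound $D_{t-1}+\eta_n\le(2t-1)\eta_n$ already suffices for $3t$, so no sign-dependent bookkeeping is needed.

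The one point you leave open does matter for the stated constant: the inequality in Step 1. A textbook VC inequality has exponent $n\varepsilon^2/8$ or worse, instead of the DKW-type $2n\varepsilon^2$. It only gives $\varepsilon_n\approx 2.8\sqrt{T\log n/n}$, which breaks $\eta_n\le\frac32\sqrt{T\log n/n}$ and yields roughly $6t$ rather than $3t$.

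The paper uses Devroye's multivariate DKW bound
\[
\pr\Bigl\{\sup\bigl|F_{n,\bS(\ole t)}-F_{\bS(\ole t)}\bigr|\ge u\Bigr\}\le 2(2n)^t e^{-2n(u-t/n)^2}.
\]
With $\varepsilon_n=\sqrt{T\log n/n}+T/n$, the union over $t$ is at most $4\cdot 2^T n^{-T}\le n^{-1}$; this is where $n>15$ and $T\ge 2$ enter. Then $T\ge2$ and $T<n\log n/9$ give $(T+1)/n\le\frac12\sqrt{T\log n/n}$, i.e.\ $\eta_n\le\frac32\sqrt{T\log n/n}$. With this choice in Step 1, your argument proves the theorem as stated.
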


We now compare the population nested quantiles with the empirical thresholds
used by \coinm. Fix $\alpha\in(0,1)$, and let
$\cD^\ca=\{(X_i,Y_i)\}_{i=1}^n$ be the calibration sample underlying
$(\bS_1,\ldots,\bS_n)$; that is, $S_{t,i}=s_t(X_i,Y_i)$. For
$\bvrho\in\Theta_{\bvrho}$, write the thresholds produced by the
calibration-only recursion in Section~\ref{ssec:implementcoin} as
\begin{equation}
    \label{eq:def-coin-empirical-threshold}
    \htau_t(\alpha\bvrho)
    \coloneqq
    \htau_t(\alpha\bvrho;\cD^\ca)
    \coloneqq
    \htau_t\bigl(\bc(\bvrho;n);\cD^\ca\bigr),
    \qquad t\in[T],
\end{equation}
where $\bc(\bvrho;n)$ is the rejection-count allocation defined in
Section~\ref{sec:opt-via-erm}. Recall from
Section~\ref{ssec:population-utility-loss} that $q_t(\balp)$ is defined for any
nonnegative stagewise budget vector $\balp$ whose entries sum to less than
one. The following proposition uniformly brackets the empirical thresholds
in~\eqref{eq:def-coin-empirical-threshold} by two such population nested
quantiles.

\begin{proposition}
    \label{prop:conservativequantile}
    Suppose that the assumptions and sample-size conditions of Theorem~\ref{the:errorofnestedquantile} hold.
    Denote $a_n \coloneqq 5 \sqrt{T \log n / n}$. 
    If $a_n T (T+1) / 2 < 1-\alpha$, then, 
    with probability at least $1-n^{-1}$, 
    simultaneously for every $\bvrho\in\Theta_{\bvrho}$ and $t\in[T]$,
    \begin{equation}
        \label{eq:uniform-two-sided-quantile-sandwich}
        q_t\left(
            \balp_n^-(\bvrho)
        \right)
        \ge
        \htau_t\bigl(\bc(\bvrho;n);\cD^\ca\bigr)
        \ge
        q_t\left(
            \balp_n^+(\bvrho)
        \right),
    \end{equation}
    where 
    $\balp_n^-(\bvrho)=(\alpha_{n,t}^-(\bvrho))_{t=1}^T$ with
    $\alpha_{n,t}^-(\bvrho)=(\alpha\varrho_t-ta_n)_+$, and
    $\balp_n^+(\bvrho)=(\alpha_{n,t}^+(\bvrho))_{t=1}^T$ with
    $\alpha_{n,t}^+(\bvrho)=\alpha\varrho_t+ta_n$.
\end{proposition}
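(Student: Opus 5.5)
The plan is to work on a single high-probability event and then argue by induction on the stage $t$. The event is the one from Theorem~\ref{the:errorofnestedquantile}, intersected with a uniform empirical-process bound over lower orthants. Concretely, let $\cE_n$ be the event on which
\[
\sup_{t\in[T]}\ \sup_{\mathbf v\in\overline{\bbR}^t}\bigl|F_{n,\bS(\ole t)}(\mathbf v)-F_{\bS(\ole t)}(\mathbf v)\bigr|\le\varepsilon_n,
\qquad \varepsilon_n\asymp\sqrt{T\log n/n},
\]
and on which the symmetric-difference bounds of Theorem~\ref{the:errorofnestedquantile} hold; we also want the strict-inequality version $\{S_r<v_r\}$, which holds by non-atomicity. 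The orthant class in $\bbR^t$ has VC dimension $t\le T$, so a VC/DKW-type inequality gives $\pr(\cE_n)\ge1-n^{-1}$ once $\varepsilon_n$ carries the right constant. I expect that constant to be absorbed into $a_n=5\sqrt{T\log n/n}$ under $n>15$ and $T<n\log n/9$. Everything after this is deterministic on $\cE_n$, so the bracketing is automatically uniform in $\bvrho$ and $t$.

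The deterministic identity I will use is the following. By the recursion \eqref{eq:calibration-only-coin-thresholds} and the absence of ties, the survivor set $\cA_t=\{i:S_{r,i}<\htau_r,\ r\le t\}$ has exactly
\[
n-\sum_{r\le t}c_r(\bvrho;n)=n-\lfloor (n+1)\alpha\beta_t(\bvrho)\rfloor
\]
elements. Hence the empirical mass of the nested survivor region $\widehat R_t=\bigcap_{r\le t}\{S_r<\htau_r\}$ equals $1-\alpha\beta_t(\bvrho)+O(1/n)$. On $\cE_n$ this transfers to the population: $P_{\bS}(\widehat R_t)$ lies in $1-\alpha\beta_t\pm(\varepsilon_n+2/n)$. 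The same holds for the version with $\le$ at stage $t$, which adds exactly one point when $c_t>0$. When $c_t=0$ we have $\htau_t=+\infty$; this matches $q_t=+\infty$ on the relevant side, and the clipping $(\cdot)_+$ in $\balp_n^-$ must be handled as the case where that coordinate is $+\infty$.

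Induction step. Assume \eqref{eq:uniform-two-sided-quantile-sandwich} holds for all $r<t$. For the upper bound $\htau_t\le q_t(\balp_n^-)$, suppose the contrary. By the induction hypothesis and monotonicity of nested regions in their thresholds, the population region $\{S_r\le q_r(\balp_n^-),\,r<t,\ S_t<\htau_t\}$ contains $\widehat R_t$ up to the stage-$t$ strictness. Its mass is at most $1-\sum_{r\le t}\alpha^-_{n,r}$ by minimality in the definition of $q_t(\balp_n^-)$ and non-atomicity. Comparing with the lower bound on $P_{\bS}(\widehat R_t)$ gives
\[
\sum_{r\le t}\alpha^-_{n,r}\le\alpha\beta_t+\varepsilon_n+2/n.
\]
To reach a contradiction I need a strict gap. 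This is where the choice of budgets matters: the slack $ta_n$ at stage $t$ (beyond what the clipped earlier stages already consumed) must exceed the empirical error plus the mass lost when passing from the earlier empirical thresholds $\htau_r$ to $q_r(\balp_n^-)$. That lost mass is controlled through the induction hypothesis and the $3t\sqrt{T\log n/n}$ symmetric-difference bound of Theorem~\ref{the:errorofnestedquantile}. The lower bound $\htau_t\ge q_t(\balp_n^+)$ is the mirror argument, using the empirical mass upper bound and the $\ge$ direction of the quantile definition. The condition $a_nT(T+1)/2<1-\alpha$ ensures that $\sum_r\alpha^+_{n,r}<1$, so that $\bq(\balp_n^+)$ is well defined.

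The main obstacle is this stage-$t$ comparison. The population nested quantile $q_t$ is defined relative to the population earlier thresholds, while $\htau_t$ is defined relative to the empirical ones. Bridging the two requires bounding the mass between nested regions whose first $t-1$ thresholds differ, and doing so only via the inductive bracket. I would first try to show that the accumulated discrepancy at stage $t$ is at most $(t-1)a_n$ plus $\varepsilon_n$, so that the extra $ta_n$ slack dominates it. The clipped coordinates ($\alpha\varrho_r<ra_n$, hence $q_r=+\infty$) need separate bookkeeping, since there the bracket is trivially true but the budget bookkeeping shifts.
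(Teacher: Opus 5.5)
There is a genuine gap, and it sits exactly at the stage-$t$ comparison you flag as the main obstacle.

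\textbf{The contradiction runs the wrong way.} Suppose $\alpha^-_{n,t}>0$ and $\htau_t>q_t(\balp_n^-)$. Then $\{S_t<\htau_t\}\supseteq\{S_t\le q_t(\balp_n^-)\}$, so the region $\{S_r\le q_r(\balp_n^-),\,r<t,\ S_t<\htau_t\}$ has population mass \emph{at least} $1-\sum_{r\le t}\alpha^-_{n,r}$. Minimality of the generalized inverse gives an upper bound only for thresholds at or below $q_t(\balp_n^-)$. You pair your (incorrect) upper bound with a \emph{lower} bound on $\pr(\widehat R_t)$ and obtain $\sum_{r\le t}\alpha^-_{n,r}\le\alpha\beta_t(\bvrho)+\varepsilon_n+2/n$. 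This inequality always holds, since $\alpha^-_{n,r}\le\alpha\varrho_r$, so no strict gap can be extracted from it. To contradict the lower bound you need an \emph{upper} bound on $\pr(A\cap C)$, where $A=\{S_r\le q_r(\balp_n^-),\,r<t\}$, $B=\{S_r\le\htau_r,\,r<t\}\subseteq A$ (by the induction hypothesis), and $C=\{S_t\le\htau_t\}$.

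\textbf{The proposed tool does not supply that bound.} Theorem~\ref{the:errorofnestedquantile} compares the plug-in quantiles $\hq_j(\alpha\bvrho)$ with $q_j(\alpha\bvrho)$, not $\htau_j$ with $q_j(\balp_n^\pm)$. The induction hypothesis orders thresholds but bounds no mass. Nor is the excess small: since $B\subseteq A$, $\pr(A\setminus B)=\pr(A)-\pr(B)\approx\alpha\beta_{t-1}(\bvrho)-\sum_{r<t}\alpha^-_{n,r}$. This can be as large as $\sum_{r<t}ra_n=t(t-1)a_n/2$, which exceeds the slack $ta_n$ once $t\ge4$, so it cannot be treated as an error term.

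\textbf{The missing idea} is that this excess is cancelled exactly by the lowered target level. Use three facts:
\begin{itemize}
\item the inclusion bound $\pr(A\cap C)\le\pr(B\cap C)+\pr(A)-\pr(B)$;
\item the exact identity $\pr(A)=1-\sum_{r<t}\alpha^-_{n,r}$, which follows from non-atomicity;
\item the survivor-count and uniform-CDF bounds $|\pr(B\cap C)-\{1-\alpha\beta_t(\bvrho)\}|<e$ and $|\pr(B)-\{1-\alpha\beta_{t-1}(\bvrho)\}|<e$, with $e<11a_n/12$.
\end{itemize}
When $\alpha^-_{n,t}>0$ (otherwise $q_t(\balp_n^-)=+\infty$ and there is nothing to prove), these telescope:
\[
\pr(A\cap C)<1-\sum_{r\le t}\alpha^-_{n,r}-ta_n+2e<1-\sum_{r\le t}\alpha^-_{n,r},\qquad t\ge2.
\]
Hence $\htau_t$ lies outside the up-set defining $q_t(\balp_n^-)$, giving $q_t(\balp_n^-)\ge\htau_t$.

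The $\balp_n^+$ side is the mirror image. Take $B'=\{S_r\le q_r(\balp_n^+),\,r<t\}\subseteq B$ and use $\pr(B'\cap C)\ge\pr(B\cap C)-\pr(B)+\pr(B')$.

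So the per-stage slack only has to absorb two uniform-CDF errors, independent of $t$. This is the paper's argument, and it needs nothing from Theorem~\ref{the:errorofnestedquantile} beyond its DKW event.
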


Finally, we compare the two prediction sequences produced by the population nested quantiles and the empirical thresholds used by \coinm.
Specifically, thresholds in~\eqref{eq:def-coin-empirical-threshold} induce
\begin{equation}
    \label{eq:coin-threshold-representation}
    \hcC_t^\coin
    \big(
        x;\cD^\ca,\bc(\bvrho;n)
    \big)
    =
    \left\{
        y\in\cY:
        s_{t'}(x,y)
        \le
        \htau_{t'}\bigl(\bc(\bvrho;n);\cD^\ca\bigr),
        \ \forall\,t'\le t
    \right\}.
\end{equation}
Meanwhile, the population nested quantiles induce the population-level prediction sequence as
\begin{equation*}
    \cC_t^\coin(x;\balp)
    :=
    \left\{
        y\in\cY:
        s_{t'}(x,y)
        \le
        q_{t'}(\balp),
        \ \forall\,t'\le t
    \right\}, 
\end{equation*}
for any nonnegative $\balp=(\alpha_1,\ldots,\alpha_T)$ satisfying
$\sum_{t=1}^T\alpha_t<1$.
Then, Proposition~\ref{prop:conservativequantile} immediately implies a utility comparison, which is formalized in the following corollary. 

\begin{corollary}\label{cor:nested-quantile-utility-comparison}
    Suppose that the conditions of
    Proposition~\ref{prop:conservativequantile} hold and that $u$ is
    set-wise monotone in the sense of
    \eqref{eq:monotone-utility}. Then, with probability at least
    $1-n^{-1}$, simultaneously for every
    $\bvrho\in\Theta_{\bvrho}$ and $x\in\cX$,
    \begin{equation}\label{eq:uniform-utility-sandwich}
        u\Big(
            \big\{
                \cC_t^\coin
                \big(
                    x;
                    \balp_n^-(\bvrho)
                \big)
            \big\}_{t=1}^T
        \Big)
        \le 
        u\Big(
            \big\{
                \hcC_t^\coin
                \big(
                    x;
                    \cD^\ca,
                    \bc(\bvrho;n)
                \big)
            \big\}_{t=1}^T
        \Big)
        \le 
        u\Big(
            \big\{
                \cC_t^\coin
                \big(
                    x;
                    \balp_n^+(\bvrho)
                \big)
            \big\}_{t=1}^T
        \Big).
    \end{equation}
\end{corollary}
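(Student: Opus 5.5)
This corollary should follow pointwise from the threshold bracketing in Proposition~\ref{prop:conservativequantile}, combined with the set-wise monotonicity~\eqref{eq:monotone-utility}. Let $\Omega_n$ be the event of probability at least $1-n^{-1}$ on which~\eqref{eq:uniform-two-sided-quantile-sandwich} holds simultaneously for every $\bvrho\in\Theta_{\bvrho}$ and $t\in[T]$. On this event we prove both inequalities in~\eqref{eq:uniform-utility-sandwich} deterministically, for all $\bvrho$ and $x$ at once. Uniformity over $x$ then comes for free, because $\Omega_n$ does not depend on $x$.

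We first check that both budget vectors are admissible. The population sets $\cC_t^\coin(x;\balp)$ are defined only for nonnegative $\balp$ with entries summing to less than one. For $\balp_n^-(\bvrho)$ this is immediate, since its sum is at most $\alpha<1$. For $\balp_n^+(\bvrho)$ the sum equals $\alpha+a_nT(T+1)/2$, which is below one by the hypothesis $a_nT(T+1)/2<1-\alpha$. So both $\bq(\balp_n^\pm(\bvrho))$ are well defined.

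Next, fix $\bvrho$, $x$ and $t$ on $\Omega_n$. By~\eqref{eq:coin-threshold-representation} and the definition of $\cC_t^\coin(x;\balp)$, each set is an intersection over $t'\le t$ of sublevel sets $\{y:s_{t'}(x,y)\le\tau_{t'}\}$. These sublevel sets are monotone nondecreasing in the threshold $\tau_{t'}$, and intersections preserve inclusion. The bracketing $q_{t'}(\balp_n^-)\ge\htau_{t'}\ge q_{t'}(\balp_n^+)$ for all $t'\le t$ therefore gives
\[
\cC_t^\coin\bigl(x;\balp_n^+(\bvrho)\bigr)\subseteq\hcC_t^\coin\bigl(x;\cD^\ca,\bc(\bvrho;n)\bigr)\subseteq\cC_t^\coin\bigl(x;\balp_n^-(\bvrho)\bigr),
\qquad t\in[T].
\]
Set-wise monotonicity~\eqref{eq:monotone-utility} turns smaller sets into larger utility. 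Applying it to the right inclusion gives the lower bound in~\eqref{eq:uniform-utility-sandwich}, and applying it to the left inclusion gives the upper bound.

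The only delicate point is bookkeeping for infinite thresholds. When $\alpha_{n,t'}^-=0$ we have $q_{t'}=+\infty$, and when $c_{t'}=0$ we have $\htau_{t'}=+\infty$. In both cases the sublevel set is all of $\cY$, so the monotonicity argument works unchanged over $\overline{\bbR}$. All the probabilistic content sits in Proposition~\ref{prop:conservativequantile}, so no further obstacle is expected.
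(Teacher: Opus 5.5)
Your proposal is correct and follows essentially the same route as the paper: on the single event from Proposition~\ref{prop:conservativequantile}, the threshold sandwich $q_t(\balp_n^-(\bvrho))\ge\htau_t\ge q_t(\balp_n^+(\bvrho))$ gives the nested set inclusions via~\eqref{eq:coin-threshold-representation}, and set-wise monotonicity~\eqref{eq:monotone-utility} converts these into the utility sandwich, uniformly in $\bvrho$ and $x$. Your extra checks (admissibility of $\balp_n^{\pm}(\bvrho)$ and the handling of $+\infty$ thresholds) are harmless additions that the paper leaves implicit.
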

\section{Uniform Convergence Results}\label{asec:uniform-convergence-delta}

This section considers two settings in which the function classes induced by
the threshold-indexed prediction sets and their utility functions are
Glivenko--Cantelli, implying that the required uniform deviation is
$o_p(1)$ under suitable growth conditions.
Their proofs are collected in
Sections~\ref{assec:proof-uniform-utility-class}
and~\ref{assec:proof-uniform-aggregated-size}.

The first result establishes uniform convergence for bounded utility functions in classification:

\begin{lemma}[Uniform convergence of bounded utility for classification]
    \label{lem:uniform-utility-classification}
    Suppose that $X_1,\ldots,X_n$ are drawn i.i.d.\ from $P_X$, that the score functions $\{s_t\}_{t = 1}^T$ are trained on data independent of $\{X_i\}_{i=1}^n$, and that $\cY=[M]$.
    Suppose also that
    \begin{equation*}
        0
        \le
        u\big(\{\cC_t\}_{t=1}^T\big)
        \le
        U_{\rm max} < \infty,
    \end{equation*}
    for every prediction sequence $\{\cC_t\}_{t=1}^T$.
    If $n>15$, then, with probability at least $1-n^{-1}$,
    \begin{equation}
        \label{eq:uniform-utility-classification}
        \begin{aligned}
        \sup_{\btau\in\overline{\bbR}^T}
        \left|
            \frac{1}{n}
            \sum_{i=1}^n
            u\left(
                \big\{
                    \cC_t(X_i;\btau)
                \big\}_{t=1}^T
            \right)
            -
            \E_X
            \left[
                u\left(
                    \big\{
                        \cC_t(X;\btau)
                    \big\}_{t=1}^T
                \right)
            \right]
        \right|
        \le 
        3U_{\rm max}
        \sqrt{
            \frac{T\log(2nM)}{n}
        }.
        \end{aligned}
    \end{equation}
\end{lemma}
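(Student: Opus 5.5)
The plan is a finite-class union bound: reduce the supremum over $\btau\in\overline{\bbR}^T$ to finitely many distinct behaviours, apply Hoeffding's inequality to each, and take a union bound. The key observation is that, for a fixed $x$, the whole sequence $\{\cC_t(x;\btau)\}_{t=1}^T$ is determined by the $T$ subsets
\[
A_t(x;\tau_t)=\{y\in[M]:s_t(x,y)\le\tau_t\},\qquad t\in[T],
\]
because $\cC_t(x;\btau)=\bigcap_{r\le t}A_r(x;\tau_r)$. Hence the integrand $u(\{\cC_t(x;\btau)\}_{t=1}^T)$ is a function of the vector of indicators $(\bbI\{s_t(x,y)\le\tau_t\})_{t\in[T],y\in[M]}$, and it depends on $\btau$ only through this vector.

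\textbf{Step 1 (finite class on the sample).} Condition on the score functions, which are independent of the $X_i$'s. For each coordinate $t$, consider the finite set of values $V_t=\{s_t(X_i,y):i\in[n],y\in[M]\}$, of size at most $nM$. As $\tau_t$ ranges over $\overline{\bbR}$, the indicator pattern $(\bbI\{s_t(X_i,y)\le\tau_t\})_{i,y}$ changes only when $\tau_t$ crosses a point of $V_t$, so it takes at most $nM+1\le 2nM$ distinct values. Taking the product over $t$, the empirical utilities realize at most $(2nM)^T$ distinct functions on the sample.

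This alone does not control the population term, since the population expectation depends on $\btau$ itself and not only on the sample pattern. The standard fix is symmetrization with a ghost sample of another $n$ points. On the combined $2n$ points the same counting gives at most $(4nM)^T$ patterns. For each fixed pattern, Hoeffding's inequality applies to the bounded summands $u\in[0,U_{\rm max}]$ under random swaps, which yields
\[
\pr\Big(\sup_{\btau}|\cdot|>\epsilon\Big)\le 4(4nM)^T\exp\!\big(-n\epsilon^2/(8U_{\rm max}^2)\big).
\]

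\textbf{Step 2 (population term by monotone discretization).} To land on the stated constant $3$ and $\log(2nM)$, I would avoid symmetrization if possible. For fixed $y$, the map $\tau_t\mapsto \bbI\{s_t(X,y)\le\tau_t\}$ is monotone, so one could use a population-quantile grid for each pair $(t,y)$. I would first try the VC/ghost-sample route and then tune constants. Setting $\epsilon=3U_{\rm max}\sqrt{T\log(2nM)/n}$ and using $n>15$, the failure probability should come out at most $n^{-1}$ once $T\log(2nM)$ absorbs the $\log$ of the cardinality and the $\log n$ term.

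\textbf{Main obstacle.} The delicate step is matching the exact constants, namely $3$ and $\log(2nM)$ rather than $\log(4nM)$ together with a larger multiplier. The union-bound cardinality and Hoeffding's constant must be balanced against the $1-n^{-1}$ confidence level. If the symmetrization constants prove too lossy, I would instead discretize each $\tau_t$ at the empirical breakpoints. The function $u$ is only bounded and not monotone in $\btau$, so bracketing is unavailable, and the combinatorial route appears to be the reliable path. The constant would then be adjusted using $T\ge1$ and $n>15$ so that $\log 4+\log n\le T\log(2nM)\cdot c$ with slack.
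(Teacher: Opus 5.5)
Your reduction to sample traces is correct: for fixed data each $\tau_t$ induces at most $nM+1$ indicator arrays, so $\bigl(u(\{\cC_t(X_i;\btau)\}_{t=1}^T)\bigr)_{i\le n}$ takes at most $(nM+1)^T$ values. You also rightly note that this count alone does not control $\E_X[u(\cdot)]$. The gap is in turning this count into the stated inequality. Your ghost-sample, random-swap and Hoeffding union bound gives
\[
\pr\Bigl(\sup_{\btau}|\cdot|>\epsilon\Bigr)\le 4(4nM)^T\exp\{-n\epsilon^2/(8U_{\rm max}^2)\}.
\]
At $\epsilon=3U_{\rm max}\sqrt{T\log(2nM)/n}$ the right-hand side equals $4\cdot 2^T(2nM)^{-T/8}$, which is not at most $n^{-1}$ in general. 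For $T=1$ it is $8(2nM)^{-1/8}$, which drops below $n^{-1}$ only if $M\ge 2^{23}n^7$.

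So, as set up, your route proves a uniform bound with a larger constant, not \eqref{eq:uniform-utility-classification}. The remedies you list do not fix this. Discretizing $\btau$ at the empirical breakpoints leaves the population term uncontrolled, because $\E_X[u(\{\cC_t(X;\btau)\}_{t=1}^T)]$ is not constant between those breakpoints. And the facts $T\ge1$, $n>15$ cannot in general make an exponent of $9T\log(2nM)/8$ absorb $\log\{4(4nM)^T\}+\log n$.

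The missing idea, which is what the paper does, is to bound the expectation and the concentration separately.
\begin{enumerate}
\item Center $g(x;\btau)=u(\{\cC_t(x;\btau)\}_{t=1}^T)-U_{\rm max}/2$, so that $|g|\le U_{\rm max}/2$.
\item Symmetrization gives $\E\sup_{\btau}|n^{-1}\sum_i g(X_i;\btau)-\E_X g(X;\btau)|\le 2\E\sup_{\btau}|n^{-1}\sum_i\varsigma_i g(X_i;\btau)|$.
\item The Rademacher average involves only the sample traces. Apply Massart's finite-class lemma to at most $2(nM+1)^T$ vectors (the traces and their negatives), each of Euclidean norm at most $U_{\rm max}\sqrt n/2$. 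This bounds the expected supremum by $U_{\rm max}\sqrt{2\{T\log(nM+1)+\log 2\}/n}\le 2U_{\rm max}\sqrt{T\log(2nM)/n}$.
\item The supremum has bounded differences $U_{\rm max}/n$. McDiarmid's inequality at level $n^{-1}$ therefore adds $U_{\rm max}\sqrt{\log n/(2n)}\le U_{\rm max}\sqrt{T\log(2nM)/(2n)}$.
\end{enumerate}
Since $2+1/\sqrt2<3$, this gives the stated constant.

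The centering is not cosmetic: without it the Massart term doubles and the constant exceeds $3$. In any version you would also need the paper's reduction of the supremum to the countable set $(\mathbb Q\cup\{-\infty,+\infty\})^T$, so that the supremum is measurable.
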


Intuitively, for a fixed sample, although $\btau$ ranges over a continuous space, the induced prediction sets change only when a coordinate $\tau_t$ crosses one of the values in $\{s_t(X_i,y): i\in[n],\,y\in[M]\}$.
Consequently, the threshold vectors induce at most $(nM+1)^T$ distinct prediction-set configurations on the sample.
This finite reduction yields a uniform deviation bound for any bounded process-level utility.

The early-resolution utility used in Sections~\ref{ssec:sfa-simulation}
and~\ref{ssec:sfa-realdata} is a direct instance:
\[
    u\bigl(\{\cC_t\}_{t=1}^T\bigr)
    =
    \frac{
        T+1-\min\{t\in[T]:|\cC_t|\leq1\}
    }{T},
\]
with $\min\varnothing=T+1$. This utility takes values in $[0,1]$, so
Lemma~\ref{lem:uniform-utility-classification} applies with
$U_{\rm max}=1$.

The second lemma establishes the uniform law of large numbers underlying $\Delta_{n_1}$ for the aggregated-set-size utility:

\begin{lemma}[Uniform convergence of aggregated-set size]
    \label{lem:uniform-aggregated-set-size}
    Suppose that $X_1,\cdots,X_n$ are generated i.i.d.\ from $P_X$ and $\{s_t\}_{t = 1}^T$ are trained on independent data. 
    Suppose also that either $\cY=[0,M]$, with set size given by Lebesgue measure, or $\cY=[M]$, with set size given by cardinality.
    If $n>15$, then, conditional on the independently trained score functions, with probability at least $1-n^{-1}$,
    \begin{equation}
        \label{eq:uniform-aggregated-set-size}
        \sup_{\btau\in\overline{\bbR}^T}
        \left|
            \frac{1}{n}\sum_{i=1}^{n}
            \big|\cC_T(X_i;\btau)\big|
            -
            \E_X\big[\big|\cC_T(X;\btau)\big|\big]
        \right|
        \le
        5M\sqrt{\frac{T\log n}{n}}.
    \end{equation}
\end{lemma}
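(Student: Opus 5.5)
The plan is to write the set size as an integral of indicators over $\cY$, so that for each fixed candidate $y$ the deviation becomes an empirical-process deviation over lower orthants in $\bbR^T$. We then move the supremum inside the integral, bound its expectation by symmetrization, and obtain the high-probability statement from McDiarmid's inequality.

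\textbf{Step 1 (reduction to a fixed $y$).} Let $\nu$ be Lebesgue measure on $[0,M]$ or counting measure on $[M]$; in both cases $\nu(\cY)=M$. Define
\[
f_{\btau,y}(x)\coloneqq\bbI\{s_t(x,y)\le\tau_t\ \forall t\in[T]\}.
\]
Then $|\cC_T(x;\btau)|=\int_\cY f_{\btau,y}(x)\,\rmd\nu(y)$. By Fubini, the quantity inside the supremum in~\eqref{eq:uniform-aggregated-set-size} equals $\int_\cY (P_n-P)f_{\btau,y}\,\rmd\nu(y)$, where $P_n$ is the empirical measure of $X_1,\ldots,X_n$. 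Hence the left side is at most
\[
G\coloneqq\int_\cY \sup_{\btau}\bigl|(P_n-P)f_{\btau,y}\bigr|\,\rmd\nu(y).
\]
The supremum over $\btau\in\overline{\bbR}^T$ can be restricted to a countable set, namely rational $\btau$ together with infinite coordinates, by right-continuity of orthant indicators in $\btau$. This makes the integrand jointly measurable in $(y,X_{1:n})$ and justifies Fubini. Conditional on the independently trained scores, each $s_t$ is a fixed measurable function, so nothing further is needed there.

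\textbf{Step 2 (expectation for fixed $y$).} For fixed $y$, the class $\{f_{\btau,y}\}_{\btau}$ is the class of lower orthants applied to the random vector $(s_1(X,y),\ldots,s_T(X,y))$. On any $2n$ points it induces at most $(2n+1)^T$ distinct labelings, since each coordinate threshold has at most $2n+1$ effective positions. Symmetrization with a ghost sample, combined with Massart's finite-class lemma, then gives
\[
\E\sup_{\btau}\bigl|(P_n-P)f_{\btau,y}\bigr|\le 2\sqrt{2T\log(2n+1)/n},
\]
uniformly in $y$. Integrating over $y$ yields $\E G\le 2M\sqrt{2T\log(2n+1)/n}$.

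\textbf{Step 3 (concentration).} Replacing a single $X_i$ changes each $(P_n-P)f_{\btau,y}$ by at most $1/n$, and therefore changes $G$ by at most $M/n$. McDiarmid's inequality then gives $G\le \E G+M\sqrt{\log n/(2n)}$ with probability at least $1-n^{-1}$.

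\textbf{Step 4 (constants).} Using $n>15$, we have $\log(2n+1)\le 1.5\log n$, so $2\sqrt{2\cdot1.5}\le 3.47$. Adding $\sqrt{1/2}\le 0.71$, and using $T\ge1$, the total is below $5M\sqrt{T\log n/n}$.

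The main obstacle is Step 1. The supremum over $\btau$ must be moved inside the $\nu$-integral; this loses nothing in the rate but does require the measurability argument. It is also what avoids any uniformity over $y$, which would fail in general because the maps $y\mapsto s_t(\cdot,y)$ are arbitrary. If the constant in Step 4 turns out to be too tight, the fallback is to use the sharper count of $(n+1)^T$ patterns via a direct Rademacher bound on the observed sample.
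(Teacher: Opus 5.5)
Your proposal is correct and follows essentially the same route as the paper. Both write the set size as a $\nu$-integral of lower-orthant indicators, bound the fixed-$y$ expected supremum by symmetrization plus a finite-class (Massart) bound, move the supremum inside the integral via the countable-rational measurability argument, and finish with bounded differences of size $M/n$ and the $n>15$ constants.

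The differences are cosmetic. The paper symmetrizes with Rademacher signs directly on the observed sample, counting $(n+1)^T$ patterns to get $2\sqrt{2\{T\log(n+1)+\log 2\}/n}$; this is exactly your stated fallback. The paper also applies bounded differences to the original supremum rather than to your integrated upper bound $G$. Both variants close with the stated constant.
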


For the terminal-set utility used in ordered score aggregation,
\[
    u\bigl(\{\cC_t(x;\btau)\}_{t=1}^T\bigr)
    =
    -\left|\bigcap_{t=1}^T\cC_t(x;\btau)\right|
    =
    -|\cC_T(x;\btau)|,
\]
where the second equality follows from nestedness. Hence
Lemma~\ref{lem:uniform-aggregated-set-size} gives the corresponding bound on
$\Delta_n$.

\section{Technical Proofs for Main-Text Results}
\label{asec:proofs-main}

The proofs are presented in the order in which the corresponding results
appear in the main text.

\subsection{Proof of Theorem~\ref{the:basicresult}}\label{assec:proof-basic-result}

\paragraph{Proof idea.}
The proof has two steps. First, Condition~\ref{p3:per-inv} implies
inductively that, conditional on $\cF_t^{Y_{n+1}}$, the surviving
observations remain exchangeable over the surviving indices. Second, a
backward counting argument shows that the distinguished test index is
eventually excluded with probability $c_{\rm tot}/(n+1)$.
Condition~\ref{p1:nondecreasing} identifies this terminal exclusion event
with any-stage miscoverage. We now formalize these steps.

\begin{proof}[Proof of Theorem~\ref{the:basicresult}]
    By~\eqref{eq:pred-set-from-seq-test} and monotone exclusion,
    \[
        \left\{
            \exists\,t\in[T]:
            Y_{n+1}\notin\hcC_t(X_{n+1})
        \right\}
        =
        \left\{
            \delta_T^{Y_{n+1}}(Z_{n+1})=1
        \right\}.
    \]
    Thus, it suffices to evaluate all quantities at $y=Y_{n+1}$.
    Here $\{\cF_t^{Y_{n+1}}\}_{t=0}^T$ is the filtration obtained by
    applying the recursion in Condition~\ref{p3:per-inv} to the actual
    augmented sample $\ocD(Y_{n+1})$.  The measurability clause in that
    condition ensures that the resulting exclusion sets are well defined.

    We first note that, conditional on $\cF_t^{Y_{n+1}}$, the observations that remain unrejected are exchangeable over the remaining indices.
    Formally, for every $i\in[n+1]$ and every measurable set
    $A\subseteq\cX\times\cY$,
    \begin{equation*}
        \pr\left\{
            Z_i^{Y_{n+1}}\in A
            \mid 
            \cF_t^{Y_{n+1}},
            \, i \notin \cR_t^{Y_{n+1}}
        \right\}
        =
        \frac{
            \sum_{j\in[n+1]\setminus\cR_t^{Y_{n+1}}}
            \bbI\big\{Z_j^{Y_{n+1}}\in A\big\}
        }{
            (n+1)-|\cR_t^{Y_{n+1}}|
        }.
    \end{equation*}
    At $t=0$, this follows from Assumption~\ref{cond:exchangeability} after conditioning on the unordered augmented dataset and the score functions.
    Moreover, if the property holds at stage~$t-1$, the common value-based rule at stage~$t$ selects a subset from the multiset of the unrejected observations using only $\cF_{t-1}^{Y_{n+1}}$ and not their index labels.
    Therefore, revealing $\cR_t^{Y_{n+1}}$ and the multiset of rejected observations leaves the assignment of the remaining observations to the remaining indices uniform.
    The result follows by induction.

    For $t=0,\ldots,T-1$, let
    $d_{t+1}\coloneqq
    |\cR_{t+1}^{Y_{n+1}}\setminus\cR_t^{Y_{n+1}}|$ and
    $c_{\rm tot}\coloneqq\lfloor(n+1)\alpha\rfloor$.
    The quantity $d_{t+1}$ is $\cF_t^{Y_{n+1}}$-measurable because the
    stage-$(t+1)$ exclusion decisions use only $\cF_t^{Y_{n+1}}$, which
    determines the multiset of unrejected observations.
    Hence, conditional exchangeability gives
    \begin{equation*}
        \begin{aligned}
            & \,  \pr \Big\{
                n+1 \in \cR_{t+1}^{Y_{n+1}}
                \mid 
                \cF_t^{Y_{n+1}}, n+1 \notin \cR_t^{Y_{n+1}}
            \Big\}
            \\ 
            = & \, 
            \pr\Big\{
                n+1 \in \cR_{t+1}^{Y_{n+1}} \setminus \cR_t^{Y_{n+1}}
                \mid 
                \cF_t^{Y_{n+1}}, n+1 \notin \cR_t^{Y_{n+1}}
            \Big\}
            =
            \frac{d_{t+1}}{(n+1)-|\cR_t^{Y_{n+1}}|}.
        \end{aligned}
    \end{equation*}

    We now show by backward induction that, on
    $\{n+1\notin\cR_t^{Y_{n+1}}\}$,
    \begin{equation*}
        \pr\left\{
            n+1 \in \cR_T^{Y_{n+1}}
            \mid
            \cF_t^{Y_{n+1}}
        \right\}
        =
        \frac{c_{\rm tot}-|\cR_t^{Y_{n+1}}|}{(n+1)-|\cR_t^{Y_{n+1}}|}.
    \end{equation*}
    At $t=T$, both sides are zero.
    Suppose the equality holds at stage $t+1$.
    Conditional on $\cF_t^{Y_{n+1}}$ and $n+1 \notin \cR_t^{Y_{n+1}} $, the test index
    either is newly rejected at stage $t+1$ or remains among the
    unrejected indices.
    Using the preceding rejection probability and the induction
    hypothesis gives
    \begin{equation*}
        \begin{aligned}
        & \, \pr\left\{
            n+1 \in \cR_T^{Y_{n+1}}
            \mid
            \cF_t^{Y_{n+1}}
        \right\}
        \\ = & \,
        \frac{d_{t+1}}{(n+1)-|\cR_t^{Y_{n+1}}|}
        + %
        \frac{(n+1)-|\cR_t^{Y_{n+1}}|-d_{t+1}}
             {(n+1)-|\cR_t^{Y_{n+1}}|} \times 
        \frac{c_{\rm tot} - |\cR_t^{Y_{n+1}}|-d_{t+1}}
             {(n+1)-|\cR_t^{Y_{n+1}}|-d_{t+1}}
        \\
        = & \, 
        \frac{c_{\rm tot} - |\cR_t^{Y_{n+1}}|}
             {(n+1)-|\cR_t^{Y_{n+1}}|}.
        \end{aligned}
    \end{equation*}
    This proves the backward-induction claim.

    Since $\cR_0^{Y_{n+1}} = \varnothing$, it follows that
    \begin{equation*}
        \pr\{n+1\in\cR_T^{Y_{n+1}}\}
        =
        \frac{c_{\rm tot}}{n+1}
        =
        \frac{\lfloor(n+1)\alpha\rfloor}{n+1}.
    \end{equation*}
    Combining this identity with the first display proves the result.
\end{proof}

\subsection{Proof of Theorem~\ref{the:universality-principles}}\label{assec:proof-universality}

\begin{proof}[Proof of Theorem~\ref{the:universality-principles}]
    We adapt the leave-one-out construction used in the proof of the
    classical universality theorem for full conformal prediction
    \citep[Theorem~9.7]{angelopoulos2024theoretical} to an entire nested
    prediction sequence.

    For a dataset $\cD$ and a point $z$, let $\cD_{\setminus z}$ be obtained by removing one copy of $z$ if it appears in $\cD$, and leave $\cD$ unchanged otherwise.
    Then, for every $y\in\cY$, $t\in[T]$, and
    $z=(x',y')\in\cX\times\cY$, define
    \begin{equation*}
        \delta_t^y(z)
        :=
        \bbI\big\{
            y'\notin\cC_t(x';\ocD(y)_{\setminus z})
        \big\},
        \qquad y\in\cY,\ t\in[T].
    \end{equation*}
    Since removing $Z_{n+1}^y=(X_{n+1},y)$ from $\ocD(y)$ leaves $\cD^\ca$ up to a permutation, 
    we have $\delta_t^y(Z_{n+1}^y) = \bbI\{y\notin\cC_t(X_{n+1};\cD^\ca)\}$.
    Consequently, these exclusion functions induce the same prediction sequence $\{\cC_t\}_{t = 1}^T$:
    \begin{equation*}
        \hcC_t(X_{n+1})
        =
        \big\{
            y\in\cY:
            \delta_t^y(Z_{n+1}^y)=0
        \big\}
        =
        \cC_t(X_{n+1};\cD^\ca).
    \end{equation*}
    It remains to verify the three structural conditions in order.

    \emph{Condition~\ref{p1:nondecreasing}.}
    If $\delta_{t-1}^y(z)=1$, then
    $y'\notin\cC_{t-1}(x';\ocD(y)_{\setminus z})$.
    Nestedness implies
    $y'\notin\cC_t(x';\ocD(y)_{\setminus z})$, and hence
    $\delta_t^y(z)=1$.
    Therefore, $\delta_{t-1}^y(z)\leq\delta_t^y(z)$ for every $t>1$,
    $y\in\cY$, and $z\in\cX\times\cY$.

    \emph{Condition~\ref{p2:rejection-count}.}
    Fix $y\in\cY$ and regard the augmented multiset
    $\Lbag\ocD(y)\Rbag$ as fixed.
    Write $Z_i^y=(X_i^y,Y_i^y)$, and let $\pi$ be a uniformly random
    permutation of $[n+1]$.
    Thus, the random sequence
    $(Z_{\pi(1)}^y,\cdots,Z_{\pi(n+1)}^y)$ is exchangeable.
    Let $\{Z_{\pi(j)}^y\}_{j=1}^n$ serve as the calibration data and
    $Z_{\pi(n+1)}^y$ as the test point.
    Since the assumed AMR guarantee holds for every exchangeable sequence, it also applies to this randomly permuted sequence.
    Consequently, we have 
    \begin{equation}\label{eq:apply-assumed-amr}
        \pr_\pi\Big\{
            \exists\,t\in[T]:
            Y_{\pi(n+1)}^y
            \notin
            \cC_t\big(
                X_{\pi(n+1)}^y;
                \{Z_{\pi(j)}^y\}_{j=1}^n
            \big)
        \Big\}
        = 
        \frac{\lfloor(n+1)\alpha\rfloor}{n+1},
    \end{equation}
    where $\pr_\pi$ denotes probability over $\pi$.

    Now consider the event $\{\pi(n+1)=i\}$.
    Conditioning on this event, the calibration data contain precisely all observations in $\ocD(y)$ except one copy of $Z_i^y$;
    in particular, we have 
    $\Lbag\{Z_{\pi(j)}^y\}_{j=1}^n\Rbag
    =
    \Lbag\ocD(y)_{\setminus Z_i^y}\Rbag$.
    By symmetry of $\cC_T$, the ordering of these $n$ calibration observations is irrelevant.
    Hence, conditional on $\pi(n+1)=i$, the final-stage exclusion indicator satisfies
    \begin{equation*}
        \bbI\big\{
            Y_i^y
            \notin
            \cC_T(X_i^y;\ocD(y)_{\setminus Z_i^y})
        \big\}
        =
        \delta_T^y(Z_i^y).
    \end{equation*}
    Since $\pi(n+1)$ is uniform over $[n+1]$, averaging over the possible
    test indices yields
    \begin{equation*}
        \pr_\pi\Big\{
            Y_{\pi(n+1)}^y
            \notin
            \cC_T\big(
                X_{\pi(n+1)}^y;
                \{Z_{\pi(j)}^y\}_{j=1}^n
            \big)
        \Big\}
        =
        \sum_{i=1}^{n+1}
        \pr_\pi\{\pi(n+1)=i\}
        \delta_T^y(Z_i^y)
        =
        \frac{1}{n+1}
        \sum_{i=1}^{n+1}\delta_T^y(Z_i^y).
    \end{equation*}
    Nestedness makes the any-stage event
    in~\eqref{eq:apply-assumed-amr} identical to the final-stage event in
    the last display. Combining the two identities gives
    Condition~\ref{p2:rejection-count}.

    \emph{Condition~\ref{p3:per-inv}.}
    Initialize
    $\cF_0^y=\sigma\bigl(\Lbag\ocD(y)\Rbag,s_1\bigr)$ and
    $\cR_0^y=\varnothing$,
    and, for $t\in[T]$, recursively set
    \begin{equation*}
        \begin{aligned}
            \cR_t^y
            :=
            \big\{
                i\in[n+1]:
                \delta_t^y(Z_i^y)=1
            \big\},
            \qquad 
            \cF_t^y
            :=
            \cF_{t-1}^y
            \vee
            \sigma\Big(
                \cR_t^y,
                \Lbag Z_i^y:i\in\cR_t^y\Rbag,
                s_{t+1}
            \Big).
        \end{aligned}
    \end{equation*}
    As in Condition~\ref{p3:per-inv}, the final argument is omitted when
    $t=T$.
    For every $t$, $\cF_{t-1}^y$ contains $\cF_0^y$ and therefore
    determines the full augmented multiset $\Lbag\ocD(y)\Rbag$.
    Given $z$, the full augmented multiset determines
    $\Lbag\ocD(y)_{\setminus z}\Rbag$.
    Symmetry of $\cC_t$ therefore implies that
    $\delta_t^y(z)$ depends only on $z$ and information contained in
    $\cF_{t-1}^y$.
    The same value-based rule is applied for every $y\in\cY$, with
    candidate dependence entering only through $\cF_{t-1}^y$.
    This verifies Condition~\ref{p3:per-inv}.

\end{proof}

\subsection{Proof of Proposition~\ref{the:eprocess-val}}\label{assec:proof-eprocess-val}

\begin{proof}[Proof of Proposition~\ref{the:eprocess-val}]
    Work under Assumption~\ref{cond:exchangeability} and evaluate all
    quantities at $y=Y_{n+1}$.

    First, Condition~\ref{p3:per-inv} implies adaptedness.
    Indeed, for every $y$, $\cF_t^y$ contains both $\cR_t^y$ and the multiset of rejected observations.
    It therefore determines whether $n+1\in\cR_t^y$, as well as $|\cR_t^y|$, so $E_t^y$ is $\cF_t^y$-measurable.
    The measurability requirement in Condition~\ref{p3:per-inv} gives the same conclusion for $\{E_t^{Y_{n+1}}\}_{t=0}^T$.
    Moreover, Conditions~\ref{p1:nondecreasing} and~\ref{p2:rejection-count} imply that
    $|\cR_t^y|\le|\cR_T^y|=\lfloor(n+1)\alpha\rfloor<n+1$.
    Hence, $E_t^y$ is well defined, nonnegative, and bounded above by $\alpha^{-1}$.

    Second, Condition~\ref{p3:per-inv} preserves conditional exchangeability within the pool of unrejected observations.
    At $t=0$, this follows from exchangeability after conditioning on the unordered augmented dataset.
    At each subsequent stage, the common value-based rule uses only $\cF_{t-1}^y$ and does not use the pairing between the unrejected observations and their index labels.
    Consequently, revealing $\cR_t^y$ and the multiset of rejected observations preserves the uniform assignment of the remaining observations to the remaining indices.
    Let $\cN_t^y=[n+1]\setminus\cR_t^y$.
    This conditional uniformity can be expressed as, for any measurable set $A \subseteq \cX \times \cY$, 
    \begin{equation*}
        \pr\left\{
            Z_{n+1}^{Y_{n+1}} \in A
            \mid
            \cF_t^{Y_{n+1}},
            \ n+1\in\cN_t^{Y_{n+1}}
        \right\}
        =
        \frac{1}{|\cN_t^{Y_{n+1}}|}
        \sum_{
            j \in \cN_t^{Y_{n+1}}
        }
        \bbI\bigl\{Z_j^{Y_{n+1}}\in A\bigr\}.
    \end{equation*}
    Since the stage-$(t+1)$ exclusion decisions use only $\cF_t^{Y_{n+1}}$, the number of observations newly rejected at stage~$t+1$ is determined by $\cF_t^{Y_{n+1}}$.
    It follows that, conditional on $\cF_t^{Y_{n+1}}$ and $n+1\in\cN_t^{Y_{n+1}}$, the probability that the test index is newly rejected at stage~$t+1$ is
    $|\cR_{t+1}^{Y_{n+1}}\setminus\cR_t^{Y_{n+1}}|/
    |\cN_t^{Y_{n+1}}|$.

    We next verify the martingale property.
    Fix $t<T$.
    On $\{n+1 \in \cR_t^{Y_{n+1}}\}$, the construction in~\eqref{eq:e-process} gives
    $E_{t+1}^{Y_{n+1}} = \alpha^{-1} = E_t^{Y_{n+1}}$.
    On $\{n+1 \in \cN_t^{Y_{n+1}}\}$, the preceding conditional rejection probability and the definition of the process yield
    \begin{equation*}
        \begin{aligned}
            \E\left(
                E_{t+1}^{Y_{n+1}}
                \mid 
                \cF_t^{Y_{n+1}}
            \right)
            &=
            \frac{
                |
                    \cR_{t+1}^{Y_{n+1}}
                    \setminus
                    \cR_t^{Y_{n+1}}
                |
            }{
                |\cN_t^{Y_{n+1}}|
            }
            \times \alpha^{-1}
            +
            \frac{
                n+1- |\cR_{t+1}^{Y_{n+1}} |
            }{
                n+1- |\cR_t^{Y_{n+1}} |
            }
            \times 
            \alpha^{-1}
            \frac{
                \alpha(n+1)- |\cR_{t+1}^{Y_{n+1}} |
            }{
                n+1- |\cR_{t+1}^{Y_{n+1}} |
            }
            \\
            &=
            \alpha^{-1}
            \frac{
                \alpha(n+1)- |\cR_t^{Y_{n+1}} |
            }{
                n+1- |\cR_t^{Y_{n+1}} |
            }
            =
            E_t^{Y_{n+1}},
        \end{aligned}
    \end{equation*}
    where the final equality uses monotone exclusion and hence
    $|\cR_{t+1}^{Y_{n+1}}\setminus\cR_t^{Y_{n+1}}|
    =|\cR_{t+1}^{Y_{n+1}}|-|\cR_t^{Y_{n+1}}|$.

    Thus, $\{E_t^{Y_{n+1}}\}_{t=0}^T$ is a nonnegative
    $\{\cF_t^{Y_{n+1}}\}_{t=0}^T$-martingale with initial value one.
    Because both the horizon and the martingale are bounded, the optional
    stopping theorem gives
    $\E[E_\tau^{Y_{n+1}}]=1$ for every stopping time $\tau\leq T$.
    Therefore, $\{E_t^{Y_{n+1}}\}_{t=0}^T$ is an $e$-process under the exchangeability null.
\end{proof}

\subsection{Proof of Theorem~\ref{the:errorcontrolcoin}}

\begin{proof}[Proof of Theorem~\ref{the:errorcontrolcoin}]
    To apply Theorem~\ref{the:basicresult}, it suffices to verify
    Conditions~\ref{p1:nondecreasing}--\ref{p3:per-inv}.

    (\textbf{Monotone exclusion})
    By~\eqref{eq:coinupdatestrategy}, once an observation is rejected,
    its decision value remains equal to one at every later stage.
    Hence, Condition~\ref{p1:nondecreasing} holds.

    (\textbf{Rejection-count constraint})
    At stage~$t$, the selection set
    $\cS_t^{y,\coin}$ contains only observations that remain unrejected
    through stage~$t-1$.
    Inductively, the number of observations remaining before stage~$t$
    is $n+1-\sum_{r<t}c_r$, which is at least $c_t$ because
    $\sum_{r=1}^T c_r=c_{\rm tot}\le n+1$.
    In the absence of ties,
    \eqref{eq:coin-selection-set} therefore gives
    $|\cS_t^{y,\coin}|=c_t$.
    Since the stagewise selection sets contain only previously
    unrejected observations, they are mutually disjoint, and
    \begin{equation*}
        \sum_{i=1}^{n+1}
        \delta_T^{y,\coin}(Z_i^y)
        =
        \sum_{t=1}^T|\cS_t^{y,\coin}|
        =
        \sum_{t=1}^T c_t
        =
        \lfloor(n+1)\alpha\rfloor.
    \end{equation*}
    Thus, Condition~\ref{p2:rejection-count} holds.

    (\textbf{Filtration restriction})
    For every $t$, $\cF_{t-1}^y$ contains the full augmented multiset,
    the rejected index set through stage~$t-1$, and the corresponding
    multiset of rejected observations.
    It therefore determines both the set of unrejected indices and the
    multiset of unrejected observations, while leaving their pairing
    unobserved.
    Given this information,
    \eqref{eq:coin-selection-set} applies the same ranking rule to
    every unrejected occurrence, using only its value, the score
    function $s_t$, the multiset of unrejected observations, and the
    pre-specified budget $c_t$.
    Consequently, induction from
    $\delta_0^{y,\coin}\equiv0$ shows that
    $\delta_t^{y,\coin}$ is $\cF_{t-1}^y$-measurable and is generated
    by the same value-based rule for every $y\in\cY$.
    Hence, Condition~\ref{p3:per-inv} holds.

    All three structural conditions are satisfied.
    Theorem~\ref{the:basicresult}, together with
    Assumption~\ref{cond:exchangeability}, now gives the stated result.
\end{proof}

\subsection{Proof of Proposition~\ref{prop:prediction-set-dominance}}
\label{assec:proof-prediction-set-dominance}

\begin{proof}[Proof of Proposition~\ref{prop:prediction-set-dominance}]
    By the superadditivity of the floor function,
    \begin{equation*}
        c_T
        =
        \left\lfloor
            (n+1)\sum_{r=1}^T\alpha_r
        \right\rfloor
        -
        \sum_{r<T}\lfloor(n+1)\alpha_r\rfloor
        \ge
        \lfloor(n+1)\alpha_T\rfloor.
    \end{equation*}
    Hence, $c_r\ge\lfloor(n+1)\alpha_r\rfloor$ for every
    $r\in[T]$, with equality for $r<T$.

    Fix $t\in[T]$ and
    $y\notin\hcC_t^\bon(X_{n+1};\balp)$.
    Then, for some $r\le t$,
    \begin{equation*}
        \sum_{j=1}^{n+1}
        \bbI\left\{
            s_r(Z_j^y)\ge s_r(Z_{n+1}^y)
        \right\}
        \le
        \lfloor(n+1)\alpha_r\rfloor
        \le c_r.
    \end{equation*}
    If $\delta_{r-1}^{y,\coin}(Z_{n+1}^y)=1$, then the augmented
    test point has already been rejected by \coinm, and hence
    $y\notin\hcC_t^\coin(X_{n+1};\{c_r\}_{r=1}^T)$ by the
    monotone-exclusion condition. Otherwise, the augmented test point
    remains eligible at stage $r$. Restricting the comparison to
    observations that have not previously been rejected can only
    decrease the number of observations whose $r$th score is at least
    $s_r(Z_{n+1}^y)$. Therefore, the selection criterion in
    \eqref{eq:coin-selection-set} is satisfied, so the augmented test
    point is rejected by \coinm at stage $r$. In either case,
    $y\notin\hcC_t^\coin(X_{n+1};\{c_r\}_{r=1}^T)$.

    Thus,
    $y\notin\hcC_t^\bon(X_{n+1};\balp)$ implies
    $y\notin\hcC_t^\coin(X_{n+1};\{c_r\}_{r=1}^T)$, proving
    the desired result.
\end{proof}

\subsection{Proof of Corollary~\ref{cor:utility-dominance}}
\label{assec:proof-utility-dominance}

\begin{proof}
    Proposition~\ref{prop:prediction-set-dominance} gives
    $\hcC_t^\coin\subseteq\hcC_t^\bon$ for every $t\in[T]$.
    The stated inequality follows directly from the set-wise monotonicity
    of $u$ in~\eqref{eq:monotone-utility}.
\end{proof}

\subsection{Proof of Proposition~\ref{prop:equiv-coin}}
\label{assec:proof-equivalent-coin}

\begin{proof}[Proof of Proposition~\ref{prop:equiv-coin}]
    Fix $x$ and $y\notin\cT(x;\cD^\ca)$. We prove by induction on $t$
    that (i) the candidate-wise and threshold constructions make the same
    inclusion decision for $y$ through stage~$t$, and (ii), whenever the
    candidate survives, the calibration indices remaining in the augmented
    exclusion process are exactly $\cA_t$ from
    \eqref{eq:calibration-only-coin-thresholds}. Both claims hold at
    stage~$0$.

    Suppose they hold through stage~$t-1$. If the candidate has already
    been excluded, nestedness keeps it out of both prediction sequences at
    stage~$t$. Otherwise, the surviving calibration indices in the
    candidate-wise process are $\cA_{t-1}$. When $c_t=0$, neither
    construction excludes the candidate. When $c_t\geq1$, the calibration
    scores are distinct by assumption, and the candidate score differs from
    every calibration score because $y\notin\cT(x;\cD^\ca)$. Hence, the
    candidate survives stage~$t$ if and only if
    \[
        s_t(x,y)
        <
        \bbC_{c_t}\bigl(\{s_t(Z_i):i\in\cA_{t-1}\}\bigr)
        =
        \htau_t(\bc).
    \]
    Since equality cannot occur, this is equivalent to membership in
    $\tcC_t(x)$. Moreover, whenever the candidate survives, the $c_t$
    observations excluded at stage~$t$ are precisely the calibration
    observations in $\cA_{t-1}$ whose scores are at least
    $\htau_t(\bc)$. The remaining calibration indices are therefore
    $\cA_t$, completing the induction.

    It follows that, for every $y\notin\cT(x;\cD^\ca)$,
    $y\in\hcC_t^\coin(x)$ if and only if
    $y\in\hcC_t^{\rm thr}(x)$, which proves the stated containment of the
    symmetric difference. Finally,
    $Y_{n+1}\in\cT(X_{n+1};\cD^\ca)$ implies that
    $s_t(Z_{n+1})=s_t(Z_i)$ for some $t\in[T]$ and $i\in[n]$.
    This event has probability zero under the no-ties condition in
    Assumption~\ref{cond:exchangeability}, proving the second claim.
\end{proof}

\subsection{Proof of Corollary~\ref{cor:validity-voptcoins}}

\begin{proof}[Proof of Corollary~\ref{cor:validity-voptcoins}]
    Conditional on the strategy-learning sample and the independently
    trained score functions, the learned allocation
    $\hbvrho^{\rm s}$ is fixed.  The final-calibration observations and the
    test observation remain exchangeable, so
    Theorem~\ref{the:errorcontrolcoin}, applied with calibration size
    $n_2$, gives the stated AMR equality.
\end{proof}

\subsection{Proof of Theorem~\ref{the:shifted-oracle-inequality}}

\begin{proof}[Proof of Theorem~\ref{the:shifted-oracle-inequality}]
    Apply Proposition~\ref{prop:conservativequantile} with $n=n_1$ and
    $\cD^\ca=\cD_1^\ca$. Then
    the prediction sets generated by the lower shifted budget contain
    the empirical prediction sets, which in turn contain those generated
    by the upper shifted budget. Therefore, by the set-wise monotonicity
    of $u$, with probability at least $1-n_1^{-1}$, simultaneously for
    every $\bvrho\in\Theta_{\bvrho}$,
    \begin{equation*}
        \hcU\Big(
            \bq\big((\alpha\bvrho-a_{n_1}\iota_T)_+\big);
            \cD_1^\ca
        \Big)
        \le 
        \hcU\Big(
            \hbtau\big(\bc(\bvrho;n_1);\cD_1^\ca\big);
            \cD_1^\ca
        \Big)
        \le 
        \hcU\Big(
            \bq\big(\alpha\bvrho+a_{n_1}\iota_T\big);
            \cD_1^\ca
        \Big).
    \end{equation*}
    Consequently,
    \begin{equation*}
        \begin{aligned}
            &\cU\Big(
                \bq\big((\alpha\bvrho^*-a_{n_1}\iota_T)_+\big)
            \Big)
            -
            \cU\Big(
                \bq\big(\alpha\hbvrho^{\rm s}
                +a_{n_1}\iota_T\big)
            \Big)
            \\ \le & 
            \hcU\Big(
                \bq\big((\alpha\bvrho^*
                -a_{n_1}\iota_T)_+\big);
                \cD_1^\ca
            \Big)
            -
            \hcU\Big(
                \bq\big(\alpha\hbvrho^{\rm s}
                +a_{n_1}\iota_T\big);
                \cD_1^\ca
            \Big)
            +2\Delta_{n_1}
            \\ \le & 
            \hcU\Big(
                \hbtau\big(
                    \bc(\bvrho^*;n_1);
                    \cD_1^\ca
                \big);
                \cD_1^\ca
            \Big)
            -
            \hcU\Big(
                \hbtau\big(
                    \bc(\hbvrho^{\rm s};n_1);
                    \cD_1^\ca
                \big);
                \cD_1^\ca
            \Big)
            +2\Delta_{n_1}
            \\ \le & 
            2\Delta_{n_1}.
        \end{aligned}
    \end{equation*}
    The first inequality follows from the definition of
    $\Delta_{n_1}$ applied to the two displayed threshold sequences,
    the second follows from the preceding utility sandwich, and the
    last follows from the empirical optimality of
    $\hbvrho^{\rm s}$ in~\eqref{eq:learn-allocation-d1}.
\end{proof}

\subsection{Proof of Proposition~\ref{prop:adaptive-feature-acquisition-validity}}

\begin{proof}[Proof of Proposition~\ref{prop:adaptive-feature-acquisition-validity}]
    This proposition is the relaxed counterpart implicit in
    Theorem~\ref{the:basicresult}, with
    Condition~\ref{p2:rejection-count} weakened to
    $|\cR_2^y|\le c_{\rm tot}$.
    Indeed, the branch allocation telescopes to
    $\sum_{j=1}^J c_{2,j}=c_2$, while
    $|\cS_{2,j}^{y,\mathrm{br}}|\le c_{2,j}$.
    Hence, for every $y\in\cY$,
    \begin{equation*}
        |\cR_2^y|
        \le
        c_1+\sum_{j=1}^J|\cS_{2,j}^{y,\mathrm{br}}|
        \le
        c_1+\sum_{j=1}^Jc_{2,j}
        =
        c_{\rm tot}.
    \end{equation*}
    Conditions~\ref{p1:nondecreasing} and~\ref{p3:per-inv}
    remain unchanged.  Conditional survivor symmetry implies that the test
    index is uniform among the augmented indices; consequently, the same
    counting argument as in Theorem~\ref{the:basicresult} gives
    \begin{align*}
        \pr\left\{
            \exists\,t\in[2]:
            Y_{n+1}\notin
            \hcC_t^{\mathrm{br}}(X_{n+1})
        \right\}
        &=
        \frac{\E|\cR_2^{Y_{n+1}}|}{n+1}
        \le
        \frac{c_{\rm tot}}{n+1}
        \le\alpha.
    \end{align*}
\end{proof}

\subsection{Proof of Corollary~\ref{cor:validity-rlcoinm}}

\begin{proof}[Proof of Corollary~\ref{cor:validity-rlcoinm}]
    Condition on $\tX_{n+1}$.
    The calibration pairs remain i.i.d.\ from
    $P_X\times P_{Y\mid X}$ and independent of the test pair, whereas
    $(X_{n+1},Y_{n+1})$ follows
    $P_{X_{n+1}\mid\tX_{n+1}}\times P_{Y\mid X}$. By Bayes' rule,
    $H(x,\tX_{n+1})$ is proportional to
    $(\rmd P_{X_{n+1}\mid\tX_{n+1}}/\rmd P_X)(x)$.
    Thus, conditional on $\tX_{n+1}$,
    Assumption~\ref{cond:covariate-shift} holds with
    $w(x)=H(x,\tX_{n+1})$, and the claimed bound follows from
    Theorem~\ref{the:validity-covariate-shift}.
\end{proof}

\section{Technical Proofs for Appendix Results}
\label{asec:proofs-appendix}

The proofs are presented in the order in which the corresponding results
appear in the preceding appendix sections; the auxiliary lemmas used below
are collected in the final subsection.

\subsection{Proof of Theorem~\ref{the:basicresult-weighted}}

\begin{proof}[Proof of Theorem~\ref{the:basicresult-weighted}]
    The argument parallels the proof of
    Theorem~\ref{the:basicresult}, with uniform ranks replaced by weighted
    ranks.  Under covariate shift, conditional on the unordered augmented
    sample, the probability that a particular observation occupies the test
    position is proportional to its weight.  The filtration restriction in
    Condition~\ref{p3w:per-inv-weight} preserves this property among the
    surviving observations.

    By the definition of the induced prediction sets and
    Condition~\ref{p1:nondecreasing},
    \begin{equation*}
        \left\{
            \exists\,t\in[T]:
            Y_{n+1}\notin\hcC_t(X_{n+1})
        \right\}
        =
        \left\{
            \exists\,t\in[T]:
            \delta_t^{Y_{n+1}}(Z_{n+1}^{Y_{n+1}})=1
        \right\}
        =
        \{n+1\in\cR_T^{Y_{n+1}}\}.
    \end{equation*}
    Thus, it suffices to evaluate all quantities at $y=Y_{n+1}$.
    The measurability requirement in
    Condition~\ref{p3w:per-inv-weight} permits the filtration recursion
    to be evaluated at this random label.

    We first establish that, conditional on
    $\cF_t^{Y_{n+1}}$, the test observation is weighted-exchangeable
    over the observations that remain unrejected. Specifically, for every
    measurable $A\subseteq\cX\times\cY$,
    \begin{equation*}
        \pr\left\{
            Z_{n+1}^{Y_{n+1}}\in A
            \mid
            \cF_t^{Y_{n+1}},
            \ n+1\notin\cR_t^{Y_{n+1}}
        \right\}
        =
        \frac{
            \sum_{j\notin\cR_t^{Y_{n+1}}}
            w(X_j)\bbI\{Z_j^{Y_{n+1}}\in A\}
        }{
            \sum_{j\notin\cR_t^{Y_{n+1}}}w(X_j)
        }.
    \end{equation*}
    At $t=0$, the identity follows because the calibration and test laws
    share the conditional law $P_{Y\mid X}$ and their joint likelihood
    ratio is proportional to $w(X)$.
    Conditioning further on the independently trained score functions
    preserves this identity.
    If the property holds at stage~$t-1$, the common
    $\cF_{t-1}^{Y_{n+1}}$-measurable rule selects observations without
    using the pairing between their values and the remaining index labels.
    Revealing the new rejection set and the multiset of rejected observations
    preserves weighted conditional exchangeability among the remaining
    observations.  Induction establishes the displayed identity for every
    $t\in\{0,\ldots,T\}$.

    Let $W=\sum_{i=1}^{n+1}w(X_i)$ and
    $b_t=\sum_{i\in\cR_t^{Y_{n+1}}}w(X_i)$.
    Condition~\ref{p2w:rejection-count-weight} gives
    $b_T\le c_{\rm tot}^w=\alpha W$.
    On the event that the test observation survives stage $t$, the
    denominator $W-b_t$ is positive almost surely.  Moreover,
    $b_{t+1}-b_t$ is $\cF_t^{Y_{n+1}}$-measurable because the common
    exclusion rule determines the multiset of observations newly rejected
    at stage~$t+1$ without revealing the survivor index--value pairing.
    Hence, conditional weighted exchangeability gives
    \begin{equation*}
        \pr\left\{
            n+1\in\cR_{t+1}^{Y_{n+1}}
            \mid
            \cF_t^{Y_{n+1}},
            \ n+1\notin\cR_t^{Y_{n+1}}
        \right\}
        =
        \frac{b_{t+1}-b_t}{W-b_t}.
    \end{equation*}

    We now show by backward induction that, on
    $\{n+1\notin\cR_t^{Y_{n+1}}\}$,
    \begin{equation*}
        \pr\left\{
            n+1\in\cR_T^{Y_{n+1}}
            \mid
            \cF_t^{Y_{n+1}}
        \right\}
        \le
        \frac{c_{\rm tot}^w-b_t}{W-b_t}.
    \end{equation*}
    At $t=T$, the left-hand side is zero, while the right-hand side is
    nonnegative. Suppose the inequality holds at stage~$t+1$.
    Conditional on $\cF_t^{Y_{n+1}}$ and
    $n+1\notin\cR_t^{Y_{n+1}}$, the test index either is newly rejected
    at stage~$t+1$ or remains unrejected. The preceding rejection
    probability and the induction hypothesis give
    \begin{equation*}
        \begin{aligned}
        &\pr\left\{
            n+1\in\cR_T^{Y_{n+1}}
            \mid
            \cF_t^{Y_{n+1}}
        \right\}
        \\
        &\quad\le
        \frac{b_{t+1}-b_t}{W-b_t}
        +
        \frac{W-b_{t+1}}{W-b_t}
        \frac{c_{\rm tot}^w-b_{t+1}}{W-b_{t+1}}
        =
        \frac{c_{\rm tot}^w-b_t}{W-b_t}.
        \end{aligned}
    \end{equation*}
    This proves the backward-induction claim.

    Since $\cR_0^{Y_{n+1}}=\varnothing$, we have
    \begin{equation*}
        \pr\left\{
            n+1\in\cR_T^{Y_{n+1}}
            \mid
            \cF_0^{Y_{n+1}}
        \right\}
        \le
        \frac{c_{\rm tot}^w}{W}
        =
        \alpha.
    \end{equation*}
    Taking expectations and using the first display proves the result.
\end{proof}

\subsection{Proof of Theorem~\ref{the:validity-covariate-shift}}

\begin{proof}[Proof of Theorem~\ref{the:validity-covariate-shift}]
    It suffices to verify the three weighted structural conditions and then
    apply Theorem~\ref{the:basicresult-weighted}.

    (\textbf{Monotone exclusion})
    By~\eqref{eq:coin-update-weight}, once an observation is rejected,
    its decision value remains equal to one at every later stage.
    Hence, Condition~\ref{p1:nondecreasing} holds.

    (\textbf{Weighted rejection-count constraint})
    At stage~$t$, the selection set
    $\cS_t^{y,\wcoin}$ contains only observations that remain unrejected
    through stage~$t-1$.
    By~\eqref{eq:coin-selection-set-weight}, it is an initial segment
    of the currently unrejected observations ranked in decreasing order
    of $s_t$, whose total weight is no greater than
    $c_t^w(\bvrho;c_{\rm tot}^w)$.
    Since the stagewise selection sets contain only previously
    unrejected observations, they are mutually disjoint, and
    \begin{equation*}
        \begin{aligned}
        \sum_{i=1}^{n+1}
        w(X_i)\delta_T^{y,\wcoin}(Z_i^y)
        &=
        \sum_{t=1}^T
        \sum_{i\in\cS_t^{y,\wcoin}}w(X_i)
        \\
        &\le
        \sum_{t=1}^T c_t^w(\bvrho;c_{\rm tot}^w)
        =
        c_{\rm tot}^w
        =
        \alpha\sum_{i=1}^{n+1}w(X_i).
        \end{aligned}
    \end{equation*}
    Thus, Condition~\ref{p2w:rejection-count-weight} holds.

    (\textbf{Filtration restriction})
    For every $t$, $\cF_{t-1}^y$ contains the full augmented multiset,
    the rejected index set through stage~$t-1$, the corresponding
    multiset of rejected observations, the score functions, and the
    weight function.
    It therefore determines both the set of unrejected indices and the
    multiset of unrejected observations.
    Moreover, $c_{\rm tot}^w$ is determined by the augmented multiset
    and $w$, and the stagewise budgets are determined by
    $c_{\rm tot}^w$ because $\bvrho$ is pre-specified.
    Given this information,
    \eqref{eq:coin-selection-set-weight} applies the same weighted
    ranking rule to every unrejected occurrence.
    Consequently, induction from
    $\delta_0^{y,\wcoin}\equiv0$ shows that
    $\delta_t^{y,\wcoin}$ is $\cF_{t-1}^y$-measurable and is generated
    by the same value-based rule for every $y\in\cY$.
    Hence, Condition~\ref{p3w:per-inv-weight} holds.

    The conclusion now follows from
    Theorem~\ref{the:basicresult-weighted}.
\end{proof}

\subsection{Proof of Proposition~\ref{prop:robust-covariate-shift}}
\label{assec:proof-robust-covariate-shift}

\begin{proof}[Proof of Proposition~\ref{prop:robust-covariate-shift}]
    Let $\mathcal E$ denote the simultaneous miscoverage event in the
    proposition.  Write $\widetilde P_g$ for the marginal law of
    $\widetilde X$ obtained by drawing $X\sim P_X\circ g$ and then
    $\widetilde X\mid X\sim H(X,\cdot)$.  Conditional on
    $\widetilde X=\tilde x$, define
    \[
        K_{\tilde x}
        \coloneqq P_X\circ H(\cdot,\tilde x),
        \qquad
        K_{\tilde x}^g
        \coloneqq P_X\circ g\circ H(\cdot,\tilde x).
    \]
    Under the distribution in the proposition, the conditional covariate
    law of the test observation is $K_{\tilde x}^g$.  Introduce a reference
    law that leaves the calibration sample and the marginal law
    $\widetilde P_g$ unchanged but replaces this conditional law by
    $K_{\tilde x}$.  Under the reference law, the weights
    $H(\cdot,\tilde x)$ are proportional to the likelihood ratio of the
    test-covariate law with respect to $P_X$.  Hence
    Theorem~\ref{the:validity-covariate-shift}, applied conditionally on
    $\widetilde X=\tilde x$, gives
    \[
        \pr_{\mathrm{ref}}(\mathcal E\mid\widetilde X=\tilde x)
        \leq \alpha.
    \]
    The calibration laws and the conditional laws of $Y$ given $X$ agree
    under the two distributions.  The total-variation comparison for
    conditional laws therefore yields
    \begin{equation}
        \label{eq:robustness-tv-reduction}
        \pr_g(\mathcal E)
        \leq
        \alpha+
        \E_{\widetilde X\sim\widetilde P_g}
        \left[
            d_{\rm TV}
            \bigl(K_{\widetilde X},K_{\widetilde X}^g\bigr)
        \right].
    \end{equation}

    It remains to bound the expectation in
    \eqref{eq:robustness-tv-reduction}.  For $\widetilde P_g$-almost every
    $\tilde x$, let
    $m(\tilde x)=\E_{K_{\tilde x}}[g(X)]>0$.  The Radon--Nikodym derivative
    of $K_{\tilde x}^g$ with respect to $K_{\tilde x}$ is
    $g(x)/m(\tilde x)$.  Consequently, if
    $X,X'\stackrel{\mathrm{i.i.d.}}{\sim}K_{\tilde x}$, Jensen's
    inequality gives
    \[
        d_{\rm TV}(K_{\tilde x},K_{\tilde x}^g)
        =
        \frac{\E|g(X)-m(\tilde x)|}{2m(\tilde x)}
        \leq
        \frac{\E|g(X)-g(X')|}{2m(\tilde x)}.
    \]
    Let $P_{\widetilde X}$ be the marginal law obtained from
    $X\sim P_X$ and $\widetilde X\mid X\sim H(X,\cdot)$.  The density
    defining $\widetilde P_g$ cancels the denominator $m(\tilde x)$ in
    the preceding display, and hence
    \begin{equation}
        \label{eq:robustness-tv-pair}
        \E_{\widetilde P_g}
        d_{\rm TV}(K_{\widetilde X},K_{\widetilde X}^g)
        \leq
        \frac{1}{2\E_{P_X}[g(X)]}
        \E_{\widetilde X\sim P_{\widetilde X}}
        \E_{X,X'\stackrel{\mathrm{i.i.d.}}{\sim}K_{\widetilde X}}
        |g(X)-g(X')|.
    \end{equation}

    Fix a measurable $A\subseteq\cX$ and $\varepsilon>0$.  On the event
    that $X,X'\in A$ and both lie within distance $\varepsilon$ of
    $\widetilde X$, the triangle inequality gives
    $\|X-X'\|\leq2\varepsilon$ and therefore
    \[
        \frac12|g(X)-g(X')|
        \leq \varepsilon L_{g,2\varepsilon}(A).
    \]
    On the complementary event, nonnegativity of $g$ gives
    $|g(X)-g(X')|\leq\|g\|_\infty$.  A union bound, together with the
    fact that each pair $(X,\widetilde X)$ has the joint law obtained by
    drawing $X\sim P_X$ and then
    $\widetilde X\mid X\sim H(X,\cdot)$, bounds the right-hand side of
    \eqref{eq:robustness-tv-pair} by
    \[
        \frac{
            \varepsilon L_{g,2\varepsilon}(A)
            +\|g\|_\infty
             \pr_{P_{X,\widetilde X}}
             \{\|X-\widetilde X\|>\varepsilon\}
            +\|g\|_\infty P_X(A^c)
        }{\E_{P_X}[g(X)]}.
    \]
    Taking the infimum over $A$ and $\varepsilon$ and substituting this
    bound into \eqref{eq:robustness-tv-reduction} proves the claim.  This
    is the total-variation argument underlying Theorem~3 of
    \citet{hore2025conformal}, applied here to the simultaneous
    miscoverage event.
\end{proof}

\subsection{Proof of Theorem~\ref{the:oracle-recovery}}

\begin{proof}[Proof of Theorem~\ref{the:oracle-recovery}]
    Continuity of $-f_{Y\mid X}(Y\mid X)$ gives
    $\pr\{Y\in\cC^\ora(X)\}=1-\alpha$, and hence
    $\pr_{XY}(\cO^*)=\alpha$.  For $t\in[T]$, define
    \[
        \varrho_t^*
        \coloneqq
        \frac{1}{\alpha}
        \left[
            \pr_{XY}\left\{
                \bigcup_{j=1}^t\cO_j(\tau_j^*)
            \right\}
            -
            \pr_{XY}\left\{
                \bigcup_{j=1}^{t-1}\cO_j(\tau_j^*)
            \right\}
        \right],
    \]
    where the second union is empty when $t=1$.  These increments are
    nonnegative.  The theorem's premise and
    $\pr_{XY}(\cO^*)=\alpha$ imply
    $\sum_{t=1}^T\varrho_t^*=1$, so
    $\bvrho^*\coloneqq(\varrho_1^*,\ldots,\varrho_T^*)$ belongs to
    $\Theta_{\bvrho}$.  Moreover,
    \begin{equation}
        \label{eq:oracle-cumulative-mass}
        1-\alpha\beta_t(\bvrho^*)
        =
        \pr_{XY}\left\{
            \left(
                \bigcup_{j=1}^t\cO_j(\tau_j^*)
            \right)^c
        \right\}.
    \end{equation}

    Set $\cA_0\coloneqq\cX\times\cY$ and, for $t\in[T]$, let
    \[
        \cA_t
        \coloneqq
        \bigcap_{j=1}^t
        \cO_j\bigl(q_j(\alpha\bvrho^*)\bigr)^c
    \]
    be the population acceptance region induced by \coinm.  We prove by
    induction that
    \begin{equation}
        \label{eq:oracle-induction}
        \pr_{XY}\left[
            \cA_t\mathbin{\triangle}
            \left(
                \bigcup_{j=1}^t\cO_j(\tau_j^*)
            \right)^c
        \right]=0,
        \qquad t\in[T]\cup\{0\}.
    \end{equation}
    The assertion is immediate for $t=0$.  Suppose it holds at
    stage $t-1$.

    If $\varrho_t^*=0$, the two nested target exclusion regions through
    stages $t-1$ and $t$ have the same probability and therefore agree
    up to a null set.  By convention,
    $q_t(\alpha\bvrho^*)=+\infty$, so
    $\cA_t=\cA_{t-1}$ and the induction claim follows.

    Suppose instead that $\varrho_t^*>0$.  By the induction hypothesis,
    the stage-$t$ nested quantile is
    \[
        q_t(\alpha\bvrho^*)
        =
        \inf\left\{
            v\in\overline{\bbR}:
            \pr_{XY}\bigl(
                \cA_{t-1}\cap\cO_t(v)^c
            \bigr)
            \geq
            1-\alpha\beta_t(\bvrho^*)
        \right\}.
    \]
    The no-ties assumption makes $s_t(X,Y)$ non-atomic.  Hence
    $v\mapsto\pr_{XY}\{\cO_t(v)\setminus\cO^*\}$ is continuous, so the
    infimum defining $\tau_t^*$ is attained and
    $\cO_t(\tau_t^*)\subseteq\cO^*$ up to a null set.  Together with
    \eqref{eq:oracle-cumulative-mass} and the induction hypothesis, this
    gives
    \[
        \pr_{XY}\bigl(
            \cA_{t-1}\cap\cO_t(\tau_t^*)^c
        \bigr)
        =
        1-\alpha\beta_t(\bvrho^*).
    \]
    Hence $q_t(\alpha\bvrho^*)\leq\tau_t^*$.  Non-atomicity also makes the map
    $v\mapsto\pr_{XY}\{\cA_{t-1}\cap\cO_t(v)^c\}$ continuous.
    It follows that
    \[
        \pr_{XY}(\cA_t)=1-\alpha\beta_t(\bvrho^*).
    \]
    Because $q_t(\alpha\bvrho^*)\leq\tau_t^*$, monotonicity of
    $\cO_t(v)$ in $v$ also gives
    \[
        \cA_t
        \subseteq
        \left(
            \bigcup_{j=1}^t\cO_j(\tau_j^*)
        \right)^c
        \quad\text{up to a null set}.
    \]
    This inclusion and equality of probabilities prove
    \eqref{eq:oracle-induction} at stage $t$.

    Taking $t=T$ in \eqref{eq:oracle-induction} and using the theorem's
    premise yields
    $\pr_{XY}\{\cA_T\triangle(\cO^*)^c\}=0$.  Finally,
    $(\cO^*)^c=\{(x,y):y\in\cC^\ora(x)\}$ and
    $\cA_T=\{(x,y):y\in\cC_T^\coin(x;\alpha\bvrho^*)\}$.  Since the
    population \coinm sequence is nested,
    $\cC_T^\coin=\bigcap_{t=1}^T\cC_t^\coin$, completing the proof.
\end{proof}

\subsection{Proof of Theorem~\ref{the:errorofnestedquantile}}

\begin{proof}[Proof of Theorem~\ref{the:errorofnestedquantile}]
    The multivariate DKW inequality
    \citep{devroye1977uniform} implies that
    \begin{equation*}
        \pr \bigg\{
            \sup_{(v_1,\ldots,v_t)\in\overline{\bbR}^t}
            \left|
                F_{\bS(\ole t)}(v_1,\ldots,v_t)
                -
                F_{n,\bS(\ole t)}(v_1,\ldots,v_t)
            \right|
            \ge u
        \bigg\}
        \le
        2(2n)^t
        e^{-2n(u-t/n)^2},
    \end{equation*}
    for any $u>t/n$, $t>1$, and $n>1$.
    For $t=1$, the same, possibly looser, bound follows from the
    ordinary univariate DKW inequality. Therefore,
    \begin{equation*}
        \begin{aligned}
            &
            \pr \bigg\{
                \max_{t\in[T]}
                \sup_{(v_1,\ldots,v_t)\in\bbR^t}
                \left|
                    F_{\bS(\ole t)}(v_1,\ldots,v_t)
                    -
                    F_{n,\bS(\ole t)}(v_1,\ldots,v_t)
                \right|
                \ge u
            \bigg\}
            \\
            &\le
            \sum_{t=1}^T
            2(2n)^t e^{-2n(u-t/n)^2}
            \le
            4(2n)^T e^{-2n(u-T/n)^2}.
        \end{aligned}
    \end{equation*}
    Set $\epsilon_n=\sqrt{T\log n/n}+T/n$ and define
    \[
        \cE_n
        \coloneqq
        \bigg\{
            \max_{t\in[T]}
            \sup_{(v_1,\ldots,v_t)\in\overline{\bbR}^t}
            \left|
                F_{\bS(\ole t)}(v_1,\ldots,v_t)
                -
                F_{n,\bS(\ole t)}(v_1,\ldots,v_t)
            \right|
            \le\epsilon_n
        \bigg\}.
    \]
    At this choice of $\epsilon_n$, the preceding upper bound equals
    $4\,2^Tn^{-T}$, which is at most $n^{-1}$ when $n>15$ and
    $T\geq2$.  Therefore $\pr(\cE_n)\geq1-n^{-1}$.

    Marginal non-atomicity implies that there are no ties among the
    observations in each coordinate almost surely.  Work on the
    intersection of this probability-one event and $\cE_n$.  Fix arbitrary
    $\alpha\in(0,1)$ and $\bvrho\in\Theta_{\bvrho}$, and suppress the
    argument $\alpha\bvrho$ below.
    
    On $\cE_n$,
    \begin{equation}\label{eq:decomposeevents1}
        \begin{aligned}
            &
            \left|
                F_{\bS(\ole t)}
                (q_1,\ldots,q_{t-1},q_t)
                -
                F_{\bS(\ole t)}
                (q_1,\ldots,q_{t-1},\hq_t)
            \right|
            \\
            &\le
            \left|
                F_{\bS(\ole t)}
                (q_1,\ldots,q_{t-1},q_t)
                -
                F_{n,\bS(\ole t)}
                (\hq_1,\ldots,\hq_{t-1},\hq_t)
            \right|
            \\
            &\quad+
            \left|
                F_{n,\bS(\ole t)}
                (\hq_1,\ldots,\hq_{t-1},\hq_t)
                -
                F_{\bS(\ole t)}
                (\hq_1,\ldots,\hq_{t-1},\hq_t)
            \right|
            \\
            &\quad+
            \left|
                F_{\bS(\ole t)}
                (\hq_1,\ldots,\hq_{t-1},\hq_t)
                -
                F_{\bS(\ole t)}
                (q_1,\ldots,q_{t-1},\hq_t)
            \right|
            \\
            &\le
            \frac1n+\epsilon_n
            +
            \left|
                F_{\bS(\ole t)}
                (\hq_1,\ldots,\hq_{t-1},\hq_t)
                -
                F_{\bS(\ole t)}
                (q_1,\ldots,q_{t-1},\hq_t)
            \right|.
        \end{aligned}
    \end{equation}
    The first inequality follows from the triangle inequality.
    The second inequality follows from $\cE_n$ and
    \begin{equation*}
        F_{\bS(\ole t)}(q_1,\ldots,q_t)
        =
        1-\alpha\beta_t(\bvrho),
        \qquad
        \left|
            F_{n,\bS(\ole t)}(\hq_1,\ldots,\hq_t)
            -
            \{1-\alpha\beta_t(\bvrho)\}
        \right|
        \le\frac1n.
    \end{equation*}
    These relations follow from the population and empirical nested
    quantile definitions, respectively, together with marginal
    non-atomicity and the absence of empirical ties.

    Let
    $A_t=\{S_t\le q_t\}$ and
    $\hA_t=\{S_t\le\hq_t\}$ be two subsets of the sample space
    $\Omega$. Denote
    $Q_t=\cap_{t'\le t}A_{t'}$ and
    $\hQ_t=\cap_{t'\le t}\hA_{t'}$, with
    $Q_0=\hQ_0=\Omega$.
    The left-hand side of \eqref{eq:decomposeevents1} is therefore
    equivalent to
    \begin{equation*}
        \left|
            \pr_{\bS}(A_t\cap Q_{t-1})
            -
            \pr_{\bS}(\hA_t\cap Q_{t-1})
        \right|.
    \end{equation*}
    It follows that
    \begin{equation}\label{eq:decomposeevents2}
        \begin{aligned}
            &
            \left|
                \pr_{\bS}(A_t\cap Q_{t-1})
                -
                \pr_{\bS}(\hA_t\cap Q_{t-1})
            \right|
            \\
            &
            \overset{(i)}{\le}
            \frac1n+\epsilon_n
            +
            \left|
                \pr_{\bS}(\hA_t\cap\hQ_{t-1})
                -
                \pr_{\bS}(\hA_t\cap Q_{t-1})
            \right|
            \\
            &\overset{(ii)}{\le}
            \frac1n+\epsilon_n
            +
            \left|
                \pr_{\bS}(\hQ_{t-1})
                -
                \pr_{\bS}(Q_{t-1})
            \right|
            +
            \pr_{\bS}
            \left(
                \hA_t^c
                \cap
                (Q_{t-1}\triangle\hQ_{t-1})
            \right)
            \\
            &\overset{(iii)}{\le}
            2\left(\frac1n+\epsilon_n\right)
            +
            \pr_{\bS}
            \left(
                \hA_t^c
                \cap
                (Q_{t-1}\triangle\hQ_{t-1})
            \right).
        \end{aligned}
    \end{equation}
    Here, step~$(i)$ is a reexpression of the result in~\eqref{eq:decomposeevents1} and step~$(ii)$ follows from Lemma~\ref{lem:boundintersectionmeasure}.
    For step~$(iii)$, the nested quantile identities at stage $t-1$ and the event $\cE_n$ give
    \begin{equation*}
        \left|
            \pr_{\bS}(\hQ_{t-1})
            -
            \pr_{\bS}(Q_{t-1})
        \right|
        \le
        \epsilon_n+\frac1n.
    \end{equation*}

    We now focus on the second term in the last line of~\eqref{eq:decomposeevents2}. 
    Specifically,
    \begin{equation}\label{eq:inductionbase}
        \begin{aligned}
            \pr_{\bS}(Q_t\triangle\hQ_t)
            \overset{(i)}{\le}
            &
            \pr_{\bS}
            \left(
                (A_t\triangle\hA_t)\cap Q_{t-1}
            \right)
            +
            \pr_{\bS}
            \left(
                \hA_t
                \cap
                (Q_{t-1}\triangle\hQ_{t-1})
            \right)
            \\
            =
            &
            \left|
                \pr_{\bS}(A_t\cap Q_{t-1})
                -
                \pr_{\bS}(\hA_t\cap Q_{t-1})
            \right|
            +
            \pr_{\bS}
            \left(
                \hA_t
                \cap
                (Q_{t-1}\triangle\hQ_{t-1})
            \right)
            \\
            \overset{(ii)}{\le}
            &
            2\left(\frac1n+\epsilon_n\right)
            +
            \pr_{\bS}
            \left(
                \hA_t^c
                \cap
                (Q_{t-1}\triangle\hQ_{t-1})
            \right)
            +
            \pr_{\bS}
            \left(
                \hA_t
                \cap
                (Q_{t-1}\triangle\hQ_{t-1})
            \right)
            \\
            =
            &
            2\left(\frac1n+\epsilon_n\right)
            +
            \pr_{\bS}
            (Q_{t-1}\triangle\hQ_{t-1}).
        \end{aligned}
    \end{equation}
    Here, step~$(i)$ follows from
    Lemma~\ref{lem:boundsymmetricdifference}, and step~$(ii)$ follows
    from \eqref{eq:decomposeevents2}. The equality in the second line
    uses the fact that $A_t$ and $\hA_t$ are threshold events for the
    same scalar random variable and are therefore nested.

    Since \eqref{eq:inductionbase} holds for every $t\in[T]$ on
    $\cE_n$ and $Q_0=\hQ_0$, induction gives
    \begin{equation}\label{eq:inductionresult}
        \pr_{\bS}(Q_t\triangle\hQ_t)
        \le
        2t\left(\frac1n+\epsilon_n\right).
    \end{equation}
    Combining it with~\eqref{eq:decomposeevents2} yields
    \begin{equation*}
        \begin{aligned}
            &
            \left|
                \pr_{\bS}(A_t\cap Q_{t-1})
                -
                \pr_{\bS}(\hA_t\cap Q_{t-1})
            \right|
            \\ \le & 
            2\left(\frac1n+\epsilon_n\right)
            +
            \pr_{\bS}
            \left(
                \hA_t^c
                \cap
                (Q_{t-1}\triangle\hQ_{t-1})
            \right)
            \\ \le & 
            2\left(\frac1n+\epsilon_n\right)
            +
            \pr_{\bS}(Q_{t-1}\triangle\hQ_{t-1})
            \le
            2t\left(\frac1n+\epsilon_n\right).
        \end{aligned}
    \end{equation*}
    This inequality holds on $\cE_n$.

    Finally, under $T\ge2$ and $T<n\log n/9$,
    \begin{equation*}
        \frac1n+\epsilon_n
        =
        \sqrt{\frac{T\log n}{n}}
        +
        \frac{T+1}{n}
        \le
        \frac32
        \sqrt{\frac{T\log n}{n}}.
    \end{equation*}
    The two assertions now follow from
    \eqref{eq:inductionresult}, the preceding inequality, and the
    definitions of $A_t,\hA_t,Q_t$, and $\hQ_t$.  Finally, $\cE_n$ does
    not depend on $\alpha$ or $\bvrho$, and every step above is a
    deterministic implication on this common event.  The conclusions
    therefore hold simultaneously over $\alpha\in(0,1)$ and
    $\bvrho\in\Theta_{\bvrho}$.
\end{proof}

\subsection{Proof of Proposition~\ref{prop:conservativequantile}}

\begin{proof}[Proof of Proposition~\ref{prop:conservativequantile}]
    First, both perturbed budget vectors are admissible: 
    \begin{equation*}
        \sum_{t=1}^T\alpha_{n,t}^-(\bvrho)
        \le\alpha<1,
        \qquad 
        \sum_{t=1}^T\alpha_{n,t}^+(\bvrho)
        =
        \alpha+\frac{a_nT(T+1)}2
        <1.
    \end{equation*}
    We first control the population probabilities of the lower orthants
    determined by the empirical \coinm thresholds and then compare those
    thresholds with the two perturbed population nested quantiles.

    Recall the empirical-CDF event used in the proof of
    Theorem~\ref{the:errorofnestedquantile}. With probability at least
    $1-n^{-1}$,
    \begin{equation}
        \label{eq:uniform-cdf-bound-for-coin-threshold}
        \max_{t\in[T]}
        \sup_{(v_1,\ldots,v_t)\in\overline{\bbR}^{\,t}}
        \left|
            F_{n,{\scriptscriptstyle\bS(\ole t)}}
            (v_1,\ldots,v_t)
            -
            F_{{\scriptscriptstyle\bS(\ole t)}}
            (v_1,\ldots,v_t)
        \right|
        \le
        \sqrt{\frac{T\log n}{n}}
        +
        \frac{T}{n}.
    \end{equation}
    Marginal non-atomicity also implies that there are no ties among
    the calibration scores in any coordinate almost surely. We work
    on the intersection of these two events.

    For any $\bvrho\in\Theta_{\bvrho}$ and $t\in[T]$, the definition
    of the rejection-count allocation gives
    \begin{equation*}
        \sum_{t'=1}^t c_{t'}(\bvrho;n)
        =
        \left\lfloor
            (n+1)\alpha\beta_t(\bvrho)
        \right\rfloor.
    \end{equation*}
    Thus, after stage $t$, exactly
    $\lfloor(n+1)\alpha\beta_t(\bvrho)\rfloor$ calibration observations have been rejected. 
    Then, every unrejected observation satisfies the first $t$ threshold inequalities in
    \[
        F_{n,{\scriptscriptstyle\bS(\ole t)}}
        \big(\htau_1(\alpha\bvrho),\ldots,\htau_t(\alpha\bvrho)\big).
    \]
    Moreover, by the definition, at most one cutoff observation from each of the first $t$ stages can additionally satisfy all of them.
    Consequently,
    \begin{equation}
        \label{eq:coin-rank-count-bound}
        n-
        \left\lfloor
            (n+1)\alpha\beta_t(\bvrho)
        \right\rfloor
        \le
        nF_{n,{\scriptscriptstyle\bS(\ole t)}}
        \left(
            \htau_1(\alpha\bvrho),
            \ldots,
            \htau_t(\alpha\bvrho)
        \right)
        \le 
        n-
        \left\lfloor
            (n+1)\alpha\beta_t(\bvrho)
        \right\rfloor+t.
    \end{equation}
    Meanwhile, since $\alpha\beta_t(\bvrho)<1$, we have 
    \begin{equation*}
        \left|
            \alpha\beta_t(\bvrho)
            -
            \frac{
                \lfloor(n+1)\alpha\beta_t(\bvrho)\rfloor
            }{n}
        \right|
        \le
        \frac{1}{n},
    \end{equation*}
    which gives 
    \begin{equation}
        \label{eq:coin-empirical-rank-error}
        \left|
            F_{n,{\scriptscriptstyle\bS(\ole t)}}
            \left(
                \htau_1(\alpha\bvrho),
                \ldots,
                \htau_t(\alpha\bvrho)
            \right)
            -
            \left\{
                1-\alpha\beta_t(\bvrho)
            \right\}
        \right|
        \le
        \frac{t+1}{n}.
    \end{equation}
    Let $r_n\coloneqq\sqrt{T\log n/n}$, so $a_n=5r_n$.  Combining
    \eqref{eq:uniform-cdf-bound-for-coin-threshold} and
    \eqref{eq:coin-empirical-rank-error} bounds the relevant absolute error
    by $r_n+(T+t+1)/n$.  The condition
    $T<n\log n/9$ implies $T/n<r_n/3$, while $n>15$ and $T>1$ imply
    $1/n<r_n$.  Hence this error is smaller than
    $8r_n/3<11a_n/12$, which gives
    \begin{equation}
        \label{eq:coin-threshold-population-mass}
        \left|
            F_{{\scriptscriptstyle\bS(\ole t)}}
            \left(
                \htau_1(\alpha\bvrho),
                \ldots,
                \htau_t(\alpha\bvrho)
            \right)
            -
            \left\{
                1-\alpha\beta_t(\bvrho)
            \right\}
        \right|
        <
        \frac{11a_n}{12},
    \end{equation}
    under the condition $1 < T<n\log n/9$.
    Because the event in~\eqref{eq:uniform-cdf-bound-for-coin-threshold} controls all lower
    orthants, this inequality holds simultaneously for every
    $\bvrho\in\Theta_{\bvrho}$ and $t\in[T]$.

    We next compare the empirical thresholds with the two perturbed
    population nested quantiles.
    Fix $\bvrho\in\Theta_{\bvrho}$ and suppress its occurrence in
    $\balp_n^\pm(\bvrho)$ and
    $\alpha_{n,t}^\pm(\bvrho)$. We first consider $t=1$. If
    $\alpha_{n,1}^-=0$, then $q_1(\balp_n^-)=+\infty$, and the left
    inequality is immediate. Otherwise,
    $\alpha_{n,1}^-=\alpha\varrho_1-a_n$, and
    \eqref{eq:coin-threshold-population-mass} gives
    \begin{equation*}
        F_{{\scriptscriptstyle\bS(\ole1)}}
        \big(
            \htau_1(\alpha\bvrho)
        \big)
        <
        1-\alpha\varrho_1+\frac{11a_n}{12}
        <
        1-\alpha\varrho_1+a_n
        =
        1-\alpha_{n,1}^-.
    \end{equation*}
    Hence, by the definition of the generalized inverse,
    $
        q_1(\balp_n^-)
        \ge
        \htau_1(\alpha\bvrho).
    $
    Similarly,
    \begin{equation*}
        F_{{\scriptscriptstyle\bS(\ole1)}}
        \big(
            \htau_1(\alpha\bvrho)
        \big)
        >
        1-\alpha\varrho_1-\frac{11a_n}{12}
        >
        1-\alpha\varrho_1-a_n
        =
        1-\alpha_{n,1}^+.
    \end{equation*}
    Therefore,
    $
        \htau_1(\alpha\bvrho)
        \ge
        q_1(\balp_n^+).
    $

    We now proceed by induction. Let $t\ge2$ and suppose that
    \begin{equation*}
        q_{t'}(\balp_n^-)
        \ge
        \htau_{t'}(\alpha\bvrho)
        \ge
        q_{t'}(\balp_n^+),
        \qquad t'<t.
    \end{equation*}
    If $\alpha_{n,t}^-=0$, then
    $q_t(\balp_n^-)=+\infty$, so the left inequality is immediate.
    Suppose that $\alpha_{n,t}^->0$, in which case
    $\alpha_{n,t}^-=\alpha\varrho_t-ta_n$. The induction hypothesis,
    \eqref{eq:coin-threshold-population-mass}, implies
    \begin{equation*}
        \begin{aligned}
            &
            F_{{\scriptscriptstyle\bS(\ole t)}}
            \left(
                q_1(\balp_n^-),
                \ldots,
                q_{t-1}(\balp_n^-),
                \htau_t(\alpha\bvrho)
            \right)
            \\
            \overset{(i)}{\le} & 
            F_{{\scriptscriptstyle\bS(\ole t)}}
            \left(
                \htau_1(\alpha\bvrho),
                \ldots,
                \htau_t(\alpha\bvrho)
            \right)
            \\
            &\quad+
            F_{{\scriptscriptstyle\bS(\ole t-1)}}
            \left(
                q_1(\balp_n^-),
                \ldots,
                q_{t-1}(\balp_n^-)
            \right)
            -
            F_{{\scriptscriptstyle\bS(\ole t-1)}}
            \left(
                \htau_1(\alpha\bvrho),
                \ldots,
                \htau_{t-1}(\alpha\bvrho)
            \right)
            \\ \le & 
            \Big(1 - \alpha \beta_t(\bvrho) + \frac{11 a_n}{12}\Big)
            +
            \Big(1 - \sum_{t' < t} \alpha^{-}_{n,t'}\Big)
            - 
            \Big(1 - \alpha \beta_{t-1}(\bvrho) - \frac{11 a_n}{12}\Big)
            \\ = & 
            1-\sum_{t'\le t}\alpha_{n,t'}^-
            -ta_n+\frac{11a_n}{6}
            <
            1-\sum_{t'\le t}\alpha_{n,t'}^-,
        \end{aligned}
    \end{equation*}
    where step~$(i)$ uses the inclusion between the two preceding
    lower-orthant events, and the last inequality follows from $t\ge2$.
    Thus the defining
    probability level of $q_t(\balp_n^-)$ has not been reached at
    $\htau_t(\alpha\bvrho)$, and therefore
    \begin{equation*}
        q_t(\balp_n^-)
        \ge
        \htau_t(\alpha\bvrho).
    \end{equation*}

    For the lower bracket, the other half of the induction hypothesis gives
    \begin{equation*}
        \begin{aligned}
            &
            F_{{\scriptscriptstyle\bS(\ole t)}}
            \left(
                q_1(\balp_n^+),
                \ldots,
                q_{t-1}(\balp_n^+),
                \htau_t(\alpha\bvrho)
            \right)
            \\ \ge & 
            F_{{\scriptscriptstyle\bS(\ole t)}}
            \left(
                \htau_1(\alpha\bvrho),
                \ldots,
                \htau_t(\alpha\bvrho)
            \right)
            \\
            &\quad-
            F_{{\scriptscriptstyle\bS(\ole t-1)}}
            \left(
                \htau_1(\alpha\bvrho),
                \ldots,
                \htau_{t-1}(\alpha\bvrho)
            \right)
            +
            F_{{\scriptscriptstyle\bS(\ole t-1)}}
            \left(
                q_1(\balp_n^+),
                \ldots,
                q_{t-1}(\balp_n^+)
            \right)
            \\ > & 
            1-\sum_{t'\le t}\alpha_{n,t'}^+
            +ta_n-\frac{11a_n}{6}
            >
            1-\sum_{t'\le t}\alpha_{n,t'}^+.
        \end{aligned}
    \end{equation*}
    Hence $\htau_t(\alpha\bvrho)$ belongs to the upper-level set
    defining $q_t(\balp_n^+)$, which gives
    \begin{equation*}
        \htau_t(\alpha\bvrho)
        \ge
        q_t(\balp_n^+).
    \end{equation*}

    This completes the induction. Since $\bvrho$ was arbitrary and
    the empirical-CDF event is common to all lower orthants, the
    conclusion holds simultaneously for every
    $\bvrho\in\Theta_{\bvrho}$ and $t\in[T]$.
\end{proof}

\subsection{Proof of Corollary~\ref{cor:nested-quantile-utility-comparison}}

\begin{proof}
    On the event in
    Proposition~\ref{prop:conservativequantile}, for every
    $\bvrho\in\Theta_{\bvrho}$ and $t\in[T]$,
    \begin{equation*}
        q_t\big(\balp_n^-(\bvrho)\big)
        \ge
        \htau_t\bigl(\bc(\bvrho;n);\cD^\ca\bigr)
        \ge
        q_t\big(\balp_n^+(\bvrho)\big).
    \end{equation*}
    Hence, by \eqref{eq:coin-threshold-representation}, for every
    $x\in\cX$ and $t\in[T]$,
    \begin{equation*}
        \cC_t^\coin\big(x;\balp_n^+(\bvrho)\big)
        \subseteq
        \hcC_t^\coin
        \big(x;\cD^\ca,\bc(\bvrho;n)\big)
        \subseteq
        \cC_t^\coin\big(x;\balp_n^-(\bvrho)\big).
    \end{equation*}
    The claimed inequalities now follow from
    \eqref{eq:monotone-utility}. Since the event above is uniform in
    $\bvrho$ and the set inclusions hold pointwise for every $x$, the
    conclusion holds simultaneously over
    $\bvrho\in\Theta_{\bvrho}$ and $x\in\cX$.
\end{proof}

\subsection{Proof of Lemma~\ref{lem:uniform-utility-classification}}\label{assec:proof-uniform-utility-class}

\begin{proof}[Proof of Lemma~\ref{lem:uniform-utility-classification}]
    Condition throughout on the independently trained score functions.
    Because $\cY=[M]$ is finite, each finite coordinate of $\btau$ can be
    approximated from above by a decreasing rational sequence.  For every
    $x$, all induced prediction sets then eventually agree with those at
    $\btau$; bounded convergence handles the population expectation.
    Thus, the supremum may equivalently be taken over the countable set
    $(\mathbb Q\cup\{-\infty,+\infty\})^T$, which also establishes its
    measurability.
    For $\btau\in\overline{\bbR}^T$, let
    \begin{equation*}
        g(x;\btau)
        =
        u\left(
            \big\{
                \cC_t(x;\btau)
            \big\}_{t=1}^T
        \right)
        -
        \frac{U_{\rm max}}{2}.
    \end{equation*}
    Centering the utility does not change its empirical-process deviation, and
    $
        |g(x;\btau)|
        \le
        U_{\rm max}/2
    $
    for every $x\in\cX$ and $\btau\in\overline{\bbR}^T$.

    Conditional on $\{X_1,\ldots,X_n\}$, for each $t\in[T]$,
    varying $\tau_t$ produces at most $nM+1$ distinct values of the array
    $\left(
        \bbI\{s_t(X_i,y)\le\tau_t\}
    \right)_{i\in[n],\,y\in[M]}$.
    Once these arrays are fixed for every $t\in[T]$, all the running
    intersections
    \begin{equation*}
        \cC_t(X_i;\btau)
        =
        \bigcap_{t'\le t}
        \big\{
            y:
            s_{t'}(X_i,y)\le\tau_{t'}
        \big\},
        \qquad
        i\in[n],\quad t\in[T],
    \end{equation*}
    are fixed. Consequently, as $\btau$ varies over
    $\overline{\bbR}^T$, the vectors
    \begin{equation*}
        \Big(
            g(X_1;\btau),
            \ldots,
            g(X_n;\btau)
        \Big)
    \end{equation*}
    take at most $(nM+1)^T$ distinct values.

    It follows that
    \begin{equation}
        \label{eq:expected-uniform-utility-classification}
        \begin{aligned}
        &\E\bigg[
            \sup_{\btau\in\overline{\bbR}^T}
            \bigg|
                \frac{1}{n}
                \sum_{i=1}^n
                u\left(
                    \big\{
                        \cC_t(X_i;\btau)
                    \big\}_{t=1}^T
                \right)
                -
                \E_X
                \left[
                    u\left(
                        \big\{
                            \cC_t(X;\btau)
                        \big\}_{t=1}^T
                    \right)
                \right]
            \bigg|
        \bigg]
        \\
        ={}&
        \E\bigg[
            \sup_{\btau\in\overline{\bbR}^T}
            \bigg|
                \frac{1}{n}
                \sum_{i=1}^n
                g(X_i;\btau)
                -
                \E_X\big[g(X;\btau)\big]
            \bigg|
        \bigg]
        \\
        \overset{(i)}{\le}{}&
        2
        \E\bigg[
            \sup_{\btau\in\overline{\bbR}^T}
            \bigg|
                \frac{1}{n}
                \sum_{i=1}^n
                \varsigma_i g(X_i;\btau)
            \bigg|
        \bigg]
        \\
        \overset{(ii)}{\le}{}&
        U_{\rm max}
        \sqrt{
            \frac{
                2\big\{
                    T\log(nM+1)+\log 2
                \big\}
            }{n}
        },
        \end{aligned}
    \end{equation}
    where $\varsigma_1,\ldots,\varsigma_n$ are independent Rademacher variables.
    Step~$(i)$ uses the symmetrization inequality (Proposition~4.11, \citealp{wainwright2019high}).
    Step~$(ii)$ follows from the standard finite-class bound for Rademacher averages; see the argument underlying Lemma~4.14 in \citet{wainwright2019high}.
    Indeed, adjoining the negatives of the sample traces produces at most $2(nM+1)^T$ vectors, each having Euclidean norm at most $U_{\rm max}\sqrt{n}/2$.

    The supremum in~\eqref{eq:expected-uniform-utility-classification} is a function of
    $X_1,\ldots,X_n$.
    Replacing one observation changes it by at most $U_{\rm max}/n$.
    The bounded-differences inequality (Theorem~6.2, \citealp{boucheron2013concentration}) therefore implies that, with probability at least $1-u$,
    \begin{align*}
        &\sup_{\btau\in\overline{\bbR}^T}
        \left|
            \frac{1}{n}
            \sum_{i=1}^n
            u\left(
                \big\{
                    \cC_t(X_i;\btau)
                \big\}_{t=1}^T
            \right)
            -
            \E_X
            \left[
                u\left(
                    \big\{
                        \cC_t(X;\btau)
                    \big\}_{t=1}^T
                \right)
            \right]
        \right|
        \\
        &\qquad\le
        U_{\rm max}
        \sqrt{
            \frac{
                2\big\{
                    T\log(nM+1)+\log 2
                \big\}
            }{n}
        }
        +
        U_{\rm max}
        \sqrt{
            \frac{\log(1/u)}{2n}
        }.
    \end{align*}
    Set $u=n^{-1}$.  The inequalities
    $nM+1\leq2nM$, $\log2\leq T\log(2nM)$, and
    $\log n\leq T\log(2nM)$ show that the preceding bound is at most
    the right-hand side of
    \eqref{eq:uniform-utility-classification}.
\end{proof}

\subsection{Proof of Lemma~\ref{lem:uniform-aggregated-set-size}}\label{assec:proof-uniform-aggregated-size}

\begin{proof}[Proof of Lemma~\ref{lem:uniform-aggregated-set-size}]
    Condition throughout on the independently trained score functions.
    Fix $y\in\cY$.  
    Let
    $g(x;y,\btau)=\prod_{t=1}^T\bbI\{s_t(x,y)\leq\tau_t\}$.
    Conditional on $\{X_1,\ldots,X_n\}$, varying $\tau_t$ produces at
    most $n+1$ distinct vectors
    \begin{equation*}
        \bigl(
            \bbI\{s_t(X_1,y)\leq\tau_t\},
            \ldots,
            \bbI\{s_t(X_n,y)\leq\tau_t\}
        \bigr).
    \end{equation*}
    Hence, as $\btau$ varies over $\overline{\bbR}^T$, the vectors
    \begin{equation*}
        \bigl(g(X_1;y,\btau),\ldots,g(X_n;y,\btau)\bigr)
    \end{equation*}
    take at most $(n+1)^T$ distinct values.  
    Therefore,
    \begin{equation}\label{eq:fixed-y-lower-orthant-bound}
        \begin{aligned}
            &\E\bigg[
                \sup_{\btau\in\overline{\bbR}^T}
                \bigg|
                    \frac{1}{n}\sum_{i=1}^{n}
                    \prod_{t=1}^T
                    \bbI\{s_t(X_i,y)\le\tau_t\}
                    -
                    \E_X\bigg[
                        \prod_{t=1}^T
                        \bbI\{s_t(X,y)\le\tau_t\}
                    \bigg]
                \bigg|
            \bigg]
            \\ = & 
            \E\bigg[
                \sup_{\btau\in\overline{\bbR}^T}
                \bigg|
                    \frac{1}{n}\sum_{i=1}^{n}
                    g(X_i;y, \btau)
                    -
                    \E_X\Big[
                        g(X;y, \btau)
                    \Big]
                \bigg|
            \bigg]
            \\ \overset{(i)}{\le} &
            2 \, 
            \E\bigg[
                \sup_{\btau\in\overline{\bbR}^T}
                \bigg|
                    \frac{1}{n}\sum_{i=1}^{n}
                    \varsigma_i g(X_i;y, \btau)
                \bigg|
            \bigg]
            \overset{(ii)}{\le}
            2\sqrt{
                \frac{
                    2\{T\log(n+1)+\log 2\}
                }{n}
            },
        \end{aligned}
    \end{equation}
    where $\varsigma_1,\ldots,\varsigma_n$ are independent Rademacher
    variables.  Step~$(i)$ is the symmetrization inequality
    (Proposition~4.11, \citealp{wainwright2019high}), and step~$(ii)$ is
    the finite-class bound for Rademacher averages (see the argument
    underlying Lemma~4.14 in \citealp{wainwright2019high}).

    We now integrate the fixed-$y$ bound.  
    In both regression and classification, set size can be written as
    \begin{equation}
        \label{eq:aggregated-set-size-integral}
        \big|\cC_T(x;\btau)\big|
        =
        \int_{\cY}
        \prod_{t=1}^T
        \bbI\big\{s_t(x,y)\le\tau_t\big\}
        \,\mathrm{d} \mu(y)
        = 
        \int_{\cY}
        g(x;y, \btau)
        \,\mathrm{d} \mu(y)
        ,
    \end{equation}
    where $\mu$ is Lebesgue measure in regression and counting measure in
    classification; in either case, $\mu(\cY)=M$.

    For completeness, the suprema below are measurable.  Indeed, under
    the standing joint-measurability assumptions, each finite coordinate
    of $\btau$ may be approximated from above by a decreasing sequence of
    rational numbers.  The corresponding indicators converge pointwise,
    and dominated convergence applies to both the $\mu$-integral and the
    expectation over $X$.  Thus, taking the supremum over
    $\overline{\bbR}^T$ is equivalent to taking it over the countable set
    $(\mathbb Q\cup\{-\infty,+\infty\})^T$.
    It follows that
    \begin{equation*}
        \begin{aligned}
            &\E\bigg[
                \sup_{\btau\in\overline{\bbR}^T}
                \bigg|
                    \frac{1}{n}\sum_{i=1}^{n}
                    \big|\cC_T(X_i;\btau)\big|
                    -
                    \E_X\big[\big|\cC_T(X;\btau)\big|\big]
                \bigg|
            \bigg]
            \\ = & 
            \E\bigg[
                \sup_{\btau\in\overline{\bbR}^T}
                \bigg|
                    \frac{1}{n}\sum_{i=1}^{n}
                    \int_{\cY}
                    g(X_i;y, \btau)
                    \,\mathrm{d} \mu(y)
                    -
                    \E_X\Big[\int_{\cY}
                    g(X;y, \btau)
                    \,\mathrm{d} \mu(y)\Big]
                \bigg|
            \bigg]
            \\ = & 
            \E\bigg[
                \sup_{\btau\in\overline{\bbR}^T}
                \bigg|
                    \int_{\cY}
                    \left\{
                        \frac{1}{n}\sum_{i=1}^{n}
                        g(X_i;y,\btau)
                        -
                        \E_X\big[g(X;y,\btau)\big]
                    \right\}
                    \,\mathrm{d}\mu(y)
                \bigg|
            \bigg].
        \end{aligned}
    \end{equation*}
    The first equality uses
    \eqref{eq:aggregated-set-size-integral}.  The second follows from
    linearity and Fubini's theorem, since the integrands are bounded and
    $\mu(\cY)<\infty$.  The triangle inequality and
    $\sup_{\btau}\int h_{\btau}\,\rmd\mu
      \leq\int\sup_{\btau}h_{\btau}\,\rmd\mu$
    then give the first inequality below; Tonelli's theorem interchanges
    the remaining expectation and integration:
    \begin{equation}
        \label{eq:expected-aggregated-set-size-bound}
        \begin{aligned}
            &\E\bigg[
                \sup_{\btau\in\overline{\bbR}^T}
                \bigg|
                    \frac{1}{n}\sum_{i=1}^{n}
                    \big|\cC_T(X_i;\btau)\big|
                    -
                    \E_X\big[\big|\cC_T(X;\btau)\big|\big]
                \bigg|
            \bigg]
            \\ \le & 
            \int_{\cY}
            \E\left[
                \sup_{\btau\in\overline{\bbR}^T}
                \left|
                    \frac{1}{n}\sum_{i=1}^{n}
                    g(X_i;y, \btau)
                    -
                    \E_X\Big[
                        g(X;y, \btau)
                    \Big]
                \right|
            \right]
            \,\mathrm{d} \mu (y)
            \\ \le & 
            2M\sqrt{
                \frac{
                    2\{T\log(n+1)+\log 2\}
                }{n}
            }.
        \end{aligned}
    \end{equation}
    The second inequality applies \eqref{eq:fixed-y-lower-orthant-bound} at each $y$ and uses that the measure of $\cY$ is $M$.

    The supremum in \eqref{eq:expected-aggregated-set-size-bound} is a function of $X_1,\ldots,X_{n}$.  
    Replacing one observation changes it by at most $M/n$.  
    The bounded-differences inequality (Theorem~6.2, \citealp{boucheron2013concentration}) thus implies that, with
    probability at least $1-u$,
    \begin{equation*}
        \sup_{\btau\in\overline{\bbR}^T}
        \left|
            \frac{1}{n}\sum_{i=1}^{n}
            \big|\cC_T(X_i;\btau)\big|
            -
            \E_X\big[\big|\cC_T(X;\btau)\big|\big]
        \right|
        \le
        2M\sqrt{
            \frac{
                2\{T\log(n+1)+\log 2\}
            }{n}
        }
        +
        M\sqrt{\frac{\log(1/u)}{2n}}.
    \end{equation*}
    Set $u=n^{-1}$.  Since $n>15$ and $T\ge1$,
    \begin{equation*}
        \log(n+1)
        \le
        \frac{5}{4}\log n,
        \qquad
        \log 2
        \le
        \frac{1}{4}T\log n.
    \end{equation*}
    The preceding display is therefore bounded by the right-hand side
    of \eqref{eq:uniform-aggregated-set-size}.

\end{proof}

\subsection{Auxiliary lemmas}

\begin{lemma}[Bound for symmetric difference of finite intersections]
    \label{lem:boundsymmetricdifference}
    Let $\{A_t\}_{t=1}^K$ and $\{B_t\}_{t=1}^K$ be sequences of
    subsets of $\Omega$.  Define
    $I_k\coloneqq\bigcap_{t=1}^k A_t$ and
    $J_k\coloneqq\bigcap_{t=1}^k B_t$, with
    $I_0=J_0=\Omega$.  Then, for every $k\in[K]$,
    \begin{equation*}
        I_k \triangle J_k
        \subseteq
        \big( (A_k \triangle B_k)\cap I_{k-1}\big)
        \cup
        \big( B_k \cap (I_{k-1}\triangle J_{k-1})\big).
    \end{equation*}
\end{lemma}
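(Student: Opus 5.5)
The plan is an elementwise argument on $\Omega$: take an arbitrary $\omega\in I_k\triangle J_k$ and show that it lies in one of the two sets on the right. Since $I_k=A_k\cap I_{k-1}$ and $J_k=B_k\cap J_{k-1}$, membership in $I_k$ or $J_k$ amounts to membership in the stage-$k$ set together with the previous intersection. I will split into the two halves of the symmetric difference.

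\emph{Case 1: $\omega\in I_k\setminus J_k$.} Then $\omega\in A_k$ and $\omega\in I_{k-1}$, and at least one of $\omega\notin B_k$ or $\omega\notin J_{k-1}$ holds.
\begin{itemize}
\item If $\omega\notin B_k$, then $\omega\in A_k\setminus B_k\subseteq A_k\triangle B_k$. Since also $\omega\in I_{k-1}$, the point lies in the first set $(A_k\triangle B_k)\cap I_{k-1}$.
\item Otherwise $\omega\in B_k$ and $\omega\notin J_{k-1}$. Combined with $\omega\in I_{k-1}$, this gives $\omega\in I_{k-1}\setminus J_{k-1}$, so $\omega$ lies in the second set $B_k\cap(I_{k-1}\triangle J_{k-1})$.
\end{itemize}

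\emph{Case 2: $\omega\in J_k\setminus I_k$.} Then $\omega\in B_k$ and $\omega\in J_{k-1}$, and at least one of $\omega\notin A_k$ or $\omega\notin I_{k-1}$ holds.
\begin{itemize}
\item If $\omega\notin I_{k-1}$, then $\omega\in J_{k-1}\setminus I_{k-1}$. Together with $\omega\in B_k$, the point lies in the second set.
\item If $\omega\in I_{k-1}$, then necessarily $\omega\notin A_k$, so $\omega\in B_k\setminus A_k\subseteq A_k\triangle B_k$. The point therefore lies in the first set.
\end{itemize}

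This exhausts all cases, so the inclusion holds for every $k\in[K]$. The convention $I_0=J_0=\Omega$ covers $k=1$, where the second set on the right is empty.

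There is no real obstacle here. The only point needing care is the asymmetry of the right-hand side: the first set intersects with $I_{k-1}$ and the second with $B_k$, rather than a symmetric form. I will therefore order the subcases as above, testing $B_k$ first in Case 1 and $I_{k-1}$ first in Case 2, so that each conclusion lands in exactly the asymmetric set the statement requires.
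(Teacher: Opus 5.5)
Your proof is correct and is essentially the paper's argument. The paper packages the step as the four-set inclusion $(A\cap B)\triangle(C\cap D)\subseteq\bigl((B\triangle D)\cap A\bigr)\cup\bigl(D\cap(A\triangle C)\bigr)$, checked by splitting on membership in $D$ and applied with $(A,B,C,D)=(I_{k-1},A_k,J_{k-1},B_k)$; this is the same elementwise case analysis as yours, with the subcases ordered slightly differently.
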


\begin{proof}
    The elementary set inclusion
    \[
        (A\cap B)\triangle(C\cap D)
        \subseteq
        \bigl((B\triangle D)\cap A\bigr)
        \cup
        \bigl(D\cap(A\triangle C)\bigr)
    \]
    follows by considering whether a point in the symmetric difference
    belongs to $D$.  Apply it with
    $(A,B,C,D)=(I_{k-1},A_k,J_{k-1},B_k)$.
\end{proof}

\begin{lemma}[A comparison bound for intersections]
    \label{lem:boundintersectionmeasure}
    Let $(\Omega,\mathcal{F},\mu)$ be a finite measure space, i.e., $\mu(\Omega)<\infty$, and let $A,B,C\in\mathcal{F}$ be measurable sets.
    Then
    \begin{equation}\label{eq:main-ineq}
        \bigl|\mu(A\cap B)-\mu(A\cap C)\bigr|
        \le 
        \bigl|\mu(B)-\mu(C)\bigr|
        +
        \mu \big(A^{c} \cap (B\triangle C)\big).
    \end{equation}
\end{lemma}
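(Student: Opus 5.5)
We prove \eqref{eq:main-ineq} by splitting $\mu(A\cap B)$ and $\mu(A\cap C)$ into the portions lying in $B\cap C$ and in the symmetric difference $B\triangle C$. Then we compare these portions with $\mu(B)-\mu(C)$, which splits in the same way. All measures are finite, so every subtraction below is legitimate.

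We write $B=(B\cap C)\cup(B\setminus C)$ and $C=(B\cap C)\cup(C\setminus B)$ as disjoint unions. Intersecting with $A$ and subtracting gives
\[
    \mu(A\cap B)-\mu(A\cap C)
    =
    \mu\bigl(A\cap(B\setminus C)\bigr)-\mu\bigl(A\cap(C\setminus B)\bigr).
\]
Similarly, $\mu(B)-\mu(C)=\mu(B\setminus C)-\mu(C\setminus B)$.

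Next, we write $\mu(A\cap(B\setminus C))=\mu(B\setminus C)-\mu(A^c\cap(B\setminus C))$, and likewise for $C\setminus B$. Substituting gives the identity
\[
    \mu(A\cap B)-\mu(A\cap C)
    =
    \bigl\{\mu(B)-\mu(C)\bigr\}
    -\mu\bigl(A^c\cap(B\setminus C)\bigr)
    +\mu\bigl(A^c\cap(C\setminus B)\bigr).
\]
By the triangle inequality, the absolute value of the left-hand side is at most
\[
    |\mu(B)-\mu(C)|
    +\mu\bigl(A^c\cap(B\setminus C)\bigr)
    +\mu\bigl(A^c\cap(C\setminus B)\bigr).
\]
The last two terms are measures of disjoint sets whose union is $A^c\cap(B\triangle C)$, so they sum to $\mu\bigl(A^c\cap(B\triangle C)\bigr)$. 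This yields \eqref{eq:main-ineq}.

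The only step needing care is keeping the bookkeeping exact in the complement identity, which uses $\mu(\Omega)<\infty$ so that each subtraction is well defined.
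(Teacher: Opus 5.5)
Your proof is correct and is essentially the paper's argument. Both rest on the identity $\mu(A\cap B)-\mu(A\cap C)=\{\mu(B)-\mu(C)\}-\{\mu(A^c\cap B)-\mu(A^c\cap C)\}$, and both bound the second difference by $\mu(A^c\cap(B\setminus C))+\mu(A^c\cap(C\setminus B))=\mu(A^{c}\cap(B\triangle C))$. The only difference is that the paper gets the identity in one line from additivity over the partition $\{A,A^c\}$, whereas you first cancel the common part $B\cap C$, which is just more explicit bookkeeping of the same step.
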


\begin{proof}
    Set
    \[
        a\coloneqq\mu(A\cap B)-\mu(A\cap C),
        \qquad
        b\coloneqq\mu(A^c\cap B)-\mu(A^c\cap C).
    \]
    Additivity over the partition $\{A,A^c\}$ gives
    $\mu(B)-\mu(C)=a+b$, and therefore
    \[
        |a|
        \leq |\mu(B)-\mu(C)|+|b|.
    \]
    Since
    \[
        |b|
        \leq
        \mu\bigl(A^c\cap(B\setminus C)\bigr)
        +
        \mu\bigl(A^c\cap(C\setminus B)\bigr)
        =
        \mu\bigl(A^c\cap(B\triangle C)\bigr),
    \]
    the claimed inequality follows.
\end{proof}

\end{document}